\documentclass[11pt,a4paper]{article} 

\usepackage[a4paper,hmargin=2.8cm,vmargin=3cm]{geometry}
\usepackage[utf8x]{inputenc} 
\usepackage{amsmath,amsthm,amsfonts,amssymb}
\usepackage{authblk}
\usepackage[bookmarksnumbered=true]{hyperref} 
\usepackage{dsfont} 
\usepackage{color} 
\usepackage{slashed}
\usepackage{graphicx}  
\usepackage{caption}
  \captionsetup[figure]{font=small}   
\usepackage{tikz-cd}
\usetikzlibrary{positioning}
 
\usepackage[capitalise]{cleveref}
\crefname{equation}{}{}
\Crefname{lem}{Lemma}{Lemmas}   
   
\newcommand\pgfmathsinandcos[3]{%
  \pgfmathsetmacro#1{sin(#3)}%
  \pgfmathsetmacro#2{cos(#3)}%
}
\newcommand\LongitudePlane[3][current plane]{%
  \pgfmathsinandcos\sinEl\cosEl{#2} 
  \pgfmathsinandcos\sint\cost{#3} 
  \tikzset{#1/.style={cm={\cost,\sint*\sinEl,0,\cosEl,(0,0)}}}
}

\newcommand\LatitudePlane[3][current plane]{%
  \pgfmathsinandcos\sinEl\cosEl{#2} 
  \pgfmathsinandcos\sint\cost{#3} 
  \pgfmathsetmacro\yshift{\RadiusSphere*\cosEl*\sint}
  \tikzset{#1/.style={cm={\cost,0,0,\cost*\sinEl,(0,\yshift)}}} %
}

\newcommand\DrawLongitudeArc[4][black]{
  \LongitudePlane{\angEl}{#2}
  \tikzset{current plane/.prefix style={scale=1}}
  \pgfmathsetmacro\angVis{atan(sin(#2)*cos(\angEl)/sin(\angEl))} %
  \pgfmathsetmacro\angA{mod(max(\angVis,#3),360)} %
  \pgfmathsetmacro\angB{mod(min(\angVis+180,#4),360} %
  \draw[current plane,#1]  (\angA:\RadiusSphere) arc (\angA:\angB:\RadiusSphere);
}%
\newcommand\DrawLatitudeCircle[2][1]{
  \LatitudePlane{\angEl}{#2}
  \tikzset{current plane/.prefix style={scale=1}}
  \pgfmathsetmacro\sinVis{sin(#2)/cos(#2)*sin(\angEl)/cos(\angEl)}
  \pgfmathsetmacro\angVis{asin(min(1,max(\sinVis,-1)))}
  \draw[current plane] (\angVis:\RadiusSphere) arc (\angVis:-\angVis-180:\RadiusSphere);
}

\newcommand\DrawLatitudeArc[4][black]{
  \LatitudePlane{\angEl}{#2}
  \tikzset{current plane/.prefix style={scale=1}}
  \pgfmathsetmacro\sinVis{sin(#2)/cos(#2)*sin(\angEl)/cos(\angEl)}
  \pgfmathsetmacro\angVis{asin(min(1,max(\sinVis,-1)))}
  \pgfmathsetmacro\angA{max(min(\angVis,#3),-\angVis-180)} %
  \pgfmathsetmacro\angB{min(\angVis,#4)} %

  \draw[current plane,#1] (\angA:\RadiusSphere) arc (\angA:\angB:\RadiusSphere);
}


 


\newtheorem{thm}{Theorem}[section]

\newtheorem{lem}[thm]{Lemma}

\theoremstyle{remark}
\newtheorem*{rem}{Remark}
\newtheorem*{rems}{Remarks}

\numberwithin{equation}{section}



\newcommand{\fock}{\mathcal{F}}		
\newcommand{\di}{{\textnormal{d}}}		
\newcommand{\rfrak}{\mathfrak{r}}

\newcommand{\Tbb}{\mathbb{T}}
\newcommand{\W}{W}
\newcommand{\Wt}{\widetilde{W}}
\newcommand{\D}{D}

\newcommand{\Ecal}{\mathcal{E}}
\newcommand{\Hbb}{\mathbb{H}}
\newcommand{\Fcal}{\mathcal{F}}
\newcommand{\Ncal}{\mathcal{N}}		
\newcommand{\Hcal}{\mathcal{H}}		
\newcommand{\Ical}{\mathcal{I}}
\newcommand{\Rcal}{\mathcal{R}}
\newcommand{\Ikp}{\Ical_{k}^{+}}
\newcommand{\Ikm}{\Ical_{k}^{-}}
\newcommand{\Ik}{\Ical_{k}}
\newcommand{\Il}{\Ical_{l}}

\newcommand{\ik}{I}

\newcommand{\Dfrak}{\mathfrak{D}}

\newcommand{\Ocal}{\mathcal{O}}		
\newcommand{\hc}{\textnormal{h.c.}}		

\newcommand{\Rbb}{\mathbb{R}}		
\newcommand{\Cbb}{\mathbb{C}}		
\newcommand{\Nbb}{\mathbb{N}}		
\newcommand{\Zbb}{\mathbb{Z}}

\newcommand{\Xbb}{\mathbb{X}}

\newcommand{\Dbb}{\mathbb{D}}

\renewcommand{\Re}{\operatorname{Re}} 	
\newcommand{\id}{\mathbb{I}}
\newcommand{\norm}[1]{\lVert#1\rVert}	

\newcommand{\tr}{\operatorname{tr}}
\newcommand{\HS}{_{\textnormal{HS}}}

\newcommand{\sgn}{\operatorname{sgn}}

\newcommand{\BFc}{B_\F^c}
\newcommand{\BF}{B_\F}

\newcommand{\Efrak}{\mathfrak{E}}

\newcommand{\Kfrak}{\mathfrak{K}}

\newcommand{\kappaf}{\kappa}

\newcommand{\north}{\Gamma^{\textnormal{nor}}}

\newcommand{\F}{\textnormal{F}} 
\newcommand{\B}{\textnormal{B}} 

\newcommand{\diag}{\operatorname{diag}}

\newcommand{\supp}{\operatorname{supp}}

\newcommand{\linb}{\textnormal{B}}

\title{Bosonization of Fermionic Many--Body Dynamics} 

\author[1]{Niels Benedikter}
\author[2]{Phan Th\`anh Nam}
\author[3]{Marcello Porta} 
\author[4]{Benjamin Schlein} 
\author[5]{Robert Seiringer} 
\affil[1]{Universit\`a degli Studi di Milano, Dipartimento di Matematica, Via Cesare Saldini 50, 20133 Milano, Italy}
\affil[2]{LMU Munich, Department of Mathematics, Theresienstra{\ss}e 39, 80333 M\"unchen, Germany} 
\affil[3]{SISSA, Mathematics Area, Via Bonomea 265, 34136 Trieste, Italy}
\affil[4]{Institute of Mathematics, University of Zurich, Winterthurerstrasse 190, 8057 Zurich, Switzerland}
\affil[5]{IST Austria, Am Campus 1, 3400 Klosterneuburg, Austria}

\begin{document} 

\maketitle
\abstract{We consider the quantum many--body evolution of a homogeneous Fermi gas in three dimensions in the coupled semiclassical and mean-field scaling regime. We study a class of initial data describing collective particle--hole pair excitations on the Fermi ball. Using a rigorous version of approximate bosonization, we prove that the many--body evolution 
can be approximated in Fock space norm by a quasifree bosonic evolution of the collective particle--hole excitations.}

\tableofcontents

\section{Introduction} 
The problem of computing quantum correlations in fermionic many--body systems has a long history in theoretical physics. A widely used nonperturbative method is the \emph{random phase approximation} (RPA), introduced by Bohm and Pines \cite{BP53}. Despite its popularity, the mathematical validity of this approach remained elusive until recently. One of the earliest applications of the RPA concerns the correlation energy of interacting fermionic systems at high density, defined as the difference between many--body and Hartree--Fock ground state energies. The RPA allows to derive a nonperturbative expression for the correlation energy, a prediction already contained in the foundational paper \cite{BP53}. In the mean--field regime, the validity of this expression at second order in the interaction potential has been proved in \cite{HPR20}. The first nonperturbative justification of the RPA for the correlation energy in the mean--field regime has been established in \cite{BNPSS, BNPSS2}. The result was extended in \cite{CHN21, BPSS21}.

The key concept of our approach is to interpret certain delocalized pairs of fermions as bosons with an effective quadratic Hamiltonian, making it possible to compute the ground state energy using a Bogoliubov transformation to diagonalize the effective Hamiltonian. In the present paper, we develop this approach further and derive a norm approximation for the fermionic many--body quantum dynamics in terms of an effective bosonic dynamics, generated by a quadratic Hamiltonian. In particular, our result identifies a class of almost stationary states that are associated with the excited eigenvalues of the many--body Hamiltonian of the fermionic system.

\subsection{Fermi Gas in the Mean--Field Scaling Regime} We consider a system of $N$ spinless fermionic particles on the torus $\Tbb^3 := \Rbb^3/(2\pi \Zbb^3)$. The dynamics is governed by the Schr\"odinger equation 
\begin{equation} \label{eq:intro-Sch}
i\hbar \partial_t \Psi_N(t) = H_N \Psi_N(t)
\end{equation}
where the Hamiltonian has the form
\begin{equation} \label{eq:HN}
H_N := \hbar^2 \sum_{i=1}^N \left( - \Delta_{x_i} \right) + \lambda \sum_{1 \leq i < j \leq N} V\left( x_i - x_j \right)\;,
\end{equation}
and the wave function $\Psi_N(t)$ belongs to the space of antisymmetric functions 
\begin{equation} \label{eq:intro-antisymm_space}
L^2_\textnormal{a}(\Tbb^{3N}) := \{ \psi \in L^2( (\Tbb^{3})^N) : \psi(x_{\sigma(1)}, \ldots, x_{\sigma(N)}) = \sgn(\sigma) \psi(x_1,\ldots,x_N)\ \forall \sigma \in \mathcal{S}_N \}\;.
\end{equation}
Here $\mathcal{S}_N$ is the group of permutations of $N$ symbols. 
We assume that the Fourier transform $\hat{V}: \Zbb^3 \to \Rbb$ of the interaction potential $V$ is non-negative and compactly supported. In this case, $H_N$ is bounded from below and its self--adjointness follows from the Kato--Rellich theorem. Consequently, by Stone's theorem, the solution of \eqref{eq:intro-Sch} for any initial wave function $\Psi_N(0) \in L^2_\textnormal{a}(\Tbb^{3N})$ is given by $\Psi_N(t) = e^{-it H_N/\hbar}\Psi_N(0)$. 

We are interested in the behavior of the system when $N\to \infty$ in the \emph{coupled semiclassical and mean--field scaling regime} 
\begin{align} \label{eq:choice-hbar-lambda}
\hbar \simeq N^{-\frac{1}{3}}\;, \quad \lambda := N^{-1}\;. 
\end{align}
(To be precise, in the next paragraph we will define $\hbar$ in terms of the Fermi momentum $k_\F$.)
In this case, for typical low--energy wave functions, the kinetic energy and the interaction energy are both of order $N$. This scaling regime was considered by \cite{NS81,Spo81,CLS21,CLL21} for the derivation of the Vlasov equation and by \cite{EESY04,BPS16,BPS14b,BPS14,PRSS17} for the derivation of the Hartree--Fock equation. Different scaling limits have been considered in \cite{BGGM03, BGGM04, FK11,PP16,BBPPT16}. Note that the convergences in these works are mostly concerned the one--body density matrices, which are in principle less precise than the norm approximation. 

\paragraph{Hartree--Fock approximation.} To leading order, physical properties of weakly interacting fermionic systems can often be approximated by Hartree--Fock theory; see, e.g., \cite{Bac92,GS94} for the ground state energy and  the papers just cited in the previous paragraph for the dynamics. In Hartree--Fock theory one restricts the Hilbert space of antisymmetric wave functions to its submanifold of \emph{Slater determinants} $\Psi_\textnormal{Sl} = \bigwedge_{j=1}^N \varphi_j$ with $\varphi_j \in L^2(\Tbb^3)$, i.\,e., antisymmetrized elementary tensors. Since Slater determinants are quasi--free states, the Wick theorem can then be used to obtain the Hartree--Fock energy functional
\[\begin{split}\Ecal_\textnormal{HF}(\gamma) & := \langle \Psi_\textnormal{Sl}, H_N \Psi_\textnormal{Sl}\rangle \\
& = \tr (-\hbar^2\Delta \gamma) + \frac{1}{2N}\int \di x \di y V(x-y) \left( \gamma(x;x) \gamma(y;y) - \lvert \gamma(x;y) \rvert^2 \right)\end{split}\]
depending only on the one--particle reduced density matrix
\[\gamma(x;y) := \int \di x_2 \cdots \di x_N \Psi_\textnormal{Sl}(x,x_2,\ldots,x_N) \overline{\Psi_\textnormal{Sl}(y,x_2,\ldots,x_N)}\;.\]

In general, the analysis of the Hartree--Fock variational problem (the minimization of $\Ecal_\textnormal{HF}$ over the set of one--particle wave function $\varphi_j$) is still not trivial. Therefore we will assume that the particle number $N$ is such that it fills completely the \emph{Fermi ball}
\begin{equation}
\BF := \{ k \in \Zbb^3: \lvert k\rvert \leq k_\F\}\;,
\end{equation}
i.\,e., we let the number of particles be $N := \lvert \BF \rvert$. This simplifies the Hartree--Fock problem for the translation invariant Hamiltonian \eqref{eq:HN}, in the coupled semiclassical and mean--field scaling regime: namely, the minimum of the Hartree--Fock functional is given by plane waves \cite[Appendix A]{BNPSS2} as in the non--interacting case:
\begin{equation}
\psi_\textnormal{pw} := \bigwedge_{k \in B_\F} e_k\;, \qquad e_k(x) := (2\pi)^{-\frac{3}{2}} e^{ik\cdot x} \textnormal{ with } k \in \Zbb^3,\ x \in \Tbb^3\;. \label{eq:plane-waves}
\end{equation}
 To realize the limit of large particle number we then take $k_\F \to +\infty$. According to Gauss' classic counting argument
\begin{equation} \label{eq:latticecounting}
 k_\F =  \kappaf N^{\frac{1}{3}} + \Ocal(1) \qquad \textnormal{where} \quad \kappaf := \left(\frac{3}{4\pi}\right)^{\frac{1}{3}}\;.
\end{equation} 
(The correction term is actually much smaller than $\Ocal(1)$ if we employ advanced number--theoretic results \cite{Hea99} on lattice point counting.)
We define
\begin{equation}
 \hbar := \frac{\kappaf}{k_\F}= N^{-\frac{1}{3}} + \Ocal(N^{-2/3}). 
\end{equation}
(In earlier papers \cite{BNPSS,BNPSS2}, we took $\hbar = N^{-1/3}$. The advantage of the present definition of the scaling is that we have exactly $\kappaf = \left(3/4\pi\right)^{\frac{1}{3}}$ instead of $\kappaf = \left(3/4\pi\right)^{\frac{1}{3}} + \Ocal(N^{-\frac{1}{3}})$, avoiding additional trivial error terms in the effective Hamiltonian.)

Note that the Slater determinant $\psi_\textnormal{pw}$ minimizes the kinetic energy, neglecting the many--body interaction. The only quantum correlations taken into account by this state are those induced by the antisymmetry requirement. To get corrections to Hartree--Fock theory, for example to the ground state energy, we have to go beyond the plane waves ansatz and include non-trivial quantum correlations.
A first step in this direction has been taken in \cite{HPR20}, where the  correction to the ground state energy has been computed to second order in the interaction. To all orders in the interaction, the dominant nonperturbative correction has been obtained in \cite{BNPSS,BNPSS2,BPSS21} via a rigorous collective bosonization method (and by a non--collective bosonization method in \cite{CHN21}). Similar methods have also been used recently in the context of dilute Fermi gases \cite{FGHP21}.
In the next subsection we recall the collective bosonization approach, on which also our new result is based. For this purpose we recall first the formalism of second quantization.  

\paragraph{Second quantization.}
It is convenient to work with creation and annihilation operators, even though we only consider systems with fixed particle number. On fermionic Fock space $\fock$, constructed over $L^2(\Tbb^3)$, we introduce the usual fermionic operators $a^*_p$ creating a particle with momentum $p \in \Zbb^3$, and correspondingly the annihilation operators $a_p$. They satisfy the  \emph{canonical  anticommutation relations} (CAR)
\begin{equation}\label{eq:car}
\{a_p,a^*_{q}\} = \delta_{p,q}\;, \quad \{a_p,a_q\} =0 = \{a^*_p,a^*_q\}\;, \qquad \forall p,q \in \Zbb^3\;. 
\end{equation}
We frequently use the operator norm bounds $\norm{ a^{*}_{p}} \leq 1$ and $\norm{ a_{p}} \leq 1$, which represent the fact that a fermionic mode can be occupied by at most one particle.
The fermionic number operator is $\Ncal = \sum_{p \in \Zbb^3} a^*_p a_p$, and the vacuum vector is denoted by $\Omega$.   
We extend the Hamiltonian to Fock space as
\begin{equation}\label{eq:HcalN}
\Hcal_N := \hbar^2 \sum_{p\in \mathbb{Z}^3}   \lvert p\rvert^2 a^*_p a_p  + \frac{1}{2N} \sum_{k,p,q\in \mathbb{Z}^3} \hat V(k) a^*_{p+k} a^*_{q-k} a_{q} a_{p}\;. 
\end{equation}
Restricted to $L^2_\textnormal{a}((\Tbb^{3})^N) \subset \fock$, $\Hcal_N$ agrees with the $N$--particle Hamiltonian $H_{N}$. In particular, the ground state energy can be written as
\begin{align}
E_{N} & := \textnormal{inf spec} (H_{N}) = \inf_{\psi \in \fock:\,\Ncal\psi = N\psi} \frac{\langle \psi, \mathcal{H}_{N} \psi\rangle}{\langle \psi, \psi \rangle}\;.
\end{align}

\subsection{Correlation Hamiltonian}
It is convenient to start the analysis by employing a particle--hole transformation, which allows us to describe all states in Fock space relative to the non--interacting Fermi ball by creating particles outside or holes inside the Fermi ball. We then have to keep track only of these excitations. The  \emph{particle--hole transformation} is defined as the map $R: \fock \to \fock$ satisfying (in terms of the plane waves $e_p$ introduced in \cref{eq:plane-waves}) 
\begin{equation}
R^* a^*_p R := \left\{ \begin{matrix}{} a^*_p& \textnormal{for } p \in \BFc\\a_p & \textnormal{for } p \in \BF \end{matrix}\right.\;, \qquad  R\Omega := \bigwedge_{p \in \BF} e_p\;.
\label{eq:phtrafo}
\end{equation}
This map is well--defined since the set of all vectors of the form $\prod_{j}a^*_{k_j} \Omega$ forms a basis of $\fock$. Moreover, it is easy to verify that  $R = R^* = R^{-1}$; in particular $R$ is a unitary transformation. (In fact, $R$ is a Bogoliubov transformation, i.\,e., it transforms creation operators into a linear combination of creation and annihilation operators such that the CAR are preserved.) Thus, the energy of the Fermi ball of non-interacting particles is
\begin{equation}\label{eq:pw}
E^{\textnormal{pw}}_{N} = \langle R\Omega, \mathcal{H}_{N} R\Omega\rangle\;.
\end{equation}
An important role in our analysis is played by the \emph{correlation Hamiltonian}
\begin{equation}\label{eq:Hcorr}
\Hcal_\textnormal{corr}:= R^* \Hcal_N R - E^{\textnormal{pw}}_{N} =  \Hbb_0 + Q_\B + \mathcal{E}_1 + \mathcal{E}_2 + \Xbb
\end{equation}
where the terms relevant for the statement of our main result are the kinetic energy and the bosonizable interaction terms,
\begin{align}
\mathbb{H}_{0} &= \sum_{p\in \mathbb{Z}^{3}} e(p) a^{*}_{p} a_{p}\;, \qquad e(p) = \lvert \hbar^{2} \lvert p\rvert^{2} - \kappaf^2\rvert\;,\nonumber\\
Q_\textnormal{B} &= \frac{1}{N} \sum_{k\in \north} \hat V(k) \Big[ b^{*}(k) b(k) + b^{*}(-k) b(-k) + b^{*}(k) b^{*}(-k) + b(-k) b(k) \Big]\;. \label{eq:Hcorrdet}
\end{align}
The summand $\Xbb$ is the exchange term in the effective Hamiltonian of Hartree--Fock theory, $\Ecal_1$ is a summand containing only non--bosonizable contributions, and $\Ecal_2$ couples non--bosonizable contributions to the bosonizable $b$-- and $b^*$--operators. We will show that these three summands are small error terms; since they are not necessary to state our result, we give the precise formulas only where needed, in \cref{sec:red-bos}.
The operator $b^{*}(k)$ is the particle--hole pair creation operator
\begin{equation} \label{eq:global_pair}
b^*(k) := \sum_{p \in \BFc \cap (\BF +k)}  a^*_p a^*_{p-k}\;.
\end{equation}
We have also introduced the set $\north $ of all momenta $k=(k_1,k_2,k_3) \in \Zbb^3 \cap  \supp \hat V$ satisfying  
$k_3> 0 \textnormal{ or } (k_3=0 \textnormal{ and } k_2>0) \textnormal{ or } (k_2=k_3=0 \textnormal{ and } k_1>0)$.
It is chosen such that
$\north  \cap (- \north ) =\emptyset, \quad \north  \cup (- \north ) =\Big( \mathbb{Z}^3 \cap \supp \hat V \Big) \setminus \{0\}$.

\paragraph{Patch decomposition of the Fermi surface.} The operators $b(k)$ and $b^{*}(k)$ annihilate and create, respectively, a pair of fermions delocalized in a shell around the Fermi surface. Because of the delocalization over many 
fermionic states, the Pauli principle is usually negligible for the collective modes generated by these operators. In fact, we can prove that, on states with few excitations of the Fermi ball, the operators $b(k)$ and $b^* (k)$ satisfy (up to appropriate normalization constants) approximately bosonic commutation relations, i.\,e.,
\begin{equation}
[ b(k), b(l) ] = 0\;,\qquad [ b(k), b^{*}(l) ] \simeq \delta_{k,l} \times (\textnormal{normalization}). 
\end{equation}
Notice that it is crucial to have a large number of summands in the definition  \cref{eq:global_pair}: we find $(b^*(k))^m = 0$ only when $m \in \Nbb$ is larger than the number of summands. (A different approach has been recently proposed in \cite{CHN21}, considering, instead of (1.16), operators of the form $a_p^* a_{p-k}^*$; in this case bosonic behavior is only recovered after averaging over $p$; see also Remark~\ref{rem:v} after \cref{thm:1}.)

Furthermore, $Q_\B$ is quadratic in terms of these operators, so that we may try to diagonalize it by a bosonic Bogoliubov transformation. Unfortunately the kinetic energy $\Hbb_0$ does not have an obvious quadratic representation in terms of the $b$-- and $b^*$--operators, providing an obstacle for the application of bosonic Bogoliubov theory. To overcome this problem and express also $\mathbb{H}_{0}$ quadratically in terms of almost--bosonic operators, we need to linearize the dispersion relation $e(p)$ near the Fermi surface (these steps will be explained further in \cref{eq:Hlinearization} and \cref{eq:Dbb}). First we observe that in  \cref{eq:Hcorrdet} we always have $k\in \supp \hat{V}$, and in $b^*(k)$ we have $p$ outside the Fermi ball but $p-k$ inside the Fermi ball; therefore only fermionic operators not further than a distance $\operatorname{diam}\supp \hat V$ from the Fermi surface appear. We then take this shell and decompose the $b$-- and $b^*$--operators into localized operators $b_\alpha$ and $b^*_\alpha$ covering the Fermi surface shell; we call the localization regions \emph{patches}. The important properties of the patch decomposition are that the patches should be separated by thin corridors so that there is no interaction between neighbours, and they should not degenerate into very elongated shapes (the number of points in their interior should be much larger than the number of points near the surrounding corridor). The precise form of the patch decomposition is not relevant as long as these properties are satisfied. As an example, it can be constructed by placing a disc at the north pole, then cutting along the lines of northern latitude, cutting the obtained rings, and finally reflecting by the origin to the southern half sphere. \Cref{fig:blub} illustrates such a patch decomposition of the northern half of the Fermi sphere; patches are then  reflected at the origin to the southern half. We refer to \cite[Section 4]{BNPSS2} for the details and recall only the main aspects in the following:

The number of patches $M$ is a parameter depending on the particle number $N$, and will eventually be optimized in the range 
\[
N^{2\delta} \ll M \ll N^{\frac{2}{3} -2\delta}, \quad 0<\delta< \frac{1}{6}\;,
\] 
where $\delta$ is another parameter independent of $N$ to be optimized at the end of the proof; its main role is to define the patch cut--off around the equator in \cref{eq:cut--off}.
(We need $M \gg 1$ to control the linearization error in \cref{lem:boskin}; the stricter lower bound $M \gg N^{2\delta}$ is required for validity of the counting argument \cref{eq:doku}, as illustrated in \cite[Fig.~2]{BPSS21}. The condition $M \ll N^{\frac{2}{3} -2\delta}$ ensure that the patches contain a large enough number of fermionic modes, required for suppressing the Pauli principle and justifying the neglect of error terms in the approximate CCR.)
The patches $\{ B_{\alpha} \}_{\alpha = 1}^{M}$ have thickness $R_V := \operatorname{diam}\supp \hat V$ in the radial direction, and their side lengths are of equal order. They are non--overlapping and separated by corridors of width strictly larger than $2R_V$. By $\hat \omega_{\alpha}$ we denote the normalized vectors pointing in the direction of the patch centers.

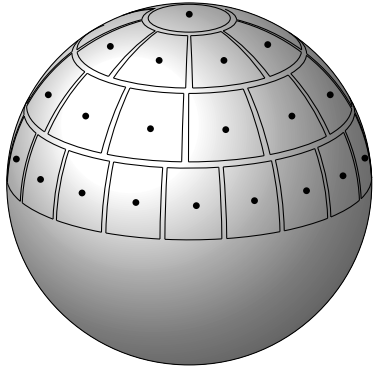
\begin{figure}\centering
\begin{tikzpicture}[scale=0.6]
\def\RadiusSphere{4} 
\def\angEl{20} 
\def\angAz{-20} 

\filldraw[ball color = white] (0,0) circle (\RadiusSphere);

\DrawLatitudeCircle[\RadiusSphere]{75+2}
\foreach \t in {0,-50,...,-250} {
  \DrawLatitudeArc{75}{(\t+50-4)*sin(62)}{\t*sin(62)}
 \DrawLongitudeArc{\t*sin(62)}{50+2}{75}
 \DrawLongitudeArc{(\t-4)*sin(62)}{50+2}{75}
  \DrawLatitudeArc{50+2}{(\t+50-4)*sin(62)}{\t*sin(62)}
 }
 \foreach \t in {0,-50,...,-300} {
   \DrawLatitudeArc{50}{(\t+50-4)*sin(37)}{\t*sin(37)}
 \DrawLongitudeArc{\t*sin(37)}{25+2}{50}
  \DrawLongitudeArc{(\t-4)*sin(37)}{25+2}{50}
   \DrawLatitudeArc{25+2}{(\t+50-4)*sin(37)}{\t*sin(37)}
 }
 \DrawLatitudeArc{50}{(-300-4)*sin(37)}{-330*sin(37)}
 \foreach \t in {0,-50,...,-450} {
    \DrawLatitudeArc{25}{(\t+50-4)*sin(23)}{\t*sin(23)}
 \DrawLongitudeArc{\t*sin(23)}{00+2}{25}
 \DrawLongitudeArc{(\t-4)*sin(23)}{00+2}{25}
 \DrawLatitudeArc{00+2}{(\t+50-4)*sin(23)}{\t*sin(23)}
 }
     \DrawLatitudeArc{25}{(-450-4)*sin(23)}{-500*sin(23)}

\fill[black] (0,3.75) circle (.075cm);

\fill[black] (1.72,3.08) circle (.075cm);
\fill[black] (.76,2.73) circle (.075cm);
\fill[black] (-.66,2.73) circle (.075cm);
\fill[black] (-1.73,3.04) circle (.075cm);

\fill[black] (2.25,1.5) circle (.075cm);
\fill[black] (.8,1.2) circle (.075cm);
\fill[black] (-.85,1.22) circle (.075cm);
\fill[black] (-2.27,1.5) circle (.075cm);
\fill[black] (-3.09,1.97) circle (.075cm);
\fill[black] (3.09,1.97) circle (.075cm);

\fill[black] (2.57,-.15) circle (.075cm);
\fill[black] (1.43,-.37) circle (.075cm);
\fill[black] (.155,-.48) circle (.075cm);
\fill[black] (-1.17,-.41) circle (.075cm);
\fill[black] (-2.35,-.2) circle (.075cm);
\fill[black] (-3.26,0.1) circle (.075cm);
\fill[black] (-3.79,.55) circle (.075cm);
\fill[black] (3.37,.18) circle (.075cm);
\fill[black] (3.85,.57) circle (.075cm);
\end{tikzpicture}
\caption{Patch decomposition of the northern half of the unit sphere.}\label{fig:blub}
\end{figure}

Given a vector $k\in \north$ we define the sets of indices\footnote{Unlike \cite{BNPSS}, where the condition $\hat{k}\cdot\hat{\omega}_\alpha \geq N^{-\delta}$ was used, here we use $k\cdot\hat{\omega}_\alpha \geq N^{-\delta}$. While in the present paper the difference is not important since $|k|$ is bounded, the latter choice is the natural one in \cite[Lemma~5.3]{BNPSS2}, where it means that $c^*_\alpha(k)$ can be bounded by the gapped number operator for $\alpha \in \Ik$ (for all $k$).} (with the same parameter $0 < \delta< 1/6$ we already mentioned in the previous paragraph)
\begin{equation}   \label{eq:cut--off} 
 \begin{split}
\mathcal{I}_{k} &:= \mathcal{I}_{k}^{+} \cup \mathcal{I}_{k}^{-}\;, \\
\mathcal{I}_{k}^{+} &:= \Big\{ \alpha \in \{1, 2, \ldots, M\} \mid k\cdot \hat \omega_{\alpha} \geq N^{-\delta} \Big\}\;, \\
\mathcal{I}_{k}^{-} &:= \Big\{ \alpha \in \{1, 2, \ldots, M\} \mid k\cdot \hat \omega_{\alpha} \leq -N^{-\delta} \Big\}\;.  
 \end{split}  
\end{equation}
That is, (for $N$ large enough) the set $\mathcal{I}_{k}$ only takes into account the labels of the patches which are away from the equator of the Fermi ball, defining as `north' the direction of $k$. Given a patch $B_{\alpha}$ with $\alpha \in \mathcal{I}^{+}_{k}$, we define the pair creation operators
\begin{equation}
b^*_{\alpha}(k) := \frac{1}{m_{\alpha}(k)} \sum_{\substack{ p: p\in B_{\textnormal{F}}^{\textnormal{c}} \cap B_{\alpha} \\ p - k \in B_{\textnormal{F}} \cap B_{\alpha}}} a^{*}_{p} a^{*}_{p-k}\;,\qquad m_{\alpha}(k) :=  \sum_{\substack{ p: p\in B_{\textnormal{F}}^{\textnormal{c}} \cap B_{\alpha} \\ p - k \in B_{\textnormal{F}} \cap B_{\alpha}}} 1\;.
\end{equation}
Also, for $\alpha \in \mathcal{I}_{k}$ we set
\begin{equation} \label{eq:c_alpha}
c^{*}_{\alpha}(k) := \Big\{  \begin{array}{cl} b^{*}_{\alpha}(k) & \quad \textnormal{for }\alpha \in \mathcal{I}^{+}_{k}\;, \\ b^{*}_{\alpha}(-k) & \quad \textnormal{for }\alpha \in \mathcal{I}_{k}^{-}\;. \end{array}
\end{equation}
Let $n_{\alpha}(k) := m_{\alpha}(k)$ for $\alpha \in \mathcal{I}^{+}_{k}$ and $n_{\alpha}(k) := m_{\alpha}(-k)$ for $k\in \mathcal{I}^{-}_{k}$. According to the counting argument from \cite[Proposition 3.1]{BNPSS}, under the assumption $M \gg N^{2\delta}$ we have
\begin{equation}\label{eq:doku}
n_{\alpha}(k)^{2} = \frac{4\pi k_{\textnormal{F}}^{2}}{M} |k\cdot \hat \omega_{\alpha}| \Big(1 + O( \sqrt{M} N^{-\frac{1}{3} + \delta} )\Big) \gg 1\;.
\end{equation}
Note that in \cite{BNPSS} it was assumed that $M \gg N^{1/3}$, which was used only for the linearization of the kinetic energy in expectation value. This condition has been relaxed in \cite{BNPSS2} and in the present paper by linearizing not the operator $\Hbb_0$ in expectation values but only its commutator with the $c^*_\alpha(k)$.

The operators $c_\alpha(k)$ are approximately bosonic\footnote{Unlike the exact bosonization in one--dimensional fermionic systems \cite{ML65} or in spin systems \cite{CG12,CGS15,Ben17,NS19}, the bosonization used here is an approximation, however with rigorous control on the error. Our definition of the pair operators has some similarity to the particle--number conserving operators creating an excitation of a Bose--Einstein condensate introduced in \cite{Gir62,KB62} and used in \cite{LS02,LS04,Sei11,BS19,BBCS18,BBCS19a,BBCS19b,BBCS20}.}   annihilation operators, namely they satisfy the approximate canonical commutation relations  
\[
[c_\alpha(k), c_\beta(\ell)]=0\;, \quad [c_\alpha(k), c_\beta^*(\ell)] \simeq \delta_{k,\ell} \delta_{\alpha,\beta}\;. 
\]
We refer to Lemma \ref{lem:approxCCR} for precise estimates. 

\paragraph{Effective approximately bosonic Hamiltonian.}
Neglecting the corridors between patches and the equatorial region where $k \cdot \hat{\omega}_\alpha \in [0,N^{-\delta})$, we obtain the approximate decomposition
\begin{align}
 b^*(k) \simeq \sum_{\alpha \in \Ikp} n_\alpha(k) b^*_\alpha(k)\;.
\end{align}
With this decomposition we can write $Q_\B$ as an expression that is quadratic in the $c$-- and $c^*$--operators.

We can now construct a quadratic bosonic approximation for $\Hbb_0$. In fact, the reason for decomposing into patches lies in the fact that the $c^*_\alpha(k)$ operators create approximate eigenmodes of the kinetic energy; namely, for $\alpha \in \Ikp$ (and likewise for $\alpha \in \Ikm$) we find
\begin{align}
 [\Hbb_0,c^*_\alpha(k)] & = \Big[ \sum_{i \in \Zbb^3} e(i) a^*_i a_i, \frac{1}{n_\alpha(k)} \sum_{\substack{p: p \in B_\F^c \cap B_\alpha\\p-k \in B_\F \cap B_\alpha}} a^*_p a^*_{p-k} \Big] \label{eq:Hlinearization}\\
 & = \frac{1}{n_\alpha(k)} \sum_{\substack{p: p \in B_\F^c \cap B_\alpha\\p-k \in B_\F \cap B_\alpha}} \left( e(p) + e(p-k) \right) a^*_p a^*_{p-k} \simeq 2\hbar \kappaf \lvert k\cdot \hat{\omega}_\alpha \rvert c^*_\alpha(k)
\end{align}
by linearizing around the point $\omega_\alpha$: $e(p) + e(p-k) = \hbar^2 (2p-k)\cdot k \simeq \hbar^2 (2 k_\F \hat{\omega}_\alpha)\cdot k$. This is the same commutator as would be obtained approximately for $\Hbb_0$ replaced by 
\begin{equation} \label{eq:Dbb}
\Dbb_{\textnormal{B}} := 2\hbar \kappaf \sum_{k\in \north} \sum_{\alpha \in \mathcal{I}_{k}} | k\cdot \hat \omega_{\alpha} | c^{*}_{\alpha}(k) c_{\alpha}(k)\;.
\end{equation}
This approximation will be rigorously justified when the operators act on states obtained from the vacuum by adding bosonic excitations, see \cref{lem:boskin} and \cref{prp:almbos}. Hence, at least on this class of states, we expect that the correlation Hamiltonian can be approximated by
\begin{equation}\label{eq:effbosH}
\Hcal_\textnormal{corr}= R^* \Hcal_N R - E^{\textnormal{pw}}_{N} \simeq \sum_{k\in \north} 2\hbar \kappaf \lvert k\rvert h_{\textnormal{eff}}(k)
\end{equation}
with
\begin{align}\label{eq:heff}
h_\textnormal{eff}(k)  := \!\sum_{\alpha,\beta \in \Ik} \Big[ \big( \D(k) + \W(k) \big)_{\alpha,\beta} c_\alpha^*(k) c_\beta(k) +  \frac{1}{2} \Wt(k)_{\alpha,\beta} \big( c^*_\alpha(k) c^*_\beta(k) + \hc \big)  \Big]
\end{align}
where $\D(k)$, $\W(k)$, and $\Wt(k)$ are  real symmetric matrices of size $\lvert\Ik\rvert \times \lvert\Ik\rvert$ with elements 
\begin{equation}
\label{eq:blocks1}\begin{split}
\D (k)_{\alpha,\beta} &: =  \delta_{\alpha,\beta} \lvert \hat{k} \cdot \hat{\omega}_\alpha\rvert\;, \qquad \hat{k} := \frac{k}{\lvert k\rvert}\;, \qquad \forall \alpha,\beta \in \Ik\;, \\
\W(k)_{\alpha,\beta}  & := \frac{\hat V(k)}{2\hbar \kappaf N \lvert k\rvert} \times \left\{ \begin{array}{cl} n_\alpha(k) n_\beta(k) & \text{ if } \alpha,\beta \in \Ikp \text{ or } \alpha,\beta \in \Ikm \\
 0 & \text{ otherwise}\,,\end{array} \right. \\
\Wt(k)_{\alpha,\beta} & := \frac{\hat V(k)}{2\hbar \kappaf N \lvert k\rvert} \times \left\{ \begin{array}{cl} 
0  & \text{ if } \alpha,\beta \in \Ikp \text{ or } \alpha,\beta \in \Ikm 
\\ n_\alpha(k) n_\beta(k)  & \text{otherwise}\,.
\end{array} \right.
\end{split}
\end{equation}
Note that for all $k \in \Zbb^3$ and all $\alpha, \beta \in \Ik$ we have\footnote{We use the letter $C$ generically for positive constants that may change from line to line.}
\begin{equation} \label{eq:DWWtbound}
 \lvert D(k)_{\alpha,\beta}\rvert \leq \delta_{\alpha,\beta}\;, \qquad \lvert \W(k)_{\alpha,\beta} \rvert \leq \frac{C}{M} \hat{V}(k)\;, \qquad \lvert \Wt(k)_{\alpha,\beta} \rvert \leq \frac{C}{M} \hat{V}(k)\;.
\end{equation}
These bounds immediately imply that the Hilbert--Schmidt norms satisfy
\begin{equation}\label{eq:DWWbounds}
\norm{D(k)}\HS \leq \sqrt{M}\;, \qquad \norm{\widetilde{W}(k)}\HS \leq C\;, \qquad \norm{W(k)}\HS \leq C\;.
\end{equation}
\subsection{Bogoliubov Transformation} 
If $c_\alpha^*(k)$ were exactly bosonic creation operators, then the quadratic Hamiltonian $h_\textnormal{eff}(k)$ could be diagonalized by a \emph{Bogoliubov transformation} \cite[Appendix A.1]{BNPSS}
\begin{equation} \label{eq:Bog-tran}
T_\textnormal{B}(k)  :=   \exp\Big(\frac{1}{2} \sum_{\alpha, \beta \in \mathcal{I}_{k}} K(k)_{\alpha, \beta} c^{*}_{\alpha}(k) c^{*}_{\beta}(k) - \hc\Big)
\end{equation}
where
\begin{equation} 
\label{eq:Kk}
K(k) := \log \lvert S_1 (k)^\intercal \rvert =\frac{1}{2} \log  \Big( S_1(k) S_1(k)^\intercal \Big)
\end{equation}
and
 \begin{align*}
 & S_{1}(k)  := (\D(k)+\W(k)-\Wt(k))^{\! 1/2} E(k)^{-1/2}\;,  \\
& E(k)\!:=\!\!\Big[\! \big(\D(k)+\W(k)-\Wt(k)\big)^{1/2}\! (\D(k)+\W(k)+\Wt(k)) \big(\D(k)+\W(k)-\Wt(k)\big)^{\! 1/2} \Big]^{\! 1/2}\!\!. 
\end{align*}
Then
\[
T^*_\B c_{\gamma}(l) T_\B = \sum_{\alpha\in \mathcal{I}_{l}} \cosh(   K(l) )_{\alpha, \gamma} c_{\alpha}(l) + \sum_{\alpha \in \mathcal{I}_{l}} \sinh (  K(l) )_{\alpha, \gamma} c^{*}_{\alpha}(l) \;.
\]
With this choice of $K(k)$, the ``off--diagonal'' terms in the Hamiltonian (of the form $c^* c^*$ and $c c$) are cancelled by conjugation with the unitary $T_\B$ (see the proof of \cite[Lemma~10.1]{BNPSS2}), so that
\begin{equation} \label{eq:exc-ope-def}
T_\textnormal{B}(k)^*  h_\textnormal{eff}(k) T_\textnormal{B}(k) \simeq \frac{1}{2} \tr (E(k) - \D(k)-\W(k)) + \sum_{\alpha, \beta \in \mathcal{I}_{k}} \mathfrak{K}(k)_{\alpha, \beta} c^{*}_{\alpha}(k) c_{\beta}(k) \;.
\end{equation}
The $\lvert \Ik\rvert\times \lvert \Ik\rvert$--matrix $\mathfrak{K}(k)$ is found to be
\begin{align}
\mathfrak{K}(k) &= \cosh(K(k)) (D(k) + W(k)) \cosh(K(k)) + \sinh(K(k)) (D(k) + W(k)) \sinh(K(k)) \nonumber\\
& \quad + \cosh(K(k)) \widetilde{W}(k) \sinh(K(k)) + \sinh(K(k)) \widetilde{W}(k) \cosh(K(k))\;. \label{eq:curlyKexplicit}
\end{align}
With the orthogonal matrix $O(k)$ defined by the polar decomposition $S_1(k) = O(k) \lvert S_1(k)\rvert$ and $S_2(k) := (S_1(k)^\intercal)^{-1}$ one finds $\cosh(K(k)) = \frac{1}{2}(S_1(k)+S_2(k))O(k)^\intercal$ and $\sinh(K(k)) = \frac{1}{2}(S_1(k)-S_2(k))O(k)^\intercal$. By direct computation this leads to
\begin{align}   \label{eq:curlyK}
\mathfrak{K}(k) &= O(k)^\intercal E(k) O(k) \;.
\end{align}
Thus $h_\textnormal{eff}(k)$ can be understood as the approximately bosonic second quantization of the operator $\mathfrak{K}(k)$ on the one--boson space $\ell^2(\Ik) \simeq \Cbb^{\lvert \Ik\rvert}$. If the effective Hamiltonians at different momenta $k$ were independent, we could simply sum over $k\in \north$ and find that the excitation spectrum consists of sums of eigenvalues of $2\hbar \kappaf \lvert k\rvert E(k)$; see \cite{Ben19} for a discussion of the spectrum. 

In our rigorous application, $c_\alpha^*(k)$ are only approximately bosonic creation operators; moreover, $c_\alpha(k)$ and $c_\alpha(\ell)^*$ do not commute exactly for $k \ne \ell$. Nevertheless, we can still define the unitary transformation 
\begin{equation}    \label{eq:B}
T:= e^{B}\;, \quad B: = \sum_{k\in \north}   \frac{1}{2} \sum_{\alpha, \beta \in \mathcal{I}_{k}} K(k)_{\alpha, \beta} c^{*}_{\alpha}(k) c^{*}_{\beta}(k) - \hc
\end{equation}
and show that it is approximately (see Lemma \ref{lem:Bog}) a bosonic Bogoliubov transformation,  
\begin{equation}
T^{*} c_{\gamma}(l) T  \simeq \sum_{\alpha\in \mathcal{I}_{l}} \cosh(   K(l) )_{\alpha, \gamma} c_{\alpha}(l) + \sum_{\alpha \in \mathcal{I}_{l}} \sinh (  K(l) )_{\alpha, \gamma} c^{*}_{\alpha}(l) \;.
\end{equation}
Consequently, up to error terms that are small on states with few excitations, %
\begin{equation} \label{eq:corr-diag}
T^{*} \Hcal_\textnormal{corr}  T  \simeq  \widetilde E_N^{\textnormal{RPA}} + \sum_{k\in \north} 2\hbar \kappaf \lvert k\rvert \sum_{\alpha, \beta \in \mathcal{I}_{k}} \mathfrak{K}(k)_{\alpha, \beta} c^{*}_{\alpha}(k) c_{\beta}(k)
\end{equation} 
where
\begin{align}  \label{eq:GSE-RPA}
\widetilde E_N^{\textnormal{RPA}}=  \sum_{k\in \north} \hbar \kappaf \lvert k\rvert \tr (E(k) - \D(k)-\W(k))\;. 
\end{align}
The trace can be written out as a sum over $\alpha \in \Ik$. If this sum is seen as a Riemann sum for a surface integral over the sphere, we get (see \cite[Eq.~(5.15)]{BNPSS} for the details) the bound 
\begin{align} \label{eq:RPAdiffs}
\left\lvert \widetilde{E}_{N}^\textnormal{RPA} - {E}_{N}^\textnormal{RPA}\right\rvert \leq C\hbar \left( N^{-\frac{\delta}{2}} + M^{-\frac{1}{4}} N^{\frac{\delta}{2}} + M^{\frac{1}{4}}N^{-\frac{1}{6} + \frac{\delta}{2}} \right)\;, 
\end{align}
where 
\begin{equation}\label{eq:RPA-exp}
{E}^\textnormal{RPA}_N := \hbar \kappaf \sum_{k\in \mathbb{Z}^{3}} |k| \Big( \frac{1}{\pi}\int_{0}^{\infty} \log \Big[ 1 + 2\pi \kappaf \hat V(k) ( 1- \lambda \arctan(\lambda^{-1})) \Big]\di\lambda - \frac{\pi}{2}\kappaf \hat V(k) \Big)\;.
\end{equation} 
 
The key approximation \eqref{eq:corr-diag} has been justified in \cite{BNPSS,BNPSS2,CHN21,BPSS21} for the \emph{expectation value} in a low energy state, at least when  $\hat V$ is non--negative, compactly supported with $\| \hat V \|_{\ell^{1}}$ small enough (but independent of $N$). Optimizing the choice of the parameters $M$ and $\delta$, we obtained the rigorous expansion of the ground state energy $E_N$ of the Hamiltonian $H_N$ in \eqref{eq:HN}:
\begin{equation} \label{eq:rpa_energy}
E_N = E^\textnormal{pw}_N + E_N^{\textnormal{RPA}} + \mathcal{O}(\hbar^{1+\frac{1}{16}}) \;.
\end{equation} 
In the present paper, we justify the approximation \eqref{eq:corr-diag} in \emph{norm} on a class of special states (see \cref{prp:nonbos} and \cref{lem:diag}). From that we obtain a norm approximation for the dynamics \eqref{eq:intro-Sch} for initial data describing pair excitations over an approximate ground state.

\subsection{Main Result: Norm Approximation}
We shall discuss the evolution of states that describe $m$ particle--hole excitations around the Fermi ball. Let $R$ be the particle--hole transformation in \eqref{eq:phtrafo} and $T$ the Bogoliubov transformation in \eqref{eq:Bog-tran}.  We consider the Schr\"odinger equation \eqref{eq:intro-Sch} with the initial state
\begin{equation} \label{eq:xi-def-ini}
\psi = R T \xi  \in L^2_a(\Rbb^{3N})\;,\qquad  \xi = \frac{1}{Z_{m}} c^{*}(\varphi_{1}) \cdots c^{*}(\varphi_{m}) \Omega\;,
\end{equation}
where
\begin{equation}
c^{*}(\varphi_{i}) = \sum_{k \in \north} \sum_{\alpha \in \mathcal{I}_{k}} c^{*}_{\alpha}(k) (\varphi_{i}(k))_{\alpha}
\end{equation}
with $c^{*}_{\alpha}(k)$ being defined in \eqref{eq:c_alpha}, and normalized one--boson wave functions 
\begin{equation}\label{eq:139}
\varphi_{1}, \ldots, \varphi_{m} \in \bigoplus_{k\in \north} \ell^2(\mathcal{I}_k) \;, \quad 
\norm{ \varphi_{i} }^2 := \sum_{k\in \north} \sum_{\alpha \in \mathcal{I}_{k}} \lvert (\varphi_{i}(k))_{\alpha}\rvert^2 = 1\;.
\end{equation}
Note that we do not require orthogonality of the functions $\varphi_i$: since they describe approximately bosonic excitations, they may even all occupy the same one--particle function $\varphi_1$.
The normalization constant $Z_{m}$ (estimated in \cref{prp:norm}) is chosen such that $\| \xi \| = 1$.  
We define the family of time--dependent states
\begin{equation} \label{eq:def-xi-t}
\xi_{t} := \frac{1}{Z_{m}} c^{*}(\varphi_{1;t}) \cdots c^{*}(\varphi_{m;t}) \Omega\;,\qquad t\in \Rbb\;,
\end{equation}
where, with $\mathfrak{K}(k)$ the excitation operator defined in \eqref{eq:corr-diag}, 
\begin{equation}\label{eq:evphi}
\varphi_{m;t} := e^{-i H_{\textnormal{B}} t/\hbar} \varphi_{m}\;,\qquad H_{\textnormal{B}} := \bigoplus_{k\in \north} 2 \hbar \kappaf |k| \mathfrak{K}(k)\;.
\end{equation}
The state $\xi_{t}$ can be viewed as an approximate $m$--particle bosonic state, where every $\varphi_{i}$ evolves according to the one--particle Hamiltonian $H_{\textnormal{B}}$. In general $\xi_t$ is not normalized, but its norm is close to $1$ uniformly in time; this is proven in \cref{lem:norm-xi-s}.

The next theorem is our main result. It provides a norm approximation for the $N$--body evolution of $\psi=R T \xi$ in terms of the explicit states $RT\xi_{t}$ when $N\to \infty$.
\begin{thm}[Norm approximation]\label{thm:1}
Assume that $\hat{V}: \Zbb^3 \to \Rbb$ is compactly supported, non--negative, and satisfies $\hat{V}(k)=\hat{V}(-k)$ for all $k\in \Zbb^3$. Let $k_\F >0$ sufficiently large, $N := \lvert\{ k\in \Zbb^3: \lvert k\rvert \leq k_\F \}\rvert$, and $\hbar := \kappaf k_\F^{-1}$ with $\kappaf = \left(3/4\pi\right)^{\frac{1}{3}}$. Take the number of patches $M:=N^{4\delta}$ with $\delta :=2/45$, the cut--off parameter used in \cref{eq:cut--off}. 

Let $\xi$ be as in \eqref{eq:xi-def-ini} and let $\xi_t$ be as in \eqref{eq:def-xi-t}. Then there exists a constant $C_{m,V} > 0$ depending only on $m$ and $V$ such that for any $t \in \Rbb$ we have
\begin{align}  \label{eq:main-1-weaker-bb}
\norm{ e^{-i\mathcal{H}_{N} t/\hbar} R T \xi - e^{-i (E^{\textnormal{pw}}_{N} + {E}_N^\textnormal{RPA})t/\hbar} R T \xi_{t} } \le  C_{m,V} \hbar^{\frac{1}{15} }  |t|\;. 
\end{align}
\end{thm}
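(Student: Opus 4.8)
The plan is to compare the true fermionic evolution $e^{-i\Hcal_N t/\hbar}RT\xi$ with the effective evolution $e^{-i(E^{\textnormal{pw}}_N+E^{\textnormal{RPA}}_N)t/\hbar}RT\xi_t$ by a Duhamel (Gr\"onwall) argument in Fock space norm. Conjugating by the unitary $R^*$ reduces the problem to the correlation Hamiltonian $\Hcorr = R^*\Hcal_N R - E^{\textnormal{pw}}_N$, and then conjugating by $T^*$ reduces it to studying $\Gcal := T^*\Hcorr T$ acting on the explicit states $\xi_t$. By \eqref{eq:corr-diag} (justified in norm on the relevant class of states by \cref{prp:nonbos} and \cref{lem:diag}) we have $\Gcal \simeq \widetilde E_N^{\textnormal{RPA}} + \Dbb'$, where $\Dbb' := \sum_{k\in\north} 2\hbar\kappaf|k|\sum_{\alpha,\beta\in\Ik}\mathfrak K(k)_{\alpha,\beta}c^*_\alpha(k)c_\beta(k)$ is the approximately-bosonic second quantization of $H_{\textnormal{B}}$. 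The first main step is therefore to show that the remainder $\Gcal - \widetilde E_N^{\textnormal{RPA}} - \Dbb'$, when applied to $\xi_t$ (or to states in its forward orbit under $e^{-i\Gcal s/\hbar}$), has norm $O(\hbar^{1+1/15}|t|)$ after integration; this requires, beyond the operator bounds already recorded, good control of the number of bosonic excitations in $\xi_t$, i.e. a bound of the form $\langle\xi_t,\Ncal\xi_t\rangle \lesssim_m 1$ uniformly in $t$, which should follow from \cref{lem:norm-xi-s} together with the fact that $H_{\textnormal{B}}$ is the direct sum of the positive operators $2\hbar\kappaf|k|\mathfrak K(k)$ and thus commutes with the boson number in each sector. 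One also replaces $\widetilde E_N^{\textnormal{RPA}}$ by $E_N^{\textnormal{RPA}}$ at the cost of \eqref{eq:RPAdiffs}, which with the choices $M=N^{4\delta}$, $\delta=2/45$ contributes at most $O(\hbar^{1+1/15}|t|)$ to the right-hand side.

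The second main step is to verify that $\xi_t$ as defined in \eqref{eq:def-xi-t} really does solve, up to small errors, the effective Schr\"odinger equation $i\hbar\partial_t\xi_t = \Dbb'\xi_t$. Since $\varphi_{i;t}=e^{-iH_{\textnormal{B}}t/\hbar}\varphi_i$, a formal computation gives $i\hbar\partial_t\xi_t = \frac{1}{Z_m}\sum_{j=1}^m c^*(\varphi_{1;t})\cdots c^*(H_{\textnormal{B}}\varphi_{j;t})\cdots c^*(\varphi_{m;t})\Omega$, and one must check that this equals $\Dbb'\xi_t$ modulo terms controlled by the approximate CCR of \cref{lem:approxCCR}: indeed $[\Dbb',c^*(\varphi)] = c^*(H_{\textnormal{B}}\varphi)$ would hold exactly if the $c^*_\alpha(k)$ were exact bosons, and the commutator $[c_\alpha(k),c^*_\beta(\ell)]-\delta_{k,\ell}\delta_{\alpha,\beta}$ is bounded (on low-excitation states) by the gapped number operator, giving an error per Duhamel step of the right order. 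Pulling the $c^*$-operators past $\Dbb'$ one at a time, moving the commutator errors onto $\Omega$ where they vanish or are small, yields $i\hbar\partial_t\xi_t = \Dbb'\xi_t + \text{error}$ with the error again of size $O(\hbar^{1+1/15})$ in norm (the extra $\hbar^{1/15}$ gain over $\hbar$ comes from the counting estimate \eqref{eq:doku} and the explicit scaling of $M$). Combining the two steps, Duhamel's formula
\[
e^{-i\mathcal H_N t/\hbar}RT\xi - e^{-i(E^{\textnormal{pw}}_N+E^{\textnormal{RPA}}_N)t/\hbar}RT\xi_t = -\frac{i}{\hbar}\int_0^t e^{-i\mathcal H_N (t-s)/\hbar}RT\,\big(\Gcal\xi_s - (E^{\textnormal{RPA}}_N\xi_s + i\hbar\partial_s\xi_s)\big)\,ds
\]
together with unitarity of $e^{-i\mathcal H_N(t-s)/\hbar}$, $R$, and $T$ bounds the left-hand side by $\hbar^{-1}|t|$ times the integrand norm, i.e. by $C_{m,V}\hbar^{1/15}|t|$.

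The crux of the argument — and where I expect the real work to lie — is the \emph{uniform-in-time} control of the excitation number and of the gapped number operator $\Ncal_\delta$ on the evolved states $e^{-i\Gcal s/\hbar}RT\xi$ and on $\xi_s$ themselves. All the error estimates in \cref{prp:nonbos}, \cref{lem:diag}, \cref{lem:approxCCR}, and \cref{lem:boskin} are stated relative to such number operators, so one needs a propagation estimate showing that $\langle\Psi_s,(\Ncal+1)^k\Psi_s\rangle$ stays bounded along the dynamics generated by $\Gcal$ (equivalently by $\Hcorr$, after conjugation); this is a Gr\"onwall argument in its own right, using that $\Hcorr$ changes the excitation number only through the small off-diagonal pieces $\Ecal_1,\Ecal_2,\Xbb$ and the non-quadratic remainder, each controlled by $\Hcal_N$ or by powers of $\Ncal$. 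A secondary technical point is that the Bogoliubov transformation $T$ only \emph{approximately} intertwines $c_\gamma(l)$ with its Bogoliubov rotation (\cref{lem:Bog}), so the reduction from $\Hcorr$ to $\Gcal$ and the identification \eqref{eq:corr-diag} each carry errors that must be tracked through the Duhamel integral; assembling all of these and checking that the worst one is $O(\hbar^{1+1/15})$ under the specific choice $M=N^{4\delta}$, $\delta=2/45$, is the bookkeeping heart of the proof. Finally, one notes the bound is genuinely linear in $|t|$ with no exponential blow-up because the effective dynamics $e^{-iH_{\textnormal{B}}t/\hbar}$ is norm-preserving on each one-boson sector, so no Gr\"onwall factor degrades the estimate in $t$.
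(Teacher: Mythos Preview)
Your Duhamel setup and the decomposition into (i) controlling $T^*\Hcorr T - \widetilde E_N^{\textnormal{RPA}} - \Dbb'$ on $\xi_s$, (ii) checking $i\hbar\partial_s\xi_s \simeq \Dbb'\xi_s$, and (iii) trading $\widetilde E_N^{\textnormal{RPA}}$ for $E_N^{\textnormal{RPA}}$ via \eqref{eq:RPAdiffs} are exactly the paper's route. The paper also inserts an intermediate step you only reference implicitly: one first reduces $\Hcorr$ to $\Hbb_0 + Q_\textnormal{B}^{\mathcal R}$ by \cref{prp:nonbos}, then replaces $\Hbb_0$ by the bosonized kinetic operator $\Dbb_\textnormal{B}$ using \cref{prp:kin2} and \cref{prp:almbos}, and only then applies \cref{lem:diag}. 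Make sure you do not skip the linearization $\Hbb_0\to\Dbb_\textnormal{B}$; it is a separate source of error, contributing the $M^{-1/2}$ term.

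Where you go wrong is the paragraph you label the ``crux''. No propagation estimate for $\Ncal$ (or $\Ncal_\delta$) along the true dynamics is needed, and attempting one would be both hard and pointless. Look again at your own Duhamel identity: after using the unitarity of $e^{-i\Hcal_N(t-s)/\hbar}$, $R$, and $T$, the integrand is a norm evaluated on the \emph{explicit} state $\xi_s$, never on $e^{-i\Gcal s/\hbar}\xi$ or on the fully evolved fermionic state. But $\xi_s = Z_m^{-1} c^*(\varphi_{1;s})\cdots c^*(\varphi_{m;s})\Omega$ lies in the $2m$-particle sector for every $s$, so $\Ncal\xi_s = 2m\,\xi_s$ exactly and $\norm{(\Ncal+1)^r\xi_s} = (2m+1)^r\norm{\xi_s}$ trivially, uniformly in $s$. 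Combined with \cref{lem:stabN} this also gives $\norm{(\Ncal+1)^r T\xi_s}\le C_r(2m+1)^r$. That is the entire content of the ``uniform-in-time control'' you are worried about; there is no Gr\"onwall loop, and the gapped operator $\Ncal_\delta$ never enters (the paper remarks this explicitly just before \cref{lem:bosbd}). The linear-in-$|t|$ bound then drops out directly from integrating a uniformly bounded integrand, not from any argument about $H_\textnormal{B}$ preserving boson number.
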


\begin{rems}
\ 
\begin{enumerate}
\item The vector $\psi = RT\xi$ is an $N$--particle state. This is easily verified using $R^*\Ncal R = N + \sum_{p \in B_\F^c} a^*_p a_p - \sum_{h \in B_\F} a^*_h a_h$, and the fact that $\sum_{p \in B_\F^c} a^*_p a_p - \sum_{h \in B_\F} a^*_h a_h$ commutes with all particle--hole pair creation operators $c^*(\varphi_i)$ and with $T$. 

\item We can allow initial data in which the number $m$ of pair excitations grows slowly with $N\to \infty$, as long as we assume $m^3(2m-1)!! \ll N^{\delta}$ (required by \cref{eq:ZB2} to control the normalization constant $Z_m$)\footnote{We use the notation $n!!$ for the semifactorial, i.\,e., $n!! = n(n-2)(n-4)\cdots4\cdot 2$ for even $n \in \Nbb$ and $n!! = n(n-2)(n-4)\cdots3\cdot 1$ for odd $n \in \Nbb$.}. In fact, we have $ C_{m,V} = C_V (m+1)^2\sqrt{(2m-1)!!}$, where $C_V$ depends only on $V$. 
For example, we can take $m \ll \log N / \log \log N$; then by Stirling's formula $C_{m,V}$ grows slower than $N^\epsilon$ for any  $\epsilon>0$.

\item We may also consider initial data constructed with a non--optimal choice of $M=M(N)$ and of $\delta$. In this case the rate of convergence will differ from the $\hbar^{\frac{1}{15}}$ given here, being given instead by 
\begin{align*}
&C  (m+1)^2\sqrt{(2m-1)!!} \times \\
&\times \Big[  \Big( N^{-\frac{\delta}{2}}  + M^{-\frac{1}{2}} + M^{\frac{3}{2}}N^{-\frac{1}{3} + \delta} + M^{\frac{1}{4}} N^{-\frac{1}{6}}  \Big)|t| + M^{-\frac{1}{4}} N^{\frac{\delta}{2}} + M^{\frac{1}{4}}N^{-\frac{1}{6} + \frac{\delta}{2}}   \Big] 
\end{align*}
which is the sum of the error estimates \cref{eq:endofproof} and \cref{eq:lasterr}.

\item The construction of initial data through the patch decomposition may seem quite special; however, the collective pair excitations provide an observable contribution to the excitation spectrum of the many--body system (see \cite{Ben19}).

This procedure has a further big advantage: it provides us with a highly non--trivial tool for the construction of approximate eigenstates (in the sense of being approximately stationary under the many--body evolution) by taking the $\varphi_m(k)$ as eigenvectors of the matrix $\Kfrak(k)$.

\item \label{rem:v} Recently, in \cite{CHN21}, the stationary properties of the same system have been investigated using a different method which does not rely on collective operators. It is unclear to us whether this new technique can be used to study the dynamics. In particular, the non--collective pair operators in \cite{CHN21} behave bosonic only in an ``averaged" sense (unlike our collective pairs), which makes it significantly harder to formulate an effective dynamics. 

\item Let us finally comment on our assumptions on the interaction potential. To apply techniques that have been introduced in \cite{BNPSS2}, we require $\hat{V}$ to have compact support (but, in contrast to  \cite{BNPSS2}, we do not assume the potential to be small). Recently, the results of \cite{BNPSS2} have been extended in \cite{BPSS21,CHN21} to a larger class of potentials, assuming only $\hat{V} (k) \geq 0$ and $\sum |k| \hat{V} (k) < \infty$. Following the ideas of \cite{BPSS21} it would certainly be possible to extend \cref{thm:1} to the same class of interactions. However, to keep the presentation as transparent as possible, we prefer to restrict our analysis here to potentials with compact support (moreover, extension to more general potential would lead to a deterioration of the error estimate \cref{eq:main-1-weaker-bb}).    
\end{enumerate}
\end{rems}
At first sight, our result looks similar to the norm approximations obtained for bosonic systems in, e.\,g.,  \cite{GM13,LNS15,BCS17,NN17,BNNS19}.
However, there is an important difference: for bosonic mean--field systems, 
an effective quadratic Hamiltonian arises by quasifree reduction (see, e.\,g., \cite{BSS18}); instead for our fermionic norm approximation, the formal effective Hamiltonian \cref{eq:effbosH} is quartic in fermionic operators. Only through bosonization can we approximate it as a quasifree, and thus solvable, effective theory.  
\paragraph{Organization of the paper.} 
The rest of the paper is devoted to the proof of \cref{thm:1}.
 In Section \ref{sec:pre} we recall estimates from \cite{BPSS21,BNPSS,BNPSS2}. In Section \ref{sec:refo}, we explain how Theorem \ref{thm:1} follows if we justify the approximation \eqref{eq:corr-diag} in Fock space norm. This is then undertaken in the following sections: In  Section \ref{sec:red-bos} we reduce the correlation Hamiltonian $\mathcal{H}_\textnormal{corr}$ to its bosonizable terms. In Section \ref{sec:lin} we prove that the fermionic kinetic operator $\mathbb{H}_{0}$ can be replaced by a  bosonized one. In Section \ref{sec:diag}, we diagonalize the resulting approximate bosonic Hamiltonian by a Bogoliubov transformation.
 In Section \ref{sec:final} we combine all estimates to prove \cref{thm:1}.

\section{Approximate Bosonization: Key Estimates}\label{sec:pre}

In this section we recall important estimates from \cite{BPSS21,BNPSS,BNPSS2}.
\begin{lem}[{\cite[Eq.~(4.10)]{BPSS21}}] There is a $C > 0$ (independent of $N$) such that for any $k\in \mathbb{Z}^3$ we have
\begin{equation} \label{eq:HPR20}
\sum_{p\in B_\textnormal{F}^c \cap (B_\textnormal{F}+k)} \frac{1}{e(p)+e(p-k)} \le C N\;. 
\end{equation}
\end{lem}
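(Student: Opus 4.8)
The plan is to turn the estimate into a lattice--point count by means of an exact identity for $e(p)+e(p-k)$ on the relevant index set, and then to exploit that the resulting denominators are positive integers. Write $L:=\BFc\cap(\BF+k)$. Since $\kappaf=\hbar k_\F$, one has $e(q)=\hbar^{2}\bigl|\,|q|^{2}-k_\F^{2}\,\bigr|$ for every $q$, and for $p\in L$ we have $|p|^{2}>k_\F^{2}$ while $|p-k|^{2}\le k_\F^{2}$; hence
\[
 e(p)+e(p-k)=\hbar^{2}\bigl(|p|^{2}-k_\F^{2}\bigr)+\hbar^{2}\bigl(k_\F^{2}-|p-k|^{2}\bigr)=\hbar^{2}\bigl(|p|^{2}-|p-k|^{2}\bigr)=\hbar^{2}\,\mu(p)
\]
with $\mu(p):=2p\cdot k-|k|^{2}$. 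Because $p,k\in\Zbb^{3}$, the quantity $\mu(p)$ is an integer, and it is $\ge1$ since $|p|^{2}>k_\F^{2}\ge|p-k|^{2}$; moreover $|p|\le|p-k|+|k|\le k_\F+|k|$ gives $\mu(p)=|p|^{2}-|p-k|^{2}\le 2k_\F|k|+|k|^{2}$. Grouping the sum by the value of $\mu(p)$ yields
\[
 \sum_{p\in L}\frac{1}{e(p)+e(p-k)}=\frac{1}{\hbar^{2}}\sum_{\mu\ge1}\frac{N_{k}(\mu)}{\mu}\,,\qquad N_{k}(\mu):=\#\bigl\{p\in L:\mu(p)=\mu\bigr\}\,,
\]
with the $\mu$-sum effectively restricted to $1\le\mu\le 2k_\F|k|+|k|^{2}$.

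Next I would estimate $N_{k}(\mu)$ geometrically. On the affine plane $P_{\mu}:=\{p:2p\cdot k-|k|^{2}=\mu\}$ we have $|p-k|^{2}=|p|^{2}-\mu$, so the constraints $|p|^{2}>k_\F^{2}$ and $|p-k|^{2}\le k_\F^{2}$ become $|p|^{2}\in(k_\F^{2},k_\F^{2}+\mu]$; writing $p=\tfrac{\mu+|k|^{2}}{2|k|^{2}}k+p_{\perp}$ with $p_{\perp}\perp k$, this describes the annulus $\{r_{\mathrm{in}}^{2}<|p_{\perp}|^{2}\le r_{\mathrm{out}}^{2}\}$ with $r_{\mathrm{out}}^{2}-r_{\mathrm{in}}^{2}=\mu$ and $r_{\mathrm{out}}\le k_\F$ (the latter being equivalent to $(\mu-|k|^{2})^{2}\ge0$). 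The plane $P_{\mu}$ carries a rank--two sublattice of $\Zbb^{3}$ of covolume $|k|/\gcd(k)\le|k|$, so applying the classical bound for the number of lattice points in a disc to the discs of radii $r_{\mathrm{in}}$ and $r_{\mathrm{out}}$ and subtracting,
\[
 N_{k}(\mu)\ \le\ \frac{\pi\mu}{|k|/\gcd(k)}+C\,k_\F^{2/3}\ \le\ \pi\mu+C\,k_\F^{2/3}\,,
\]
with $C$ depending only on $|k|$; here one uses that the error term in the Gauss circle problem is $O(R^{2/3})$, uniformly in the centre of the disc --- in fact any improvement over the trivial $O(R)$ error suffices below.

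Putting the two bounds together, and using $2k_\F|k|+|k|^{2}\le 3k_\F|k|$ for $k_\F$ large,
\[
 \sum_{p\in L}\frac{1}{e(p)+e(p-k)}\ \le\ \frac{1}{\hbar^{2}}\sum_{\mu=1}^{\lceil 3k_\F|k|\rceil}\Bigl(\pi+\frac{C\,k_\F^{2/3}}{\mu}\Bigr)\ \le\ \frac{1}{\hbar^{2}}\bigl(3\pi k_\F|k|+C\,k_\F^{2/3}\log k_\F\bigr)\ \le\ \frac{C'|k|}{\hbar^{2}}\,k_\F\,,
\]
valid for $k_\F$ large since $k_\F^{2/3}\log k_\F=o(k_\F)$. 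Finally $\hbar^{-2}=\kappaf^{-2}k_\F^{2}$, and $k_\F^{3}\le 2\kappaf^{3}N$ for $k_\F$ large by \cref{eq:latticecounting}; hence the right--hand side is at most $C'|k|\kappaf^{-2}k_\F^{3}\le 2C'\kappaf\,|k|\,N$, which is $\le CN$ with $C$ independent of $N$, since $\hat V$ has compact support so $|k|$ is bounded on $\supp\hat V$. The main obstacle is the lattice--point count for the thin annulus: the naive ``area plus perimeter'' estimate carries an error $O(k_\F)$, which upon summation against $1/\mu$ loses a logarithmic factor and would yield only the weaker bound $CN\log N$; one genuinely needs a sub--$O(\sqrt{\,\cdot\,})$ error term in the two--dimensional circle problem.
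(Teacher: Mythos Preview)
The paper does not give its own proof of this lemma; it is quoted verbatim from \cite[Eq.~(4.10)]{BPSS21}. Your argument is correct and self-contained. The reduction via the exact identity $e(p)+e(p-k)=\hbar^{2}(2p\cdot k-|k|^{2})$ on $L$, the integrality and positivity of $\mu(p)$, the slicing into planes $P_\mu$, and the annulus description with $r_{\mathrm{out}}^{2}-r_{\mathrm{in}}^{2}=\mu$ and $r_{\mathrm{out}}\le k_\F$ are all sound. Your diagnosis that the naive area-plus-perimeter bound loses a logarithm and yields only $CN\log N$ is also correct; a sublinear error in the two-dimensional lattice-point count is genuinely needed.

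Two minor points. First, in your last sentence ``sub-$O(\sqrt{\,\cdot\,})$'' should read ``sublinear'' (i.e.\ $o(R)$): any exponent strictly below $1$ suffices, and $R^{2/3}$ is already more than enough. Second, you invoke the $O(R^{2/3})$ error for a general rank-two sublattice of $\Zbb^{3}$ and uniformly in the centre; this deserves one line of justification (after a linear change taking $\Lambda_k$ to $\Zbb^{2}$ the disc becomes an ellipse of bounded eccentricity for each fixed $k$, and the van der Corput bound for smooth convex domains gives the $2/3$ exponent with a constant depending only on that eccentricity, hence on $k$). Your final constant therefore depends on $|k|$; the paper itself acknowledges exactly this $k$-dependence in the parenthetical remark following \cref{eq:tbd3} and restricts to $k\in\supp\hat V$, so your conclusion matches what is actually used.
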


In the following we summarize the properties of the operators $c_{\alpha}(k)$ and  $c_{\alpha}(k)^{*}$. The next lemma shows that, on states with few excitations, they behave as bosonic operators.
\begin{lem}[Approximate CCR, {\cite[Lemma~4.1]{BNPSS} and \cite[Eq.~(5.3)]{BNPSS2}}]\label{lem:approxCCR} Let $k, l \in \north$, $\alpha \in \mathcal{I}_{k}$, and $\beta \in \mathcal{I}_{l}$. Then
\begin{equation}\label{eq:approxCCR}
[ c_{\alpha}(k), c_{\beta}(l) ] = [ c^{*}_{\alpha}(k), c^{*}_{\beta}(l) ] = 0\;,\qquad [ c_{\alpha}(k), c^{*}_{\beta}(l) ] =: \delta_{\alpha, \beta} ( \delta_{k,l} + \mathcal{E}_{\alpha}(k,l) )\;.
\end{equation}
The operator $\mathcal{E}_{\alpha}(k,l)$ commutes with the fermionic number operator $\Ncal$ and satisfies
\begin{equation}\label{eq:ptwisebdE-11}
\sum_{\beta \in \mathcal{I}_{k} \cap \mathcal{I}_{l}} \mathcal{E}_{\beta}(k,l)^* \mathcal{E}_{\beta}(k,l) = \sum_{\beta \in \mathcal{I}_{k} \cap \mathcal{I}_{l}} |\mathcal{E}_{\beta}(k,l)|^{2} \leq C(M N^{-\frac{2}{3} + \delta} \Ncal)^{2} 
\end{equation}
and for all $\zeta \in \mathcal{F}$ also
\begin{equation}
\sum_{\alpha \in \mathcal{I}_{k}\cap \mathcal{I}_{l}} \| \mathcal{E}_{\alpha}(k,l)\zeta \| \leq CM^{\frac{3}{2}} N^{-\frac{2}{3} + \delta} \| \Ncal\zeta \|\;.
\end{equation}
Furthermore $\mathcal{E}_{\alpha}(k,l) = \mathcal{E}_{\alpha}(l,k)^{*}$ for all $\alpha \in \Ik$ and all $k,l\in \north$.
\end{lem}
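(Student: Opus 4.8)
\textbf{Proof plan for Lemma~\ref{lem:approxCCR}.}
The plan is to compute the commutators of the $c_\alpha(k)$, $c^*_\beta(l)$ directly from the definition \eqref{eq:c_alpha} of $c^*_\alpha(k)$ in terms of the patch pair operators $b^*_\alpha(k) = m_\alpha(k)^{-1}\sum_{p} a^*_p a^*_{p-k}$ and the CAR \eqref{eq:car}. The vanishing of $[c_\alpha(k),c_\beta(l)]$ and $[c^*_\alpha(k),c^*_\beta(l)]$ should be immediate, since each is a sum of products of two annihilation (resp.\ two creation) operators, and such operators always anticommute pairwise, hence commute when taken in pairs. For the mixed commutator, one expands $[a_p a_{p-k}, a^*_q a^*_{q-l}]$ using the CAR; the result is a sum of a $\delta$-type c-number term (coming from the fully contracted contributions, which force $\alpha=\beta$, $k=l$, since only then do the index sets of the two patches overlap in the required way) plus a remainder that is quadratic in fermionic operators of the form $\delta_{\alpha,\beta}\delta_{k,l}$ times (minus) occupation-number operators $\sum_p (a^*_p a_p + a^*_{p-k}a_{p-k})$ rescaled by $1/n_\alpha(k)^2$, together with cross terms when $k\neq l$ that are again diagonal in $\alpha=\beta$ because two distinct patches share no momenta. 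This identifies $\Ecal_\alpha(k,l)$ explicitly; the reason it is diagonal in the patch index is precisely that the patches $B_\alpha$ are non-overlapping, so a contraction between a fermion from patch $\alpha$ and one from patch $\beta\neq\alpha$ is impossible. That $\Ecal_\alpha(k,l)$ commutes with $\Ncal$ is clear since it is built from number-operator-like pieces, and the symmetry $\Ecal_\alpha(k,l) = \Ecal_\alpha(l,k)^*$ follows by taking adjoints in the defining relation.

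The quantitative bounds \eqref{eq:ptwisebdE-11} and the subsequent $\ell^1$-in-$\alpha$ bound are the substance of the lemma. The key input is the counting estimate \eqref{eq:doku}, $n_\alpha(k)^2 = \tfrac{4\pi k_\F^2}{M}|k\cdot\hat\omega_\alpha|(1+o(1))$, which for $\alpha\in\Ik$ (where $|k\cdot\hat\omega_\alpha|\geq N^{-\delta}$) gives $n_\alpha(k)^{-2}\leq C M N^{\delta} k_\F^{-2} = C M N^{-2/3+\delta}$. Since $\Ecal_\alpha(k,l)$ is, up to the prefactor $n_\alpha(k)^{-1}n_\alpha(l)^{-1}$ (or $n_\alpha(k)^{-2}$ when $k=l$), a sum over $p\in B_\alpha$ of operators bounded in norm by $1$, one estimates $\|\Ecal_\alpha(k,l)\zeta\|$ by $n_\alpha(k)^{-1}n_\alpha(l)^{-1}$ times a sum of $O(\#B_\alpha)$ terms, each controlled by $\|a^\#_p\zeta\|$; using $\sum_{p\in B_\alpha}\|a_p\zeta\|^2 \leq \|\Ncal^{1/2}\zeta\|^2$ and Cauchy--Schwarz in $p$ over the $\sim n_\alpha^2$ relevant momenta, one gets a bound of the form $n_\alpha(k)^{-2}\|\Ncal\zeta\|$ (the worst case $k=l$). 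Squaring and summing over the at most $M$ patches $\beta\in\Ik\cap\Il$ yields $\sum_\beta\|\Ecal_\beta(k,l)\|^2 \leq C M (M N^{-2/3+\delta})^2 \|\Ncal\zeta\|^2$... and then optimizing the Cauchy--Schwarz step correctly to land exactly on $(M N^{-2/3+\delta}\Ncal)^2$ for the operator inequality \eqref{eq:ptwisebdE-11}, and on $M^{3/2}N^{-2/3+\delta}$ after a further Cauchy--Schwarz over the $\leq M$ values of $\beta$ for the $\ell^1$ bound. In practice this is exactly the computation of \cite[Lemma~4.1]{BNPSS} and \cite[Eq.~(5.3)]{BNPSS2}, so the proof should mostly consist of quoting those references and checking that the patch parameters used here ($M = N^{4\delta}$, $\delta=2/45$) still satisfy the hypotheses under which those estimates were derived (in particular $M\gg N^{2\delta}$, ensuring $n_\alpha(k)^2\gg1$).

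The main obstacle, and the only genuinely delicate point, is bookkeeping the combinatorics of the CAR expansion so that the remainder is correctly identified as \emph{diagonal} in both the patch label and the momentum (the off-diagonal-in-$k$ contributions must be shown to vanish or to be absorbed into $\Ecal_\alpha$ with the stated structure), and then threading the counting estimate \eqref{eq:doku} through with the right powers of $M$ and $N$; a careless application of Cauchy--Schwarz loses a factor of $\sqrt M$ and gives $M^2$ instead of $M^{3/2}$ in the $\ell^1$ bound, which would be fatal later. Since \eqref{eq:doku} requires $M\gg N^{2\delta}$ and the final theorem sets $M = N^{4\delta}$, this is comfortably satisfied, so no new ideas are needed beyond what is in \cite{BNPSS,BNPSS2}; the lemma is recalled here essentially verbatim.
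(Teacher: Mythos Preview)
Your proposal is correct and, in fact, more detailed than what the paper does: the paper gives no proof of this lemma at all, simply citing \cite[Lemma~4.1]{BNPSS} and \cite[Eq.~(5.3)]{BNPSS2} in the statement and moving on. Your sketch accurately reconstructs the argument from those references---the CAR expansion, the diagonality in the patch index from disjointness of the $B_\alpha$, and the use of the counting bound \eqref{eq:doku} to extract the factor $M N^{-2/3+\delta}$---and you correctly flag the one place where a sloppy Cauchy--Schwarz would cost an extra $\sqrt{M}$.

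One minor wording issue: in your final paragraph you speak of the remainder being ``diagonal in both the patch label and the momentum,'' but $\Ecal_\alpha(k,l)$ is \emph{not} diagonal in $k,l$; for $k\neq l$ the commutator is $\delta_{\alpha,\beta}\Ecal_\alpha(k,l)$, which is generically nonzero (and this is exactly why the $k\neq l$ error terms matter later, e.g.\ in \eqref{eq:e-lin-d} and in \cref{prp:NkE}). You handle this correctly earlier in the sketch, so this is only a slip of phrasing, not a gap.
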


 In the proof of the lower bound \cite{BNPSS2} an important role was played by the gapped number operator $
\Ncal_\delta := \sum_{i \in \Zbb^3\colon e(i) \geq \frac{1}{4} N^{-\frac{1}{3} -\delta}} a^*_i a_i$: since modes of very low energy are excluded, this operator can be more efficiently bounded by the kinetic energy $\Hbb_0$.
In the present paper we consider only explicitly constructed states for which we have strong control on $\Ncal$, so that the use of $\Ncal_\delta$ is not necessary. In the following lemmas we replaced all appearances of $\Ncal_\delta$ using $\Ncal_\delta \leq \Ncal$.

The patch operators $c_{\alpha}(k)$, $c_{\alpha}(k)^{*}$ satisfy similar bounds as true bosonic operators:
\begin{lem}[Pair operators bounds, {\cite[Lemma~5.3]{BNPSS2}}]\label{lem:bosbd}
For all $k\in \north$ we have
\begin{equation}\label{eq:c*c}
\sum_{\alpha \in \mathcal{I}_{k}} c^{*}_{\alpha}(k) c_{\alpha}(k) \leq \Ncal\;.
\end{equation}
Consequently, for all $\zeta \in \mathcal{F}$, we have
\begin{equation}\label{eq:sumc}
\sum_{\alpha \in \mathcal{I}_{k}} \| c_{\alpha}(k) \zeta \| \leq M^{\frac{1}{2}} \| \Ncal^{\frac{1}{2}} \zeta \|\;,\qquad \sum_{\alpha \in \mathcal{I}_{k}} \| c^{*}_{\alpha}(k) \zeta \| \leq M^{\frac{1}{2}} \| (\Ncal + M)^{\frac{1}{2}}\zeta \|\;.
\end{equation}
Moreover, for $f\in \ell^{2}(\mathcal{I}_{k})$,
\begin{equation} \label{eq:c*-f}
\Big\| \sum_{\alpha \in \mathcal{I}_{k}} f_{\alpha} c^{*}_{\alpha}(k) \zeta \Big\| \leq \| f \|_{\ell^{2}} \| (\Ncal + 1)^{\frac{1}{2}} \zeta \|\;.
\end{equation}
Since $c^*_\alpha(k)$ is (a sum over) a product of two fermionic creation operators we have
\begin{equation}\label{eq:Nccomm}[\Ncal,c^*_\alpha(k)] = 2 c^*_\alpha(k) \qquad \text{and} \qquad g(\Ncal) c^*_\alpha(k) = c^*_\alpha(k) g(\Ncal+2)
 \end{equation}
 for any measurable function $g: \Rbb \to \Rbb$.
\end{lem}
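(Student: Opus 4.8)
\emph{Overview.} There is no genuine obstacle in this lemma: \eqref{eq:c*c} is the only statement requiring a real argument, and \eqref{eq:sumc}, \eqref{eq:c*-f}, \eqref{eq:Nccomm} follow from it by Cauchy--Schwarz together with the already-known structure of the approximate CCR. The plan is to prove the four displays in the order in which they are stated, working on the dense domain of vectors with finitely many particles (all operator inequalities then extend by a standard limiting argument). Recall from \eqref{eq:c_alpha} and the definition of $b^*_\alpha$ preceding it that, for $\alpha\in\Ikp$, $c_\alpha(k)=n_\alpha(k)^{-1}\sum_{p\in\mathcal{P}_\alpha(k)}a_{p-k}a_p$, where $\mathcal{P}_\alpha(k):=\{p:\ p\in\BFc\cap B_\alpha,\ p-k\in\BF\cap B_\alpha\}$ and the normalization is chosen so that $n_\alpha(k)^2$ equals the cardinality $|\mathcal{P}_\alpha(k)|$ (this is precisely what makes $c^*_\alpha(k)\Omega$ a unit vector and the commutator $[c_\alpha(k),c^*_\alpha(k)]$ close to $1$); for $\alpha\in\Ikm$ the same holds with $k$ replaced by $-k$. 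In all cases the momentum $p$ runs only over $\BFc\cap B_\alpha$, and the patches $\{B_\alpha\}_{\alpha=1}^M$ are pairwise disjoint.

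\emph{Step 1: the basic bound \eqref{eq:c*c}.} Fix $k\in\north$ and $\zeta\in\Fcal$. Since $\sum_{\alpha}c^*_\alpha(k)c_\alpha(k)$ is a sum of positive operators, it suffices to bound its quadratic form $\sum_{\alpha\in\Ik}\|c_\alpha(k)\zeta\|^2$. Applying Cauchy--Schwarz to the $n_\alpha(k)^2=|\mathcal{P}_\alpha(k)|$ summands in $c_\alpha(k)\zeta$, and then using $\|a_{p-k}\|\le1$,
\begin{equation}
\|c_\alpha(k)\zeta\|^2\ \le\ \frac{|\mathcal{P}_\alpha(k)|}{n_\alpha(k)^2}\sum_{p\in\mathcal{P}_\alpha(k)}\|a_{p-k}a_p\zeta\|^2\ =\ \sum_{p\in\mathcal{P}_\alpha(k)}\|a_{p-k}a_p\zeta\|^2\ \le\ \sum_{p\in\mathcal{P}_\alpha(k)}\|a_p\zeta\|^2\,.
\end{equation}
Since $\mathcal{P}_\alpha(k)\subseteq\BFc\cap B_\alpha$ and the patches are disjoint, summing over $\alpha\in\Ik$ counts each momentum $p$ at most once, so $\sum_{\alpha\in\Ik}\|c_\alpha(k)\zeta\|^2\le\sum_{p\in\Zbb^3}\|a_p\zeta\|^2=\langle\zeta,\Ncal\zeta\rangle$, which is \eqref{eq:c*c}. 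One more Cauchy--Schwarz over the $|\Ik|\le M$ indices then gives $\sum_{\alpha}\|c_\alpha(k)\zeta\|\le M^{1/2}\big(\sum_\alpha\|c_\alpha(k)\zeta\|^2\big)^{1/2}\le M^{1/2}\|\Ncal^{1/2}\zeta\|$, the first half of \eqref{eq:sumc}.

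\emph{Step 2: the creation-operator bounds.} The additional input is the explicit, and in particular non-positive, form of the diagonal remainder in \cref{lem:approxCCR}: normal ordering yields $c_\alpha(k)c^*_\alpha(k)=c^*_\alpha(k)c_\alpha(k)+1+\mathcal{E}_\alpha(k,k)$ with $\mathcal{E}_\alpha(k,k)=-n_\alpha(k)^{-2}\sum_{p\in\mathcal{P}_\alpha(k)}\big(a^*_pa_p+a^*_{p-k}a_{p-k}\big)\le0$. Summing over $\alpha$ and using \eqref{eq:c*c} and $|\Ik|\le M$ gives the operator inequality $\sum_{\alpha\in\Ik}c_\alpha(k)c^*_\alpha(k)\le\Ncal+M$; evaluating in $\zeta$ and applying Cauchy--Schwarz over $\alpha$ yields the second half of \eqref{eq:sumc}. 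For \eqref{eq:c*-f} one expands $\|\sum_\alpha f_\alpha c^*_\alpha(k)\zeta\|^2=\sum_{\alpha,\beta}\overline{f_\alpha}f_\beta\langle\zeta,c_\alpha(k)c^*_\beta(k)\zeta\rangle$ and uses $[c_\alpha(k),c^*_\beta(k)]=\delta_{\alpha\beta}(1+\mathcal{E}_\alpha(k,k))$ to rewrite it as $\|\sum_\alpha\overline{f_\alpha}c_\alpha(k)\zeta\|^2+\sum_\alpha|f_\alpha|^2\langle\zeta,(1+\mathcal{E}_\alpha(k,k))\zeta\rangle$; dropping the non-positive $\mathcal{E}_\alpha(k,k)$-terms, estimating the first term by $\|f\|_{\ell^2}^2\,\|\Ncal^{1/2}\zeta\|^2$ via \eqref{eq:c*c} and Cauchy--Schwarz, and the second by $\|f\|_{\ell^2}^2\|\zeta\|^2$, one obtains $\|\sum_\alpha f_\alpha c^*_\alpha(k)\zeta\|^2\le\|f\|_{\ell^2}^2\,\|(\Ncal+1)^{1/2}\zeta\|^2$.

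\emph{Step 3 and the delicate point.} Display \eqref{eq:Nccomm} is immediate: from $[\Ncal,a^*_q]=a^*_q$ for every $q$ and the Leibniz rule one gets $[\Ncal,a^*_pa^*_{p-k}]=2a^*_pa^*_{p-k}$, and summing over the defining range of $c^*_\alpha(k)$ gives $[\Ncal,c^*_\alpha(k)]=2c^*_\alpha(k)$, i.e.\ $\Ncal\,c^*_\alpha(k)=c^*_\alpha(k)(\Ncal+2)$; iterating, $\Ncal^n c^*_\alpha(k)=c^*_\alpha(k)(\Ncal+2)^n$, and since $\Ncal$ has discrete spectrum $\Nbb$ with $c^*_\alpha(k)$ mapping the eigenspace $\{\Ncal=n\}$ into $\{\Ncal=n+2\}$, this extends to $g(\Ncal)c^*_\alpha(k)=c^*_\alpha(k)g(\Ncal+2)$ for arbitrary measurable $g$. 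The only point requiring a moment's care in the whole proof is that the sharp shifts in \eqref{eq:sumc}--\eqref{eq:c*-f} (namely $\Ncal+M$ and $\Ncal+1$, not anything larger) rest on the genuine non-positivity of the diagonal remainder $\mathcal{E}_\alpha(k,k)$, visible from its explicit form above; merely using that it is \emph{small} (as in the quantitative part of \cref{lem:approxCCR}) would cost spurious powers of $M$ and would not give estimates valid for all $\zeta\in\Fcal$ without a smallness hypothesis.
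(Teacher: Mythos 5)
Your proof is correct and follows essentially the same route as the cited source \cite[Lemma~5.3]{BNPSS2}: Cauchy--Schwarz over the $n_\alpha(k)^2$ pair summands combined with the disjointness of the patches yields \eqref{eq:c*c}, and the creation--operator bounds \eqref{eq:sumc}--\eqref{eq:c*-f} follow by normal ordering together with the non-positivity of the diagonal remainder $\mathcal{E}_\alpha(k,k)$. You also correctly identified the intended normalization $n_\alpha(k)^2=|\mathcal{P}_\alpha(k)|$ (consistent with \eqref{eq:doku} and with $[c_\alpha(k),c^*_\alpha(k)]=1+\mathcal{E}_\alpha(k,k)$, $\mathcal{E}_\alpha(k,k)\le 0$), which is exactly what makes the sharp shifts $\Ncal+M$ and $\Ncal+1$ come out.
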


 The next lemma allows us to understand the action of the kinetic energy operator $\mathbb{H}_{0}$ in terms of an approximately bosonized operator $\mathbb{D}_{\textnormal{B}}$ defined by \cref{eq:Dbb}. The patch decomposition is necessary for the linearization of the dispersion relation that justifies the identity $[ \mathbb{H}_{0}, c^{*}_{\alpha}(k) ] \simeq 2\hbar \kappaf | k\cdot \hat \omega_{\alpha} | c^{*}_{\alpha}(k)$.
\begin{lem}[Bosonization of the kinetic energy]\label{lem:boskin}
For all $k\in \north$ and $\alpha \in \mathcal{I}_{k}$ we have
\begin{equation}\label{eq:linearization}\begin{split}
[ \mathbb{H}_{0}, c^{*}_{\alpha}(k) ] &= 2\hbar \kappaf | k\cdot \hat \omega_{\alpha} | c^{*}_{\alpha}(k) + \hbar \mathfrak{E}^{\textnormal{lin}}_{\alpha}(k)^{*}\;,\\
\protect{[} \mathbb{D}_{\textnormal{B}}, c^{*}_{\alpha}(k) \protect{]} &= 2\hbar \kappaf | k\cdot \hat \omega_{\alpha} | c^{*}_{\alpha}(k) + \hbar \mathfrak{E}^{\textnormal{B}}_{\alpha}(k)^{*}\;,
\end{split}\end{equation}
where for all $\zeta \in \mathcal{F}$ the error terms are bounded by
\begin{equation}\label{eq:sumfrakE}
\begin{split}
\sum_{\alpha\in \mathcal{I}_{k}} \| \mathfrak{E}^{\textnormal{lin}}_{\alpha}(k) \zeta \| &\leq C\| \Ncal^{\frac{1}{2}}\zeta \|\;,\\
\sum_{\alpha \in \mathcal{I}_{k}} \| \mathfrak{E}^{\textnormal{B}}_{\alpha}(k)\zeta \| &\leq CM^{\frac{3}{2}} N^{-\frac{2}{3} + \delta} \| (\Ncal + 1)^{\frac{3}{2}} \zeta \|\;.
\end{split}
\end{equation}
The operator $\mathfrak{E}^{\textnormal{lin}}_{\alpha}(k)$ commutes with $c_{\beta}(k)$ for all $\alpha, \beta =1,\ldots,M$. Instead $\mathfrak{E}^{\textnormal{B}}_{\alpha}(k)$ commutes with $c_{\beta}(k)$ assuming that $\alpha\neq \beta$. Finally, for $f\in \ell^{2}(\mathcal{I}_{k})$, we have
\begin{equation}\label{eq:sumEE}
\begin{split}
\Big\| \sum_{\alpha\in \mathcal{I}_{k}} f_{\alpha} \mathfrak{E}^{\textnormal{lin}}_{\alpha}(k)^{*} \zeta \Big\| &\leq C \|f\|_{\ell^{2}} M^{-\frac{1}{2}} \| (\Ncal + 1)^{\frac{1}{2}} \zeta \| \;,\\
\Big\| \sum_{\alpha \in \mathcal{I}_{k}} f_{\alpha} \mathfrak{E}^{\textnormal{B}}_{\alpha}(k)^{*} \zeta \Big\| &\leq C\|f\|_{\ell^{2}} M^{\frac{3}{2}} N^{-\frac{2}{3} + \delta} \| (\Ncal + 1)^{\frac{3}{2}} \zeta \|\;.
\end{split}
\end{equation}
The operators $\mathfrak{E}^{\textnormal{lin}}_{\alpha}(k)$ and $\mathfrak{E}^{\textnormal{B}}_{\alpha}(k)$ annihilate two fermions, i.\,e.,
\begin{align*}
[\Ncal,\mathfrak{E}^{\textnormal{lin}}_{\alpha}(k) ]  = -2 \mathfrak{E}^{\textnormal{lin}}_{\alpha}(k)\;,  \qquad  [\Ncal,\mathfrak{E}^{\textnormal{B}}_{\alpha}(k)] = -2 \mathfrak{E}^{\textnormal{B}}_{\alpha}(k) \;. 
\end{align*}
\end{lem}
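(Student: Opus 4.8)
The plan is to obtain both commutator identities by a direct computation, isolate the leading term, and reduce the remainders to the pair--operator bounds of \cref{lem:bosbd} and the approximate CCR of \cref{lem:approxCCR}.

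\emph{Step 1 (commutator with $\mathbb{H}_0$; linearization).} Since $c^*_\alpha(k)$ is, up to the factor $1/n_\alpha(k)$, a finite sum of operators $a^*_p a^*_{p-k}$ (for $\alpha\in\Ikp$; replace $k$ by $-k$ for $\alpha\in\Ikm$), the CAR \eqref{eq:car} give at once
\[
[\mathbb{H}_0, c^*_\alpha(k)] = \frac{1}{n_\alpha(k)} \sum_{\substack{p\in B_\F^c\cap B_\alpha\\ p-k\in B_\F\cap B_\alpha}} \big(e(p)+e(p-k)\big)\, a^*_p a^*_{p-k}\;.
\]
Because $p$ lies just outside and $p-k$ just inside the Fermi ball, the absolute values in $e(\cdot)$ open with definite signs, so $e(p)+e(p-k)=\hbar^2(|p|^2-|p-k|^2)=\hbar^2(2p-k)\cdot k$. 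Writing $\hbar p=\kappaf\hat\omega_\alpha+\hbar(p-k_\F\hat\omega_\alpha)$ and using $\hbar k_\F=\kappaf$, one finds
\[
e(p)+e(p-k)=2\hbar\kappaf\,(k\cdot\hat\omega_\alpha)+\hbar\, r_\alpha(p;k)\;,\qquad r_\alpha(p;k):=2\hbar(p-k_\F\hat\omega_\alpha)\cdot k-\hbar|k|^2\;,
\]
and $k\cdot\hat\omega_\alpha>0$ on $\Ikp$ turns the leading constant into $2\hbar\kappaf|k\cdot\hat\omega_\alpha|$ (the case $\alpha\in\Ikm$ is identical after $k\to-k$). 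This is exactly where the patch decomposition enters: every $p\in B_\alpha$ satisfies $|p-k_\F\hat\omega_\alpha|\le C(k_\F M^{-1/2}+R_V)$ with $R_V:=\diam\supp\hat V$ bounded, whence, using $\delta<1/6$, $\sup_{\alpha,p}|r_\alpha(p;k)|\le C k_\F\hbar M^{-1/2}+C\hbar R_V^2\le C M^{-1/2}$. Setting $\hbar\mathfrak{E}^{\textnormal{lin}}_\alpha(k)^*:=\frac{1}{n_\alpha(k)}\sum_p r_\alpha(p;k)\,a^*_p a^*_{p-k}$ gives the first line of \eqref{eq:linearization}.

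\emph{Step 2 (properties of $\mathfrak{E}^{\textnormal{lin}}_\alpha(k)$).} Upon taking adjoints, $\mathfrak{E}^{\textnormal{lin}}_\alpha(k)$ is, up to bounded real weights, the same expression as the annihilation operator $c_\alpha(k)$; in particular it is a linear combination of products $a_{p-k}a_p$. Any two products of two annihilation operators commute by the CAR, hence $[\mathfrak{E}^{\textnormal{lin}}_\alpha(k),c_\beta(k)]=0$ for all $\alpha,\beta$, and since $\mathfrak{E}^{\textnormal{lin}}_\alpha(k)$ removes two fermions, $[\Ncal,\mathfrak{E}^{\textnormal{lin}}_\alpha(k)]=-2\mathfrak{E}^{\textnormal{lin}}_\alpha(k)$. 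For the norm bounds I would write $\mathfrak{E}^{\textnormal{lin}}_\alpha(k)=CM^{-1/2}\tilde c_\alpha(k)$, where $\tilde c_\alpha(k)$ has the form of $c_\alpha(k)$ but carries extra weights of modulus $\le1$; the proofs of \eqref{eq:c*c}--\eqref{eq:c*-f} use only such boundedness, so $\tilde c_\alpha(k)$ obeys the same estimates as $c_\alpha(k)$. Together with the extra factor $M^{-1/2}$ this yields $\sum_\alpha\|\mathfrak{E}^{\textnormal{lin}}_\alpha(k)\zeta\|\le C M^{-1/2}\cdot M^{1/2}\|\Ncal^{1/2}\zeta\|=C\|\Ncal^{1/2}\zeta\|$ and $\|\sum_\alpha f_\alpha\mathfrak{E}^{\textnormal{lin}}_\alpha(k)^*\zeta\|\le CM^{-1/2}\|f\|_{\ell^2}\|(\Ncal+1)^{1/2}\zeta\|$, i.e.\ the ``lin'' lines of \eqref{eq:sumfrakE} and \eqref{eq:sumEE}. (Disjointness of the patches is used when summing $\sum_{p\in B_\alpha}\|a_p\zeta\|^2$ over $\alpha$ to get $\le\|\Ncal^{1/2}\zeta\|^2$.)

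\emph{Step 3 (commutator with $\mathbb{D}_{\textnormal{B}}$).} Here I would expand, using $[c^*_\beta(l),c^*_\alpha(k)]=0$ and the approximate CCR \eqref{eq:approxCCR}:
\[
[\mathbb{D}_{\textnormal{B}},c^*_\alpha(k)]=2\hbar\kappaf\sum_{l\in\north}\sum_{\beta\in\Il}|l\cdot\hat\omega_\beta|\,c^*_\beta(l)\,\delta_{\alpha,\beta}\big(\delta_{k,l}+\mathcal{E}_\beta(l,k)\big)=2\hbar\kappaf|k\cdot\hat\omega_\alpha|c^*_\alpha(k)+\hbar\mathfrak{E}^{\textnormal{B}}_\alpha(k)^*\;,
\]
with $\hbar\mathfrak{E}^{\textnormal{B}}_\alpha(k)^*:=2\hbar\kappaf\sum_{l\in\north,\,\alpha\in\Il}|l\cdot\hat\omega_\alpha|\,c^*_\alpha(l)\,\mathcal{E}_\alpha(l,k)$. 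Since $\north\subset\supp\hat V$ is a fixed finite set, this sum has $\Ocal(1)$ terms; $c^*_\alpha(l)$ and $\mathcal{E}_\alpha(l,k)$ touch only modes in the patch $B_\alpha$, so disjointness of patches gives $[\mathfrak{E}^{\textnormal{B}}_\alpha(k),c_\beta(k)]=0$ for $\alpha\ne\beta$, and since $\mathcal{E}_\alpha$ preserves $\Ncal$ while $c^*_\alpha(l)$ raises it by $2$, $[\Ncal,\mathfrak{E}^{\textnormal{B}}_\alpha(k)]=-2\mathfrak{E}^{\textnormal{B}}_\alpha(k)$. For the bounds, taking adjoints and using $\mathcal{E}_\alpha(l,k)^*=\mathcal{E}_\alpha(k,l)$, one must control $\sum_\alpha\|\mathcal{E}_\alpha(k,l)c_\alpha(l)\zeta\|$ and $\|\sum_\alpha f_\alpha c^*_\alpha(l)\mathcal{E}_\alpha(l,k)\zeta\|$ for each of the finitely many $l$. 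The plan is to use the per--patch analogue of \eqref{eq:c*c} (namely $c^*_\alpha(l)c_\alpha(l)\le\Ncal^{(\alpha)}$ with $\Ncal^{(\alpha)}:=\sum_{p\in B_\alpha}a^*_pa_p$) to move the $c_\alpha(l)$ aside, exploit that $\mathcal{E}_\alpha(k,l)$ commutes with $\Ncal^{(\alpha)}$ and with $c_\beta$, $\beta\ne\alpha$, and then invoke the collective bound $\sum_\alpha|\mathcal{E}_\alpha(k,l)|^2\le C(MN^{-2/3+\delta}\Ncal)^2$ of \eqref{eq:ptwisebdE-11} \emph{once}; one Cauchy--Schwarz in $\alpha$ against its square root produces the factor $M^{3/2}$, while the weight $(\Ncal+1)^{3/2}$ comes from one power of $\Ncal$ in the $\mathcal{E}$--estimate and one half--power from $c_\alpha$. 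This gives the ``B'' lines of \eqref{eq:sumfrakE} and \eqref{eq:sumEE} (the latter with \eqref{eq:c*-f} in place of \eqref{eq:sumc}).

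\emph{Main obstacle.} Step 1 is the conceptual core but reduces to elementary geometry once the patch--diameter bound is in hand. The delicate point is Step 3: the naive estimate $|\mathcal{E}_\alpha(k,l)|^2\le\sum_\beta|\mathcal{E}_\beta(k,l)|^2$ applied separately to each $\alpha$ would cost an extra $M^{1/2}$ and yield $M^2$ instead of $M^{3/2}$, so one must keep $\mathcal{E}_\alpha(k,l)$ and $c_\alpha(l)$ paired on the same patch and apply the collective bound only once; this in turn relies on the explicit patch--localized structure of $\mathcal{E}_\alpha(k,l)$ recalled from \cite[Section 5]{BNPSS2}.
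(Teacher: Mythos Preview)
Your proposal is correct and reconstructs precisely the argument that the paper imports by citation from \cite[Lemma~8.2 and Eqs.~(8.4)--(8.6)]{BNPSS2}; in particular the explicit formula you obtain in Step~3 coincides with the one the paper records in \cref{eq:e-lin-d} (your coefficient $|l\cdot\hat\omega_\alpha|$ is in fact the correct one from the commutator computation, and either version is $\Ocal(1)$). One small clarification on your ``main obstacle'': for the first ``B'' line of \cref{eq:sumfrakE} it suffices to use the \emph{pointwise} bound $\|\mathcal{E}_\alpha(k,l)\xi\|\le CMN^{-2/3+\delta}\|\Ncal\xi\|$ on each $\alpha$ and then the \emph{collective} bound \cref{eq:sumc} on $\sum_\alpha\|c_\alpha(l)\cdot\|$; this already yields $M\cdot M^{1/2}=M^{3/2}$, so no delicate pairing of $\mathcal{E}_\alpha$ with $c_\alpha$ is needed.
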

\begin{proof}
 The lemma collects the results of \cite[Lemma 8.2]{BNPSS2} and of \cite[Eq.~(8.4)--Eq.~(8.6)]{BNPSS2}. The second bound in (\ref{eq:sumEE}) easily follows from the explicit expression \cite[Eq.~(8.5)]{BNPSS2} ($\chi(\alpha \in \Il)$ denotes an indicator function)
\begin{align} \label{eq:e-lin-d}
  \Efrak^{\linb}_\alpha(k) := 2 \kappaf \sum_{l \in \north}  \lvert k \cdot \hat{\omega}_\alpha\rvert   \Ecal^*_\alpha(l,k) c_\alpha(l) \chi(\alpha \in \Il)\;, 
  \end{align}
  and from the bounds for $\mathcal{E}_{\alpha}(l,k)$ and $c^{*}_{\alpha}(l)$ that were given in \cite[Lemma 5.2]{BNPSS2} and \cite[Lemma 5.3]{BNPSS2}, respectively.
\end{proof}

We turn to the approximate Bogoliubov transformation $T$. The Bogoliubov kernel $K(k)$, defined in \cref{eq:Kk}, is controlled by the following bound. The bound is stronger than the one given in \cite[Lemma 4.5]{BNPSS} and weaker than the bound given in \cite[Lemma~6.1]{BNPSS2}. Compared to the latter, it has the advantage of not requiring a small--potential assumption.
\begin{lem}[Bogoliubov kernel] \label{lem:K}
For all $k\in \north$ we have
\begin{equation}\label{eq:Kresult}
\lvert K(k)_{\alpha,\beta}\rvert \leq C \frac{\hat{V}(k)}{M} \quad \forall \alpha,\beta \in \Ik\;.
\end{equation}  
Furthermore
\begin{equation}\label{eq:Kbound}
 \norm{K(k)}\HS \leq C\;.
\end{equation}
\end{lem}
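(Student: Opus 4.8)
The plan is to exploit the explicit formula $K(k) = \tfrac12\log(S_1(k)S_1(k)^\intercal)$ together with the structure of $D(k)$, $W(k)$, $\widetilde W(k)$ as rank-controlled perturbations of the diagonal. First I would record the block structure: writing $A(k) := D(k) + W(k) - \widetilde W(k)$ and $B(k) := D(k) + W(k) + \widetilde W(k)$, we have $S_1 = A^{1/2}E^{-1/2}$ with $E = (A^{1/2} B A^{1/2})^{1/2}$, hence $S_1 S_1^\intercal = A^{1/2}E^{-1}A^{1/2}$. The key observation is that $W$ and $\widetilde W$ are ``rank-one up to the patch blocks'': indeed $W(k)_{\alpha\beta}$ and $\widetilde W(k)_{\alpha\beta}$ are built from the single vector $u(k)$ with entries $u_\alpha(k) := n_\alpha(k)/\sqrt{2\hbar\kappaf N|k|}\cdot\sqrt{\hat V(k)}$, restricted to the $\Ikp$/$\Ikm$ sublattices. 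By \cref{eq:doku} we have $n_\alpha(k)^2 \le C k_\F^2 |k|/M \le CN^{2/3}/M$, so $\norm{u(k)}^2 = \sum_\alpha u_\alpha(k)^2 \le \tfrac{\hat V(k)}{2\hbar\kappaf N|k|}\cdot \tfrac{C k_\F^2|k|}{M}\cdot M = C\hat V(k)$; thus $W(k) = P_+ u u^\intercal P_+ + P_- u u^\intercal P_-$ and $\widetilde W(k) = P_+ u u^\intercal P_- + P_- u u^\intercal P_+$ are operators of norm $\le C\hat V(k)$ and of rank at most $2$, where $P_\pm$ are the coordinate projections onto $\Ikp$, $\Ikm$.

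Next I would handle the $\ell^2$ (Hilbert--Schmidt) bound \cref{eq:Kbound}. Since $\log$ is operator-monotone-ish only up to constants on $(0,\infty)$, the cleanest route is to show $c\,\id \le S_1 S_1^\intercal \le C\,\id$ with $c,C>0$ depending only on $\norm{\hat V}_{\ell^1}$ (independent of $N$), after which $\norm{K(k)}\HS = \tfrac12\norm{\log(S_1S_1^\intercal)}\HS \le C'\norm{S_1S_1^\intercal - \id}\HS$, and the latter is controlled because $S_1S_1^\intercal - \id = A^{1/2}E^{-1}A^{1/2}-\id$ is, by the rank-$2$ structure of $W,\widetilde W$ and the fact that $D$ has spectrum in $[0,1]$, a finite-rank-plus-small perturbation whose Hilbert--Schmidt norm is dominated by $\norm{W}\HS + \norm{\widetilde W}\HS \le C$ (recall \cref{eq:DWWbounds}). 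The spectral two-sidedness $c\,\id \le S_1 S_1^\intercal \le C\,\id$ follows from $D(k)\le \id$ and positivity of $\hat V$: on the orthogonal complement of $\supp u(k)$ one has $A = B = D$ so $S_1 S_1^\intercal = \id$ there, and on the (at most $2$-dimensional, or finite-dimensional after including the $|\hat k\cdot\hat\omega_\alpha|$-small directions) remaining subspace one uses that $A,B$ are bounded above by $C\id$ and below by $c\id$ once restricted appropriately --- here the genuine content is that $D(k)+W(k)\pm\widetilde W(k)$ stays bounded below, which is exactly where one needs that $\hat V\ge 0$ guarantees $D+W+\widetilde W \ge D \ge 0$ and that the combination $D+W-\widetilde W$ is bounded below away from $0$ on the relevant subspace. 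This is the same computation already carried out in \cite[Appendix A.1]{BNPSS} and \cite[Section~6]{BNPSS2}; I would cite those estimates and only note that the compact-support (rather than small-potential) hypothesis suffices here because we need only the weaker two-sided bound, not a quantitatively small kernel.

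Finally, for the pointwise bound \cref{eq:Kresult} I would use $K(k) = \tfrac12\log(\id + (S_1 S_1^\intercal - \id))$ and expand: since $\norm{S_1S_1^\intercal - \id}\OP \le C\hat V(k)$ and, crucially, $S_1S_1^\intercal - \id$ is supported (as a matrix with nonzero entries) only on indices $\alpha,\beta$ lying in the ``active'' set $\supp u(k)$ union a controlled remainder, each entry of $\log(\id + (S_1S_1^\intercal-\id))$ is a convergent power series in the entries of $S_1S_1^\intercal - \id$. The leading term is $\tfrac12(S_1S_1^\intercal - \id)_{\alpha\beta}$, which by the $uu^\intercal$ structure equals $\tfrac12 u_\alpha u_\beta \times(\text{bounded scalar}) + \tfrac12(E^{-1}-\id)\text{-type corrections}$; using $u_\alpha(k) \le C\sqrt{\hat V(k)}\,n_\alpha(k)/(k_\F\sqrt{|k|}) \le C\sqrt{\hat V(k)}/\sqrt{M}$ (again from \cref{eq:doku}) gives $|u_\alpha u_\beta| \le C\hat V(k)/M$, hence the claimed $|K(k)_{\alpha\beta}| \le C\hat V(k)/M$, and the higher-order terms in the series are smaller by further factors of $\norm{\hat V}_{\ell^1}$ and of the matrix size compensated by the same $1/M$ gains. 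The main obstacle is the bookkeeping in this last step: one must verify that multiplying the rank-$2$ factors together through $E^{-1/2}$ does not destroy the $1/M$ decay --- i.e., that $\big((S_1S_1^\intercal)^n\big)_{\alpha\beta}$ still carries a factor $1/M$ uniformly in $n$. This is true because every ``off-active-diagonal'' matrix element is proportional to some $u_{\alpha'}u_{\beta'}$ and $\sum_{\alpha'} u_{\alpha'}^2 \le C\hat V(k)$ is $O(1)$, so each additional power of $S_1S_1^\intercal$ contributes an $O(\hat V(k))$ scalar but never improves or worsens the single surviving $1/M$; the uniformity in $n$ is guaranteed by the operator bound $\norm{S_1S_1^\intercal-\id}\OP\le C\hat V(k) < 1$ once $\hat V$ is (as here) of bounded $\ell^1$-norm. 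I would organize this as: (i) establish the $uu^\intercal$ structure and the bound $\norm u^2 \le C\hat V(k)$, $\max_\alpha u_\alpha^2 \le C\hat V(k)/M$; (ii) deduce the two-sided spectral bound and hence \cref{eq:Kbound} via $\norm{\log X}\HS \le C\norm{X-\id}\HS$; (iii) deduce \cref{eq:Kresult} by entrywise expansion of the logarithm, tracking the single $1/M$ factor.
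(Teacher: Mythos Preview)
Your proposal has a genuine gap in step (iii), precisely at the point the paper's proof is designed to handle. You want to expand $K(k) = \tfrac12\log(\id + X)$ with $X = S_1S_1^\intercal - \id$ as a power series and track the $1/M$ factor term by term. For this you assert that $\norm{X}\OP \leq C\hat V(k) < 1$ ``once $\hat V$ is of bounded $\ell^1$-norm''. But that is exactly a smallness hypothesis on the potential: bounded $\ell^1$-norm does not make $C\hat V(k) < 1$ at an individual $k$. Without a smallness assumption, $S_1S_1^\intercal$ can have eigenvalues well above $2$ (and the complementary block below $1/2$), so the logarithm series does not converge and the entrywise tracking argument collapses. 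The paper states this explicitly: the smallness assumption in earlier work was used \emph{only} to pass from entrywise bounds on $L := S_1S_1^\intercal - \id$ to entrywise bounds on $K$ via the series, and the point of the present proof is to avoid that step.

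The replacement idea is the following. After block-diagonalizing by the orthogonal $U = \tfrac{1}{\sqrt 2}\bigl(\begin{smallmatrix}\id & \id\\ \id & -\id\end{smallmatrix}\bigr)$, one gets blocks $K_1 \leq 0 \leq K_2$ with $e^{2K_i} = \id + L_i$, and the entrywise bound $\lvert (L_i)_{\alpha,\beta}\rvert \leq C\hat V(k)/M$ (proved in \cite{BNPSS2} without smallness). For the positive block, convexity gives $L_2 = e^{2K_2} - \id \geq 2K_2$; since $K_2$ is positive semidefinite, Cauchy--Schwarz on its diagonal yields $\lvert (K_2)_{\alpha,\beta}\rvert \leq \sqrt{(K_2)_{\alpha,\alpha}(K_2)_{\beta,\beta}} \leq \tfrac12\sqrt{(L_2)_{\alpha,\alpha}(L_2)_{\beta,\beta}} \leq C\hat V(k)/M$. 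For the negative block one applies the same trick to $-K_1$ and $\tilde L_1 := (L_1+\id)^{-1} - \id = e^{-2K_1} - \id$, after checking directly (via the integral representation of the square root and the Sherman--Morrison formula for the rank-one perturbation $d^2 + 2g\lvert d^{1/2}v\rangle\langle d^{1/2}v\rvert$) that $\lvert (\tilde L_1)_{\alpha,\alpha}\rvert \leq C\hat V(k)/M$. No series expansion is needed; the argument uses only positivity and the diagonal entries of $L_i$, $\tilde L_1$. Your steps (i) and (ii) are reasonable for the Hilbert--Schmidt bound, but step (iii) needs to be replaced by this positivity argument.
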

\begin{proof}
 We drop the $k$--dependence from the notation for all matrices.
We write
\begin{equation} \label{eq:defv}
g:=\frac{1}{2}\kappaf \hat V(k)\;, \qquad u_\alpha:= \sqrt{\lvert \hat{k} \cdot \hat{\omega}_\alpha\rvert}\;, \qquad v_\alpha:= \frac{\hbar}{\kappaf \sqrt{\lvert k\rvert}} n_\alpha(k)\;, \qquad \forall \alpha \in\Ik\;. 
\end{equation}
Note that the coefficient $v_\alpha$ from the interaction part can be controlled by the coefficient $u_\alpha$ from the kinetic energy: according to \cref{eq:doku} we have
\[v_\alpha = \sqrt{\frac{4\pi}{M}} u_\alpha \left( 1 + \Ocal(\sqrt{M}N^{-\frac{1}{3}+\delta}) \right) \leq \frac{C}{\sqrt{M}} u_\alpha\;.\]
Due to the reflection symmetry of the construction of patches we have
\[
B_{\alpha+M/2} = - B_\alpha\;, \quad \omega_{\alpha+M/2} = - \omega_\alpha \qquad \forall \alpha\in\{1,2,\ldots, M/2\} 
\]
so that the matrices in \cref{eq:blocks1} can be written in block form as 
\[
\D = \begin{pmatrix} d & 0 \\ 0 & d \end{pmatrix}, \quad \label{eq:Wdef}\W = \begin{pmatrix} b & 0 \\ 0 & b \end{pmatrix}\;, \quad \Wt = \begin{pmatrix} 0& b\\ b & 0\end{pmatrix}\;,
\]
where
\[d=\diag(u_\alpha^2, \alpha = 1,\ldots,\ik) \quad \textnormal{and} \quad b=g\lvert v\rangle \langle v\rvert\;,\]
the latter indicating the rank--one operator with $v = (v_1, \cdots , v_{\ik})$. (The indices are delimited by $I := \lvert \Ikp \rvert = \lvert \Ikm \rvert$.)
 
 According to \cref{eq:Kk} we have, with $L := S_1 S_1^\intercal - \id$,
 \begin{equation}
  K = \frac{1}{2} \log(\id+L) \quad \Leftrightarrow \quad e^{2K} = \id + L \;.
 \end{equation}
 Using the orthogonal matrix (as in \cite{GS13} and \cite[Eq.~(6.8)]{BNPSS2})
 \[U := \frac{1}{\sqrt{2}} \begin{pmatrix} \id & \id \\ \id & -\id \end{pmatrix}\]
 we can block--diagonalize
 \begin{equation}
  U L U^\intercal = \begin{pmatrix} L_1 & 0 \\ 0 & L_2 \end{pmatrix}\;,
  \end{equation}
  where
  \begin{equation}\begin{split}
 L_1 & = d^{1/2}\left[ d^{1/2} (d+2b) d^{1/2} \right]^{-1/2} d^{1/2} - \id \;,\\
 L_2 & = (d+2b)^{1/2} \left[ (d+2b)^{1/2}  d (d+2b)^{1/2} \right]^{-1/2} (d+2b)^{1/2} - \id\;.
 \end{split}
 \end{equation}
We can also write
\[U K U^\intercal = \begin{pmatrix} K_1 & 0 \\ 0 & K_2 \end{pmatrix}\;, \quad \textnormal{where} \quad e^{2 K_i} = L_i + 1 \quad \forall i \in \{1,2\}\;.\]
Note that
\[
 L_2 \geq 0 \geq L_1 \quad \Leftrightarrow \quad K_2 \geq 0 \geq K_1 \;.
\]
In \cite[Eq.~(6.21) and (6.30)~et~seqq.]{BNPSS2} we showed that
\begin{equation}
 \lvert (L_i)_{\alpha,\beta} \rvert \leq C \frac{\hat{V}(k)}{M} \min \left\{ \frac{u_\alpha}{u_\beta}, \frac{u_\beta}{u_\alpha} \right\} \leq C \frac{\hat{V}(k)}{M} \;.
\end{equation}
This was proven in \cite{BNPSS2} without using the smallness assumption on the potential. (The smallness assumption was only used to transfer this bound for $L_i$ to a bound for $K_i$ via a series expansion of the logarithm.) Here we avoid the use of a series expansion. This way, we can still obtain the important factor $M^{-1}$ (but not $\min \left\{ \frac{u_\alpha}{u_\beta}, \frac{u_\beta}{u_\alpha} \right\}$).

\paragraph{Bound for $K_2$.} Using the operator inequality
\[L_2 = e^{2 K_2} - 1 \geq 2 K_2\]
and the Cauchy--Schwarz inequality we can bound
\begin{equation}
 \lvert (K_2)_{\alpha,\beta} \rvert \leq \sqrt{(K_2)_{\alpha,\alpha}} \sqrt{(K_2)_{\beta,\beta}} \leq \frac{1}{2} \sqrt{(L_2)_{\alpha,\alpha}} \sqrt{(L_2)_{\beta,\beta}} \leq C \frac{\hat{V}(k)}{M}\;.
\end{equation}

\paragraph{Bound for $K_1$.} This is slightly more difficult because $K_1 \leq 0$. We write
\[ \tilde{L}_1 := (L_1 + 1)^{-1} - 1 = e^{-2K_1} - 1 \geq -2 K_1 \;.\]
We claim that
\begin{equation}\label{eq:claim}
\lvert (\tilde{L}_1)_{\alpha,\alpha} \rvert \leq C \frac{\hat{V}(k)}{M} \quad \forall \alpha\in \{1,2,\ldots, I\}\;.
\end{equation}
Given this claim, we can conclude by the same Cauchy--Schwarz estimate as above that
\begin{equation}
 \lvert (-K_1)_{\alpha,\beta} \rvert \leq \sqrt{(-K_1)_{\alpha,\alpha}} \sqrt{(-K_1)_{\beta,\beta}} \leq \frac{1}{2} \sqrt{(\tilde{L}_1)_{\alpha,\alpha}} \sqrt{(\tilde{L}_1)_{\beta,\beta}} \leq C \frac{\hat{V}(k)}{M}\;.
\end{equation}
To show \cref{eq:claim}, we write
\begin{align}
 \tilde{L}_1 & = (L_1 + 1)^{-1} - 1 = d^{-1/2}\left[ d^{1/2} (d+2b) d^{1/2} \right]^{1/2} d^{-1/2} - 1 \nonumber \\
 & = d^{-1/2}\left( \left[ d^{1/2} (d+2b) d^{1/2} \right]^{1/2} - d \right) d^{-1/2}=: d^{-1/2} A d^{-1/2}\;. \label{eq:220}
\end{align}
Recall the integral identity for the matrix square root $X^{1/2} = -\frac{2}{\pi} \int_0^\infty  \lambda^2 (X + \lambda^2)^{-1} \di\lambda$ and the formula for the inverse of a matrix with rank--one perturbation $(X+\lvert x\rangle \langle y \rvert)^{-1} = X^{-1} - X^{-1}\lvert x\rangle \langle y \rvert X^{-1} (1+\langle y,X^{-1} x\rangle)^{-1}$. Writing
\[
d^{1/2} (d+2b) d^{1/2} = d^2 + 2g \lvert \tilde{v}\rangle\langle \tilde{v}\rvert \qquad \textnormal{with} \quad \tilde{v} := d^{1/2} v
\]
we find
\begin{align*}
 \lvert A_{\alpha,\alpha}\rvert & =  \left\lvert \frac{2}{\pi}\int_0^\infty \lambda^2 \left( \frac{1}{d^2+\lambda^2} - \frac{1}{d^2 + \lambda^2 + 2g \lvert \tilde{v}\rangle \langle \tilde{v} \rvert}\right)_{\alpha,\alpha} \di\lambda \right\rvert \\
 & \leq \frac{4g}{\pi} \int_0^\infty \frac{\lambda^2}{1+2g \langle \tilde{v}, (d^2 + \lambda^2)^{-1} \tilde{v}\rangle} \left\lvert \left( \frac{1}{d^2 + \lambda^2} \lvert \tilde{v}\rangle \langle \tilde{v} \rvert \frac{1}{d^2 + \lambda^2} \right)_{\alpha,\alpha} \right\rvert \di\lambda
\end{align*}
Note that $\langle \tilde{v}, (d^2 + \lambda^2)^{-1} \tilde{v}\rangle \geq 0$ can be dropped from the denominator for an upper bound, so that using the explicit form of the matrix elements according to \cref{eq:defv} we get
 \begin{align}
  \lvert A_{\alpha,\alpha}\rvert & \leq \frac{Cg}{M} \int_0^\infty \lambda^2  \frac{u_\alpha^4}{(u_\alpha^4 + \lambda^2)^2} \di\lambda = \frac{C g}{M} u_\alpha^2 \;.
 \end{align} 
 According to \cref{eq:220}, since $\left(d^{-1/2}\right)_{\alpha,\beta} = \delta_{\alpha,\beta} u_\alpha^{-1}$ is diagonal, this implies
the claimed bound \cref{eq:claim}. 

\paragraph{Conclusion.} In summary, we proved that for both $i\in \{1,2\}$ we have
\[
 \lvert (K_i)_{\alpha,\beta} \rvert \leq C \frac{\hat{V}(k)}{M} \quad \forall \alpha,\beta \in \{1,2,\ldots, I\}\;.
\]
Recalling
\[
K = U^\intercal \begin{pmatrix} K_1 & 0 \\ 0 & K_2 \end{pmatrix} U = \frac{1}{2}\begin{pmatrix} K_1 + K_2 & K_1 - K_2 \\ K_1 - K_2 & K_1 + K_2 \end{pmatrix}
\]
we arrive at \cref{eq:Kresult}.

\medskip

The bound for the Hilbert--Schmidt norm follows trivially from the first bound.
\end{proof}

The next lemma shows that the number operator does not increase significantly under conjugation with the operator $T_\lambda := \exp(\lambda B)$, where the operator $B$ is defined in \cref{eq:B}.
\begin{lem}[Stability of the number operator, {\cite[Proposition 4.6]{BNPSS}}]\label{lem:stabN} There exists a $C > 0$ such that for all $n \in \Nbb$ and all $\lambda \in [0,1]$ we have
\begin{equation}
T_{\lambda}^* ( \Ncal + 1 )^{n} T_{\lambda} \leq e^{Cn}  (\Ncal + 1)^{n}\;. 
\end{equation}
\end{lem}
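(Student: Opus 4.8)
The route I would take is the standard one for number--operator stability bounds under (approximate) Bogoliubov transformations: a Grönwall argument in $\lambda$. Fix a vector $\zeta$ in the dense subspace of vectors with finitely many particles, on which $B$ and all powers of $\Ncal$ act and on which $\lambda\mapsto T_\lambda\zeta$ is differentiable with $\tfrac{d}{d\lambda}T_\lambda\zeta = BT_\lambda\zeta$; by density it suffices to bound $f(\lambda):=\langle T_\lambda\zeta,(\Ncal+1)^nT_\lambda\zeta\rangle$. Since $B^\ast=-B$ (see \cref{eq:B}), $f'(\lambda)=\langle T_\lambda\zeta,[(\Ncal+1)^n,B]\,T_\lambda\zeta\rangle$, so everything reduces to a quadratic--form estimate of the type
\[
\pm\,\big[(\Ncal+1)^n,B\big]\ \le\ C\,n\,(\Ncal+1)^n
\]
with $C$ independent of $n$ and $\lambda$ (and of $N$): granting this, $|f'(\lambda)|\le Cn\,f(\lambda)$, so Grönwall gives $f(\lambda)\le e^{Cn\lambda}f(0)\le e^{Cn}f(0)$ for $\lambda\in[0,1]$, which is the claim since $\zeta$ is arbitrary.

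To prove the commutator bound, write $B=P-P^\ast$ with $P:=\tfrac12\sum_{k\in\north}\sum_{\alpha,\beta\in\Ik}K(k)_{\alpha,\beta}\,c^\ast_\alpha(k)c^\ast_\beta(k)$. Each monomial $c^\ast_\alpha(k)c^\ast_\beta(k)$ raises $\Ncal$ by $4$, so by \cref{eq:Nccomm} one has $(\Ncal+1)^nP=P(\Ncal+5)^n$, hence
\[
\big[(\Ncal+1)^n,B\big]=\Delta_n(\Ncal)\,P+P^\ast\,\Delta_n(\Ncal),\qquad
\Delta_n(\Ncal):=(\Ncal+1)^n-(\Ncal-3)^n,
\]
where $\Delta_n(\Ncal)$ is only ever applied to vectors with $\Ncal\ge 4$ and there satisfies $0\le\Delta_n(\Ncal)\le 4n\,(\Ncal+1)^{n-1}$ by the mean value theorem. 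The $n$--linear (rather than $n$--quadratic) factor in front must therefore be gained from this elementary estimate on $\Delta_n$, while the operators $P,P^\ast$ are controlled by combining the Hilbert--Schmidt bound $\norm{K(k)}\HS\le C$ of \cref{lem:K}, the pair--operator bounds of \cref{lem:bosbd} (in particular \cref{eq:c*-f} and \cref{eq:c*c}), and the approximate CCR of \cref{lem:approxCCR} to absorb $\sum_\alpha c_\alpha(k)c^\ast_\alpha(k)$ into $\Ncal+C$; the error operators $\mathcal E_\alpha(k,l)$ contribute only lower--order terms. Carrying this out—carefully commuting the remaining power of $\Ncal$ past $P$ and exploiting the cancellation between $\Delta_n(\Ncal)P$ and $P^\ast\Delta_n(\Ncal)$—yields the desired quadratic--form inequality; this is exactly the content of \cite[Proposition~4.6]{BNPSS}, to which I would refer for the (somewhat delicate) bookkeeping.

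The routine steps here are the differentiation of $f$, the density extension to the form domain of $(\Ncal+1)^n$, and the Grönwall step. \emph{The main obstacle} is the commutator estimate, and within it precisely the fact that $B$ is only \emph{approximately} bosonic and does not commute with $\Ncal$: one must propagate the weight $(\Ncal+1)^n$ through the four--particle shift induced by $P$ without losing an extra power of $\Ncal$, and check that the non--bosonic corrections in \cref{lem:approxCCR} do not degrade the linear--in--$n$ growth of the bound. Everything else is softer.
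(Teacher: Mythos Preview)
The paper does not give its own proof of this lemma; it is stated as a direct citation of \cite[Proposition~4.6]{BNPSS}. Your sketch is the standard Grönwall argument that underlies that proposition, and the overall strategy (differentiate $f(\lambda)$, reduce to a commutator bound of the form $\pm[(\Ncal+1)^n,B]\le Cn(\Ncal+1)^n$, integrate) is correct; your identification of $[(\Ncal+1)^n,B]=\Delta_n(\Ncal)P+P^*\Delta_n(\Ncal)$ with $\Delta_n(\Ncal)\le 4n(\Ncal+1)^{n-1}$ is also right.

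One small correction: the approximate CCR of \cref{lem:approxCCR} are \emph{not} needed here. The commutator estimate uses only the pair--operator bounds of \cref{lem:bosbd} (in particular \cref{eq:c*-f} and its adjoint version, which give $\|c^{(*)}(f)\zeta\|\le\|f\|_{\ell^2}\|(\Ncal+1)^{1/2}\zeta\|$) together with $\norm{K(k)}\HS\le C$ from \cref{lem:K} and the finiteness of $\north$. These already yield $\|P^\natural\zeta\|\le C\|(\Ncal+1)\zeta\|$, from which the quadratic--form bound follows by a Cauchy--Schwarz splitting of $\Delta_n(\Ncal)^{1/2}$ to each side and pulling the weight through $P$ via \cref{eq:Nccomm}. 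There is no place where $\sum_\alpha c_\alpha(k)c_\alpha^*(k)$ needs to be rewritten via the CCR, and the error operators $\Ecal_\alpha(k,l)$ do not appear. Apart from this, your outline and your final deferral to \cite[Proposition~4.6]{BNPSS} match what the paper does.
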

The following lemma shows that $T_{\lambda}$ acts as an approximate Bogoliubov transformation on the pair operators. %
\begin{lem}[Approximate Bogoliubov transformation] \label{lem:Bog}\label{prp:NkE} Let $l\in \north$ and $\gamma \in \mathcal{I}_{l}$. Then there exists $C >0$ such that for all $\lambda \in [0,1]$ and all $m\in \frac{1}{2}\mathbb{N}$ we have
\begin{equation}
T^{*}_{\lambda} c_{\gamma}(l) T_{\lambda} = \sum_{\alpha\in \mathcal{I}_{l}} \cosh( \lambda K(l) )_{\alpha, \gamma} c_{\alpha}(l) + \sum_{\alpha \in \mathcal{I}_{l}} \sinh ( \lambda K(l) )_{\alpha, \gamma} c^{*}_{\alpha}(l) + \mathfrak{E}_{\gamma}(\lambda,l)\;,
\end{equation} 
with the bound
\begin{equation}
\sum_{\gamma\in \mathcal{I}_{l}} \| \Ncal^{m} \mathfrak{E}_{\gamma}^{*}(\lambda,l) \zeta \| \leq CM^{\frac{3}{2}} N^{-\frac{2}{3} + \delta} \|  (\Ncal + 1)^{m+\frac{3}{2}} \zeta \| \quad \forall \zeta\in \fock \;.
\end{equation}
The same estimate holds with $\mathfrak{E}_{\gamma}^{*}(\lambda,l)$ replaced by $\mathfrak{E}_{\gamma}(\lambda,l)$.
\end{lem}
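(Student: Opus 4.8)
The plan is to prove \cref{lem:Bog} by a Duhamel/Gr\"onwall argument in the conjugation parameter $\lambda$. Write $c_\gamma(\lambda,l) := T_\lambda^* c_\gamma(l) T_\lambda$ and $c_\gamma^{\mathrm{free}}(\lambda,l) := \sum_{\alpha\in\mathcal{I}_l}\cosh(\lambda K(l))_{\alpha,\gamma}c_\alpha(l) + \sum_{\alpha\in\mathcal{I}_l}\sinh(\lambda K(l))_{\alpha,\gamma}c^*_\alpha(l)$, so that $\mathfrak{E}_\gamma(\lambda,l) := c_\gamma(\lambda,l) - c_\gamma^{\mathrm{free}}(\lambda,l)$, with $\mathfrak{E}_\gamma(0,l)=0$. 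Differentiating, $\frac{d}{d\lambda}c_\gamma(\lambda,l) = T_\lambda^*[c_\gamma(l),B]T_\lambda$ (using $\frac{d}{d\lambda}T_\lambda = BT_\lambda = T_\lambda B$ since $B$ commutes with itself, and $B^*=-B$). Now $[c_\gamma(l),B] = \sum_{k\in\north}\frac12\sum_{\alpha,\beta\in\mathcal{I}_k}K(k)_{\alpha,\beta}[c_\gamma(l),c^*_\alpha(k)c^*_\beta(k)] - \hc$, and expanding the commutator with the approximate CCR of \cref{lem:approxCCR}, $[c_\gamma(l),c^*_\alpha(k)c^*_\beta(k)] = \delta_{\gamma\alpha}(\delta_{lk}+\mathcal{E}_\gamma(l,k))c^*_\beta(k) + c^*_\alpha(k)\delta_{\gamma\beta}(\delta_{lk}+\mathcal{E}_\gamma(l,k))$, the leading ($\delta_{lk}$) piece produces exactly $\sum_{\alpha\in\mathcal{I}_l}K(l)_{\alpha,\gamma}c^*_\alpha(l)$, i.e.\ the generator of the free Bogoliubov flow, while the $\mathcal{E}_\gamma$ pieces are the error-producing terms.

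Second, I would set up the differential inequality. Let $f(\lambda) := \sum_{\gamma\in\mathcal{I}_l}\|\Ncal^m\mathfrak{E}_\gamma^*(\lambda,l)\zeta\|$. From the Duhamel identity $\mathfrak{E}_\gamma(\lambda,l) = \int_0^\lambda \big(\text{derivative of }c_\gamma - \text{derivative of }c_\gamma^{\mathrm{free}}\big)\,d\mu$, the derivative of the difference splits into (a) a ``closed'' part proportional to $\mathfrak{E}$ itself coming from replacing $T_\mu^* c_\alpha(l) T_\mu$ and $T_\mu^* c_\alpha^*(l)T_\mu$ by their free counterparts plus $\mathfrak{E}$, which after using the kernel bound $\|K(l)\|\HS\leq C$ from \cref{lem:K} and the pair-operator bounds \cref{eq:sumc}--\cref{eq:c*-f} yields a term $\leq C\,f(\mu)$ (up to harmless shifts $\Ncal\to\Ncal+2$ handled by \cref{eq:Nccomm} and \cref{lem:stabN}); and (b) a genuinely small ``source'' part coming from the $\mathcal{E}_\gamma(l,k)$ insertions, which is controlled by the pointwise bound \cref{eq:ptwisebdE-11}, the summed bound on $\sum_\alpha\|\mathcal{E}_\alpha(k,l)\zeta\|\leq CM^{3/2}N^{-2/3+\delta}\|\Ncal\zeta\|$, together with $\|K(k)_{\alpha,\beta}\|\leq C\hat V(k)/M$ and the finiteness of $\supp\hat V$, giving a contribution $\leq CM^{3/2}N^{-2/3+\delta}\|(\Ncal+1)^{m+3/2}\zeta\|$ uniformly in $\mu$. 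Here I use \cref{lem:stabN} to move factors of $T_\mu^*(\Ncal+1)^p T_\mu$ past each other at cost $e^{Cp}$, and the number-shift relations \cref{eq:Nccomm} to commute $\Ncal^m$ through the pair operators. Gr\"onwall's inequality then gives $f(\lambda)\leq e^{C\lambda}CM^{3/2}N^{-2/3+\delta}\|(\Ncal+1)^{m+3/2}\zeta\| \leq CM^{3/2}N^{-2/3+\delta}\|(\Ncal+1)^{m+3/2}\zeta\|$ for $\lambda\in[0,1]$, which is the claimed bound; the version with $\mathfrak{E}_\gamma$ in place of $\mathfrak{E}_\gamma^*$ follows the same way (or by taking adjoints, noting $\mathcal{E}_\alpha(l,k)=\mathcal{E}_\alpha(k,l)^*$).

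The fact that $\mathcal{E}_\gamma(l,k)$ commutes with $\Ncal$ (\cref{lem:approxCCR}), and that the error operators shift particle number by a controlled amount, is what makes the bookkeeping with powers of $(\Ncal+1)$ consistent; I would track the number-operator weights carefully, always arranging that every occurrence of a pair operator or an $\mathcal{E}$ is estimated via \cref{eq:sumc}, \cref{eq:c*-f}, or \cref{eq:ptwisebdE-11} applied to a vector of the form $(\Ncal+1)^{p}T_\mu\zeta$ or $T_\mu(\Ncal+1)^p\zeta$, and then invoking \cref{lem:stabN}.

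The main obstacle I anticipate is the careful combinatorial-analytic bookkeeping in step (b): the commutator $[c_\gamma(l),B]$ itself contains not only the error operators $\mathcal{E}_\gamma(l,k)$ but also, after a second round of Duhamel expansion needed to express $T_\mu^* \mathcal{E}_\gamma(l,k) c^*_\beta(k) T_\mu$ in a usable form, further nested commutators of $\mathcal{E}$'s with the $c^*_\beta(k)$'s and with $B$; one must check that each such nested term is still bounded by $M^{3/2}N^{-2/3+\delta}$ times an appropriate power of $(\Ncal+1)$ — i.e.\ that the iteration does not generate uncontrolled growth in $M$ or in the number-operator power. This is exactly the type of estimate carried out in \cite[Proposition 4.6 and Lemma~4.7]{BNPSS}; the present statement for general half-integer $m$ and the uniform-in-$\lambda$ conclusion follow by the same scheme, and I would reduce to those computations wherever possible rather than redo them.
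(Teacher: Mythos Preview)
Your Gr\"onwall/Duhamel approach is correct and amounts to a valid reorganization of the paper's proof. The paper instead imports the explicit iterated Duhamel expansion from \cite[Eq.~(7.9)]{BNPSS2} to arbitrary order $n_0$: this produces a closed formula for $\mathfrak{E}_\gamma^*(\lambda,l)$ as a sum of terms with coefficients $(K(l)^n)_{\gamma,\alpha}K(k)_{\alpha,\beta}$, plus a truncation remainder. Using $|(K(l)^n)_{\gamma,\alpha}|\leq C^n M^{-1}$ (from \cref{lem:K}) and then sending $n_0\to\infty$, the remainder vanishes and one is left with four ``source'' terms $I_1,\ldots,I_4$ of exactly the form $T_\tau^*\,\Ecal_\alpha(k,l)^\natural c_\beta^\natural(k)\,T_\tau$ acting on $(\Ncal+3)^m T_\tau\zeta$; these are bounded directly via \cref{lem:stabN}, \cref{eq:ptwisebdE-11}, and \cref{lem:bosbd}. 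Your differential--inequality formulation and the paper's infinite iteration rest on the identical ingredients; the paper's version is more explicit (and piggybacks on a formula already established elsewhere), while yours makes the ``closed $+$ source'' structure transparent and avoids writing out the full expansion.

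Two small corrections. First, the obstacle you anticipate --- a second Duhamel round for the nested terms $T_\mu^*\Ecal_\gamma(l,k)c^*_\beta(k)T_\mu$ --- is not needed: as in the paper's treatment of $I_1,\ldots,I_4$, one bounds these directly by applying \cref{lem:stabN} to remove the outer $T_\mu^*$, commuting $(\Ncal+1)^m$ through using $[\Ncal,\Ecal_\alpha]=0$ and $[\Ncal,c_\beta^*]=2c_\beta^*$, estimating with \cref{eq:ptwisebdE-11} and \cref{eq:sumc}, and then applying \cref{lem:stabN} once more to the remaining $T_\mu$. Second, since the closed part of $\frac{d}{d\lambda}\mathfrak{E}_\gamma^*$ involves $\mathfrak{E}_\alpha$ (unstarred), you should run the Gr\"onwall argument on the coupled quantity $g(\lambda)=\sum_{\gamma\in\Ical_l}(\|\Ncal^m\mathfrak{E}_\gamma^*\zeta\|+\|\Ncal^m\mathfrak{E}_\gamma\zeta\|)$ rather than on $f(\lambda)$ alone; the entrywise bound $\sum_\gamma|K(l)_{\gamma,\alpha}|\leq C$ then gives $g'\leq Cg+S$ as you need.
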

\begin{proof} The error bound here is a generalization of \cite[Lemma 7.1]{BNPSS2}, which only considered $m=0$. See also \cite[Proposition 4.4]{BNPSS} for an earlier related result. Recall from \cite[Eq.~(7.9)]{BNPSS2} that by an iterated Duhamel expansion of the conjugation $T^*_\lambda( \cdot )T_\lambda$ up to arbitrary order $n_0 \in \Nbb$, the error term is 
\begin{align*}
\mathfrak{E}_{\gamma}^{*}(\lambda,l) & = \sum_{n=0}^{n_{0}-1} \int_{0}^{\lambda} \di\tau\, \frac{(\lambda - \tau)^{n}}{n!} \sum_{k\in \north} \sum_{\alpha \in \mathcal{I}_{l} \cap \mathcal{I}_{k}} \sum_{\beta\in \mathcal{I}_{k}} ( K(l)^{n} )_{\gamma, \alpha} K(k)_{\alpha, \beta}\\
&\hspace{10em} \times T^{*}_{\tau} \frac{1}{2} \Big( \mathcal{E}_{\alpha}(k,l) c^{*}_{\beta}(k) + c^{*}_{\beta}(k) \mathcal{E}_{\alpha}(k,l) \Big)^{\natural} T_{\tau}\\
&\quad+ \int_{0}^{\lambda} \di t\, \frac{(\lambda - \tau)^{n_{0}-1}}{(n_{0} - 1)!} \sum_{\alpha\in \mathcal{I}_{l}} ( K(l)^{n_{0}} )_{\gamma,\alpha} T^{*}_{\tau} c^{\natural}_{\alpha}(l) T_{\tau} \\
& \quad - \sum_{\alpha \in \mathcal{I}_{l}} \sum_{n = n_{0}}^{\infty} \frac{\lambda^{n} (K(l)^{n})_{\gamma,\alpha}}{n!} c^{\natural}_{\alpha}(l)\;.
\end{align*} 
The symbol $\natural$ as superscript is used as an abbreviation to indicate either the operator itself or its adjoint (namely $A^\natural$ may be either $A^*$ or $A$), where the choice between the two options does not play a role for the further estimates. \Cref{lem:K} together with the bound $\| K(k) \|_\textnormal{op} \leq C$ implies the estimate $| (K(l)^n)_{\gamma,\alpha}| \leq C^n M^{-1}$, valid without any smallness condition on the potential and for all $n \in \Nbb \setminus \{0\}$. Then, proceeding as in \cite[Eq.~(7.8)--(7.9)]{BNPSS2} we get:
\begin{align*}
& \sum_{\gamma \in \mathcal{I}_{l}} \| \Ncal^{m} \mathfrak{E}_{\gamma}^{*}(\lambda,l) \zeta\| \\
& \leq \sum_{n=0}^{n_{0}-1} \frac{C^{n}}{n! M} \int_{0}^{\lambda} \di\tau\, \sum_{k\in \north} \sum_{\substack{\alpha \in \mathcal{I}_{l} \cap \mathcal{I}_{k}\\\beta\in \mathcal{I}_{k}}} \Big( \| \Ncal^{m} T^{*}_{\tau} \mathcal{E}_{\alpha}(k,l) c^{*}_{\beta} T_{\tau} \zeta \| + \| \Ncal^{m} T^{*}_{\tau} c^{*}_{\beta}(k) \mathcal{E}_{\alpha}(k,l) T_{\tau} \zeta \| \\
&\hspace{13em} + \| \Ncal^{m} T^{*}_{\tau}c_{\beta}(k) \mathcal{E}^{*}_{\alpha}(k,l) T_{\tau} \zeta \| + \| \Ncal^{m} T^{*}_{\tau} \mathcal{E}^{*}_{\alpha}(k,l) c_{\beta}(k) T_{\tau} \zeta \| \Big)\\
&\quad + \frac{C^{n_{0}}}{(n_{0} - 1)!} \int_{0}^{\lambda} \di\tau\, \sum_{\alpha\in \mathcal{I}_{l}} \Big( \| \Ncal^{m} T^{*}_{\tau} c_{\alpha}(l) T_{\tau} \zeta \| + \| \Ncal^{m} T^{*}_{\tau} c^{*}_{\alpha}(l) T_{\tau} \zeta \| \Big)\\
&\quad + \sum_{n=n_{0}}^{\infty} \frac{C^{n}}{n!} \sum_{\alpha \in \mathcal{I}_{l}} \Big( \| \Ncal^{m} c_{\alpha}(l) \zeta \| + \| \Ncal^{m} c^{*}_{\alpha}(l) \zeta \| \Big)\;.
\end{align*}
By the stability of $\Ncal^{m}$ under conjugation with $T_{\tau}$ (see \cref{lem:stabN}), we get
\begin{displaymath}
\norm{ \Ncal^{m} T^{*}_{\tau} \zeta } \leq C_{m} \norm{ (\Ncal + 1)^{m} \zeta }\;,\qquad \forall \zeta \in \mathcal{F}\;.
\end{displaymath}
For the next step, recall that $\mathcal{E}_{\alpha}(k,l)$ commutes with the number operator, and since $c_{\alpha}(k)$ and $c_{\alpha}^{*}(k)$ annihilate and create a pair of fermions, respectively, we have $c_\alpha(k) \Ncal = (\Ncal-2)c_\alpha(k)$. If we use $\Ncal^{m} \leq (\Ncal + 1)^{m}$ before, we thus get
\begin{align*}
&\sum_{\gamma \in \mathcal{I}_{l}} \| \Ncal^{m} \mathfrak{E}_{\gamma}^{*}(\lambda,l) \zeta\|  \\
& \leq \sum_{n=0}^{n_{0}-1}\! \frac{C^{n}}{n! M}\! \int_{0}^{\lambda}\!\! \di\tau\!\! \sum_{k\in \north}\! \sum_{\substack{\alpha \in \mathcal{I}_{l} \cap \mathcal{I}_{k}\\\beta\in \mathcal{I}_{k}}}\!\!\! \Big( \| \mathcal{E}_{\alpha}(k,l) c^{*}_{\beta} (\Ncal + 3)^{m}T_{\tau} \zeta \| + \| c^{*}_{\beta}(k) \mathcal{E}_{\alpha}(k,l) (\Ncal + 3)^{m}T_{\tau} \zeta \| \\
&\hspace{14em}  + \| c_{\beta}(k)\mathcal{E}^{*}_{\alpha}(k,l) \Ncal^{m} T_{\tau} \zeta \| + \|  \mathcal{E}^{*}_{\alpha}(k,l) c_{\beta}(k) \Ncal^{m} T_{\tau} \zeta \|\Big)\\
&\quad + \frac{C^{n_{0}}}{(n_{0} - 1)!} \int_{0}^{\lambda} \di\tau\, \sum_{\alpha\in \mathcal{I}_{l}} \Big( \| c_{\alpha}(l) \Ncal^{m} T_{\tau} \zeta \| + \| c^{*}_{\alpha}(l) (\Ncal + 3)^{m} T_{\tau} \zeta \| \Big)\\
&\quad + \sum_{n=n_{0}}^{\infty} \frac{C^{n}}{n!} \sum_{\alpha \in \mathcal{I}_{l}} \Big( \| c_{\alpha}(l) \Ncal^{m} \zeta \| + \| c^{*}_{\alpha}(l) (\Ncal + 3)^{m} \zeta \| \Big)\;.
\end{align*} 
From this point, we proceed exactly as in the proof of \cite[Lemma~7.1]{BNPSS2}. Let us sketch the proof for completeness.  Using \cref{lem:bosbd} 
we can bound the operators in the last two lines; then taking $n_0\to \infty$ we obtain 
\begin{align*}
&\sum_{\gamma \in \mathcal{I}_{l}} \| \Ncal^{m} \mathfrak{E}_{\gamma}^{*}(\lambda,l) \zeta\|  \\
& \leq CM^{-1} \int_{0}^{\lambda}\!\! \di\tau\!\! \sum_{k\in \north}\! \sum_{\substack{\alpha \in \mathcal{I}_{l} \cap \mathcal{I}_{k}\\\beta\in \mathcal{I}_{k}}}\!\!\! \Big( \| \mathcal{E}_{\alpha}(k,l) c^{*}_{\beta} (\Ncal + 3)^{m}T_{\tau} \zeta \| + \| c^{*}_{\beta}(k) \mathcal{E}_{\alpha}(k,l) (\Ncal + 3)^{m}T_{\tau} \zeta \| \\
&\hspace{14em}  + \| c_{\beta}(k)\mathcal{E}^{*}_{\alpha}(k,l) \Ncal^{m} T_{\tau} \zeta \| + \|  \mathcal{E}^{*}_{\alpha}(k,l) c_{\beta}(k) \Ncal^{m} T_{\tau} \zeta \|\Big)\\
&=: I_1+ I_2 + I_3 +I_4
\end{align*} 
which is similar to \cite[Eq.~(7.10)]{BNPSS2}. By proceeding similarly to \cite[Eq. (7.11)]{BNPSS2} (replacing $T_\tau \Psi$ by $(\Ncal+3)^mT_\tau \Psi$ and using $\Ncal_\delta \le \Ncal$, accordingly), we have
\begin{align*}
I_1 &\le \sup_{\tau\in [-1,1]} CM^{-1}  \sum_{k\in \north}\! \sum_{\substack{\alpha \in \mathcal{I}_{l} \cap \mathcal{I}_{k}\\\beta\in \mathcal{I}_{k}}}\!\!\!   \| \mathcal{E}_{\alpha}(k,l) c^{*}_{\beta} (\Ncal + 3)^{m}T_{\tau} \zeta \| \\
&\le  \sup_{\tau\in [-1,1]} CM N^{-\frac 2 3 +\delta}    \|  (\Ncal +M)^{1/2} (\Ncal +1) (\Ncal + 3)^{m} T_{\tau} \zeta \| \\
&\le \sup_{\tau\in [-1,1]} CM^{\frac 3 2} N^{-\frac 2 3 +\delta}    \|  (\Ncal +1)^{m+\frac 3 2}  T_{\tau} \zeta \| \\
&\le CM^{\frac 3 2} N^{-\frac 2 3 +\delta}    \|  (\Ncal +1)^{m+\frac 3 2}  \zeta \|. 
 \end{align*} 
In the last estimate we have used the stability of $\Ncal^{m}$ under conjugation with $T_{\tau}$ (recall \cref{lem:stabN}). Next, by proceeding similarly to the argument leading to \cite[Eq.~(7.14)]{BNPSS2} (replacing $T_\tau \Psi$ by $(\Ncal+3)^mT_\tau \Psi$ and using $\Ncal_\delta \le \Ncal$ again), we find that
\begin{align*}
I_2 &\le   \sup_{\tau\in [-1,1]} CM N^{-\frac 2 3 +\delta}    \|  (\Ncal +M)^{1/2} (\Ncal +1) (\Ncal + 3)^{m} T_{\tau} \zeta \| \\
&\le \sup_{\tau\in [-1,1]} CM^{\frac 3 2} N^{-\frac 2 3 +\delta}    \|  (\Ncal +1)^{m+\frac 3 2}  T_{\tau} \zeta \| \\
&\le CM^{\frac 3 2} N^{-\frac 2 3 +\delta}    \|  (\Ncal +1)^{m+\frac 3 2}  \zeta \|. 
 \end{align*} 
The error terms $I_3$ and $I_4$ can be treated by the same way. The proof of \cref{prp:NkE} is complete. 
\end{proof}

\section{Reformulation of Norm Approximation} \label{sec:refo}

In this section, we reduce Theorem \ref{thm:1} to an appropriate version of the key approximation \eqref{eq:corr-diag}. We define
\[
\mathfrak{H}_{\textnormal{exc}}:= \sum_{k\in \north} 2\hbar \kappaf \lvert k\rvert \sum_{\alpha, \beta \in \mathcal{I}_{k}} \mathfrak{K}(k)_{\alpha, \beta} c^{*}_{\alpha}(k) c_{\beta}(k)\;.
\]

\begin{lem}[Reduction to norm approximation of the Hamiltonian] Let $\xi_t$ be defined as in \eqref{eq:def-xi-t}. Then for all $t\ge 0$ 
\begin{align} \label{eq:norm-red-0}
&\norm{ e^{-i\mathcal{H}_{N} t/\hbar} R T \xi - e^{-i (E^{\textnormal{pw}}_{N} + \widetilde E_N^\textnormal{RPA})t/\hbar} R T \xi_{t} }  \nonumber \\
&\le  \frac{1}{\hbar} \int_{0}^{t} \big\| (T^*\mathcal{H}_\textnormal{corr} T - \widetilde E_N^\textnormal{RPA} - \mathfrak{H}_{\textnormal{exc}})  \xi_{s}  \big\| \di s + C \frac{m^2 \sqrt{(2m-1)!!}}{Z_{m}} M^{\frac{3}{2}} N^{-\frac{2}{3} + \delta} t\;. 
\end{align}
\end{lem}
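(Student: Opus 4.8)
\emph{Step 1 (reduction to the correlation Hamiltonian and Duhamel).}
Since $\Hcal_N = R(\Hcorr + E^{\textnormal{pw}}_N)R^{*}$ with $R$ unitary, we have $e^{-i\Hcal_N t/\hbar} = R\, e^{-i(\Hcorr+E^{\textnormal{pw}}_N)t/\hbar}R^{*}$; pulling out the scalar phase $e^{-iE^{\textnormal{pw}}_N t/\hbar}$ and using $R^{*}R = \id$ shows that the left--hand side of \cref{eq:norm-red-0} equals $\norm{ e^{-i\Hcorr t/\hbar}T\xi - e^{-i\widetilde E_N^\textnormal{RPA} t/\hbar}T\xi_t }$. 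We set $\eta_s := e^{-i\widetilde E_N^\textnormal{RPA}s/\hbar}T\xi_s$, so that $\eta_0 = T\xi$; the map $s\mapsto\xi_s$ is norm--differentiable and $\eta_s$ lies in the domain of $\Hcorr$ (only fermionic modes within distance $\diam\supp\hat V$ of the Fermi surface occur, so $\mathbb{H}_0 T\xi_s$ is bounded by $C\hbar\norm{\Ncal T\xi_s}$ and the remaining terms of $\Hcorr$ are controlled by powers of $\Ncal$). A standard Duhamel argument, $\frac{d}{ds}\big(e^{is\Hcorr/\hbar}\eta_s\big) = \tfrac{i}{\hbar}e^{is\Hcorr/\hbar}\big(\Hcorr - i\hbar\partial_s\big)\eta_s$, then gives, after integration in $s$ and using unitarity of $e^{-i\Hcorr t/\hbar}$,
\begin{equation*}
\norm{ e^{-i\Hcorr t/\hbar}T\xi - e^{-i\widetilde E_N^\textnormal{RPA} t/\hbar}T\xi_t } \;\le\; \frac{1}{\hbar}\int_0^t \norm{ \big(\Hcorr - i\hbar\partial_s\big)\eta_s }\,ds\;.
\end{equation*}

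\emph{Step 2 (identifying the generator of $\xi_s$).}
By the Leibniz rule and $i\hbar\partial_s\varphi_{j;s} = H_\textnormal{B}\varphi_{j;s}$ we have $i\hbar\partial_s\xi_s = \tfrac{1}{Z_m}\sum_{j=1}^{m} c^{*}(\varphi_{1;s})\cdots c^{*}(H_\textnormal{B}\varphi_{j;s})\cdots c^{*}(\varphi_{m;s})\Omega$. Since $c^{*}_\alpha(k)$ and $c^{*}_\beta(l)$ commute exactly and $[c_\beta(k),c^{*}_\gamma(l)] = \delta_{\beta,\gamma}(\delta_{k,l} + \mathcal{E}_\beta(k,l))$ by \cref{eq:approxCCR}, so that $[c^{*}_\alpha(k)c_\beta(k),c^{*}(\varphi)] = c^{*}_\alpha(k)\big((\varphi(k))_\beta + \sum_l(\varphi(l))_\beta\mathcal{E}_\beta(k,l)\big)$, a short computation yields the operator identity
\begin{equation*}
[\,\mathfrak{H}_{\textnormal{exc}},c^{*}(\varphi)\,] = c^{*}(H_\textnormal{B}\varphi) + \mathfrak{e}(\varphi)\;,\qquad \mathfrak{e}(\varphi) := \sum_{k\in\north} 2\hbar\kappaf|k| \sum_{\alpha,\beta\in\Ik}\mathfrak{K}(k)_{\alpha,\beta}\, c^{*}_\alpha(k) \sum_{l\in\north}(\varphi(l))_\beta\, \mathcal{E}_\beta(k,l)\;.
\end{equation*}
As $\mathfrak{H}_{\textnormal{exc}}\Omega = 0$, moving $\mathfrak{H}_{\textnormal{exc}}$ through the product gives $\mathfrak{H}_{\textnormal{exc}}\xi_s = \tfrac{1}{Z_m}\sum_{j=1}^{m} c^{*}(\varphi_{1;s})\cdots[\mathfrak{H}_{\textnormal{exc}},c^{*}(\varphi_{j;s})]\cdots c^{*}(\varphi_{m;s})\Omega$. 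Subtracting the two displays, using $\Hcorr\eta_s = e^{-i\widetilde E_N^\textnormal{RPA}s/\hbar}\Hcorr T\xi_s$ and unitarity of $T$ and of the scalar phase, we obtain
\begin{equation*}
\norm{ \big(\Hcorr - i\hbar\partial_s\big)\eta_s } \;\le\; \norm{ \big(T^{*}\Hcorr T - \widetilde E_N^\textnormal{RPA} - \mathfrak{H}_{\textnormal{exc}}\big)\xi_s } + \norm{\mathcal{R}_s}\;,
\end{equation*}
with $\mathcal{R}_s := \tfrac{1}{Z_m}\sum_{j=1}^{m} c^{*}(\varphi_{1;s})\cdots c^{*}(\varphi_{j-1;s})\, \mathfrak{e}(\varphi_{j;s})\, c^{*}(\varphi_{j+1;s})\cdots c^{*}(\varphi_{m;s})\Omega$.

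\emph{Step 3 (estimate of $\mathcal{R}_s$ and conclusion).}
It remains to show $\norm{\mathcal{R}_s} \le C\,\tfrac{m^2\sqrt{(2m-1)!!}}{Z_m}\, \hbar\, M^{3/2}N^{-2/3+\delta}$; inserting this and the previous displays into Step 1 then makes the factor $\hbar$ cancel $\hbar^{-1}$ and turns $\int_0^t ds$ into $t$, which is exactly \cref{eq:norm-red-0}. For fixed $j$ write $\Psi_j := c^{*}(\varphi_{j+1;s})\cdots c^{*}(\varphi_{m;s})\Omega$, which lies in the sector $\Ncal = 2(m-j)$. Using $|\north\cap\supp\hat V| \le C_V$ and $2\hbar\kappaf|k|\le C\hbar$, the uniform operator bound $\norm{\mathfrak{K}(k)}\OP \le C$ (which follows from \cref{eq:curlyK} and the positivity of $\hat V$, forcing $\D(k)+\W(k)\pm\Wt(k)\ge 0$ and hence $\norm{E(k)}\OP\le C$), the pair--operator bounds of \cref{lem:bosbd}, and the error estimates of \cref{lem:approxCCR} (applied with $\Ncal\Psi_j = 2(m-j)\Psi_j$), one bounds $\norm{\mathfrak{e}(\varphi_{j;s})\Psi_j}$ by $C\,\hbar\,M^{3/2}N^{-2/3+\delta}$ times a power of $m$ times $\norm{\Psi_j}$. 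Applying then successively $c^{*}(\varphi_{j-1;s}),\dots,c^{*}(\varphi_{1;s})$, which raise the number operator from $2(m-j)+2$ up to $2m$, and iterating \cref{eq:c*-f} with $\norm{\varphi_{i;s}} = 1$, the telescoping product of number--operator factors contributes $\prod_{i=0}^{m-1}(2i+1)^{1/2} = \sqrt{(2m-1)!!}$. Summing over $j=1,\dots,m$ and dividing by $Z_m$ gives the asserted bound on $\norm{\mathcal{R}_s}$.

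\emph{Main difficulty.}
The only genuinely delicate step is the estimate of $\mathcal{R}_s$ in Step 3: one has to propagate the fermion number carefully through the $m$--fold product of (approximately bosonic) creation operators in order to extract simultaneously the combinatorial factor $m^2\sqrt{(2m-1)!!}/Z_m$ and the smallness $\hbar M^{3/2}N^{-2/3+\delta}$ (the latter coming from the prefactor $2\hbar\kappaf|k|$ together with the approximate--CCR error $\mathcal{E}_\alpha(k,l)$), and one needs the uniform operator bound on $\mathfrak{K}(k)$, which rests on $\hat V\ge 0$. Everything else is routine manipulation of the Duhamel formula and of the approximate commutation relations.
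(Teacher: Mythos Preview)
Your proof is correct and follows essentially the same approach as the paper: Duhamel reduction to $\hbar^{-1}\int_0^t\|(T^*\Hcorr T-\widetilde E_N^{\textnormal{RPA}})\xi_s-i\hbar\partial_s\xi_s\|\,\di s$, the commutator identity $[\mathfrak{H}_{\textnormal{exc}},c^*(\varphi)]=c^*(H_{\textnormal{B}}\varphi)+\mathfrak{e}(\varphi)$, and an estimate of the residual $\mathcal{R}_s$ via \cref{lem:approxCCR} and \cref{lem:bosbd}. One minor variation is that you invoke the operator--norm bound $\|\mathfrak{K}(k)\|_{\textnormal{op}}\le C$ (which is valid, from \cref{eq:curlyK} and \cref{eq:DWWbounds}) where the paper uses the cruder $\|\mathfrak{K}(k)_\beta\|_{\ell^2}\le\|\mathfrak{K}(k)\|_{\textnormal{HS}}\le C\sqrt{M}$; this does not change the final $M^{3/2}$ exponent, and your Step~3 combinatorics---though telegraphic, since the full $\sqrt{(2m-1)!!}$ arises only after combining the number--operator factors from the $c^*$'s preceding $\mathfrak{e}(\varphi_{j;s})$ with those from $\|\Psi_j\|$, and the $m^2$ from $\sum_{j}2(m-j)$---yields the stated bound.
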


\begin{proof} We start by writing, using the unitarity of $e^{-i\mathcal{H}_{N} t/\hbar}$, $R$, and $T$, 
\begin{align} \label{eq:norm-red-1}
&\norm{ e^{-i\mathcal{H}_{N} t/\hbar} R T \xi - e^{-i (E^{\textnormal{pw}}_{N} + \widetilde E_N^\textnormal{RPA})t/\hbar} R T \xi_{t} } \nonumber\\
& = \Big\|  \int_{0}^{t} \di s\, \frac{\di}{\di s} \Big( e^{i (\mathcal{H}_{N} - E^{\textnormal{pw}}_{N} - \widetilde E_N^\textnormal{RPA})s/\hbar} R T \xi_{s} \Big) \Big\|\nonumber\\
&\le \frac{1}{\hbar} \int_{0}^{t}\di s\, \big\| (\mathcal{H}_{N} - E^{\textnormal{pw}}_{N} - \widetilde E_N^\textnormal{RPA}) R T \xi_{s} - R T (i\hbar \partial_s \xi_{s} )  \big\| \nonumber \\
&= \frac{1}{\hbar} \int_{0}^{t}\di s\, \big\| T^* R^* (\mathcal{H}_{N} - E^{\textnormal{pw}}_{N} - \widetilde E_N^\textnormal{RPA}) RT \xi_{s} -  i\hbar \partial_s \xi_{s}   \big\|\nonumber\\
&= \frac{1}{\hbar} \int_{0}^{t}\di s\, \big\|  \left(T^* \mathcal{H}_\textnormal{corr} T  - \widetilde E_N^\textnormal{RPA} \right) \xi_{s} -  i\hbar \partial_s \xi_{s}  \big\| \;.
\end{align}
 From  \eqref{eq:def-xi-t} we have 
\begin{equation}
i\hbar \partial_s \xi_{s} = \frac{1}{Z_{m}}\sum_{i=1}^{m} c^{*}(\varphi_{1;s}) \cdots c^{*}(\varphi_{i-1;s}) c^{*}(H_{\textnormal{B}} \varphi_{i;s}) c^{*}(\varphi_{i+1;s}) \cdots c^{*}(\varphi_{m;s}) \Omega\;,
\end{equation}
where $H_{\textnormal{B}} = \bigoplus_{k\in \north} 2 \hbar\kappaf |k| \mathfrak{K}(k)$. Using that $\mathfrak{H}_{\textnormal{exc}} \Omega = 0$, we get
\begin{align}
\mathfrak{H}_{\textnormal{exc}} \xi_{s} &= \frac{1}{Z_{m}}\mathfrak{H}_{\textnormal{exc}} c^{*}(\varphi_{1;s}) \cdots c^{*}(\varphi_{m;s}) \Omega = \frac{1}{Z_{m}}[ \mathfrak{H}_{\textnormal{exc}}, c^{*}(\varphi_{1;s}) \cdots c^{*}(\varphi_{m;s})] \Omega \nonumber\\
&= \frac{1}{Z_{m}} \sum_{i=1}^{m} c^{*}(\varphi_{1;s}) \cdots c^{*}(\varphi_{i-1;s}) [ \mathfrak{H}_{\textnormal{exc}}, c^{*}(\varphi_{i;s}) ] c^{*}(\varphi_{i+1;s}) \cdots c^{*}(\varphi_{m;s}) \Omega\;.\nonumber
\end{align}
The commutator equals
\begin{equation}\label{eq:Hc}
[ \mathfrak{H}_{\textnormal{exc}}, c^{*}(\varphi_{i;s}) ]  = \sum_{k\in \north} 2\hbar \kappaf \lvert k\rvert \sum_{\alpha, \beta \in \mathcal{I}_{k}}  \mathfrak{K}(k)_{\alpha,\beta} c^{*}_{\alpha}(k) [c_{\beta}(k), c^{*}(\varphi_{i;s})]\;,
\end{equation}
where according to \cref{eq:approxCCR} we have
\begin{align}
[c_{\beta}(k), c^{*}(\varphi_{i;s})] &= \sum_{l\in \north} \sum_{\gamma \in \mathcal{I}_{l}} (\varphi_{i;s}(l))_{\gamma} [ c_{\beta}(k), c^{*}_{\gamma}(l) ] \nonumber\\
&= (\varphi_{i;s}(k))_{\beta} + \sum_{l\in \north} (\varphi_{i;s}(l))_{\beta} \mathcal{E}_{\beta}(k,l) \chi(\beta \in \mathcal{I}_{k} \cap \mathcal{I}_{l})\;.
\end{align}
Therefore
\begin{align}\label{eq:Hcstar}
&[ \mathfrak{H}_{\textnormal{exc}}, c^{*}(\varphi_{i;s}) ] = c^{*}(H_{\textnormal{B}} \varphi_{i;s}) \\
&\quad+ \sum_{k\in \north} 2\hbar \kappaf |k| \sum_{\alpha, \beta \in \mathcal{I}_{k}}  \mathfrak{K}(k)_{\alpha,\beta} c^{*}_{\alpha}(k) \sum_{l\in \north} (\varphi_{i;s}(l))_{\beta} \mathcal{E}_{\beta}(k,l) \chi(\beta \in \mathcal{I}_{k} \cap \mathcal{I}_{l})\;.\nonumber
\end{align}
The first term in \cref{eq:Hcstar} is precisely what we need to reconstruct $i\hbar \partial_s \xi_{s}$. Hence
\begin{align}\label{eq:bosdyn}
 \| \mathfrak{H}_{\textnormal{exc}} \xi_{s} - i \hbar \partial_s \xi_{s} \| & \leq \frac{1}{Z_{m}} \sum_{i = 1}^{m}  \sum_{k\in \north} 2\hbar \kappaf |k| \sum_{\beta \in \mathcal{I}_{k}}  \sum_{l\in \north} |(\varphi_{i;s}(l))_{\beta}| \chi(\beta \in \mathcal{I}_{k} \cap \mathcal{I}_{l})\\
&\ \ \times \| c^{*}(\varphi_{1;s}) \cdots c^{*}(\varphi_{i-1;s}) c^{*}( \mathfrak{K}(k)_{\beta}) \mathcal{E}_{\beta}(k,l) c^{*}(\varphi_{i+1;s}) \cdots c^{*}(\varphi_{m;s}) \Omega\|\,, \nonumber
\end{align}
with $\Kfrak(k)_\beta$ for fixed $\beta$ being the $\Cbb^{\lvert \Ik\rvert}$--vector with elements $\Kfrak(k)_{\alpha,\beta}$. We then estimate
\begin{align}\label{eq:bosdyn2}
&\| c^{*}(\varphi_{1;s}) \cdots c^{*}(\varphi_{i-1;s}) c^{*}( \mathfrak{K}(k)_{\beta}) \mathcal{E}_{\beta}(k,l) c^{*}(\varphi_{i+1;s}) \cdots c^{*}(\varphi_{m;s}) \Omega\|\nonumber\\
&\quad \leq \| \prod_{j=1}^{i-1} (\Ncal + 1 + 2(i - 1 -j))^{1/2} c^{*}( \mathfrak{K}(k)_{\beta}) \mathcal{E}_{\beta}(k,l) c^{*}(\varphi_{i+1;s}) \cdots c^{*}(\varphi_{m;s}) \Omega\| \nonumber\\
&\quad \leq \prod_{j=1}^{i-1} (2 (m - j) + 1)^{\frac{1}{2}} \| c^{*}( \mathfrak{K}(k)_{\beta}) \mathcal{E}_{\beta}(k,l) c^{*}(\varphi_{i+1;s}) \cdots c^{*}(\varphi_{m;s}) \Omega\|\;.
\end{align}
Using the crude bound $\norm{\Kfrak(k)_\beta}_{\ell^2} \leq \norm{\Kfrak(k)}\HS \leq C \sqrt{M}$ (which follows from \cref{eq:curlyKexplicit} with \cref{eq:DWWbounds} and \cref{eq:Kbound}), and recalling that $\Ecal_\beta(k,l)$ commutes with $\Ncal$, we estimate
\begin{align}\label{eq:sumbeta}
&\sum_{\beta \in \mathcal{I}_{k} \cap \mathcal{I}_{l}} |(\varphi_{i;s}(l))_{\beta}| \| c^{*}( \mathfrak{K}(k)_{\beta}) \mathcal{E}_{\beta}(k,l) c^{*}(\varphi_{i+1;s}) \cdots c^{*}(\varphi_{m;s}) \Omega\| \nonumber\\
&\quad \leq C M^{\frac{1}{2}}\sum_{\beta \in \mathcal{I}_{k} \cap \mathcal{I}_{l}} |(\varphi_{i;s}(l))_{\beta}| \| (\Ncal + 1)^{\frac{1}{2}} \mathcal{E}_{\beta}(k,l) c^{*}(\varphi_{i+1;s}) \cdots c^{*}(\varphi_{m;s}) \Omega\| \nonumber\\
&\quad \leq C M^{\frac{1}{2}} \Big(\sum_{\beta \in \mathcal{I}_{k} \cap \mathcal{I}_{l}} \| \mathcal{E}_{\beta}(k,l) (\Ncal + 1)^{\frac{1}{2}} c^{*}(\varphi_{i+1;s}) \cdots c^{*}(\varphi_{m;s}) \Omega\|^{2} \Big)^{\frac{1}{2}}\;,
\end{align}
where in the last step we used the Cauchy--Schwarz inequality. With \eqref{eq:ptwisebdE-11} we get
\begin{align}\label{eq:bosdyn3}
&\sum_{\beta \in \mathcal{I}_{k} \cap \mathcal{I}_{l}} |(\varphi_{i;s}(l))_{\beta}| \| c^{*}( \mathfrak{K}(k)_{\beta}) \mathcal{E}_{\beta}(k,l) c^{*}(\varphi_{i+1;s}) \cdots c^{*}(\varphi_{m;s}) \Omega\| \nonumber\\
&\qquad \leq C M^{\frac{3}{2}} N^{-\frac{2}{3} + \delta} \| \Ncal (\Ncal+1)^{1/2} c^{*}(\varphi_{i+1;s}) \cdots c^{*}(\varphi_{m;s})\Omega \| \nonumber\\
&\qquad \leq C M^{\frac{3}{2}} N^{-\frac{2}{3} + \delta} (2(m - i) + 1)^{\frac{3}{2}} \| c^{*}(\varphi_{i+1;s}) \cdots c^{*}(\varphi_{m;s})\Omega \| \nonumber\\
&\qquad \leq C M^{\frac{3}{2}} N^{-\frac{2}{3} + \delta} (2(m - i) + 1)^{\frac{3}{2}} \prod_{j=0}^{m-i-1} (2j + 1)^{\frac{1}{2}}\;.
\end{align}
Combining \cref{eq:bosdyn}, \cref{eq:bosdyn2}, and \cref{eq:bosdyn3}, we obtain 
\begin{align}\label{eq:e5}
 \| i\hbar \partial_s \xi_s - \mathfrak{H}_{\textnormal{exc}} \xi_s  \| \leq C \frac{m^2\sqrt{(2m-1)!!}}{Z_{m}} \hbar M^{\frac{3}{2}} N^{-\frac{2}{3} + \delta}\;. 
\end{align}
From \eqref{eq:norm-red-1} and \eqref{eq:e5}, we obtain \eqref{eq:norm-red-0} by the triangle inequality.  
\end{proof}

Next, we estimate the constant $Z_m$, defined such that $\xi$ in \cref{eq:xi-def-ini} is normalized, by comparing it to $Z_{\textnormal{B};m}$, which would be its value if the $c^*$--operators were exactly bosonic. 

\begin{lem}[Estimate for $Z_m$]\label{prp:norm}
Let the functions $\varphi_i$ be normalized, as in \cref{eq:139}. Let
\begin{equation}\label{eq:ZB}
Z_{\textnormal{B};m} := \sqrt{\sum_{\pi \in S_{m}} \prod_{i=1}^{m} \langle \varphi_{i}, \varphi_{\pi(i)} \rangle}\;,
\end{equation}
where $S_{m}$ is the set of permutations of $\{1,2,3,\ldots, m\}$. Then $1\le Z^{2}_{\textnormal{B};m} \le m!$ and 
\begin{equation}\label{eq:ZB2}
\Big| Z^{2}_{m} - Z^{2}_{\textnormal{B};m}\Big| \leq C M N^{-\frac{2}{3} + \delta} m^3 (2m-1)!!\;.
\end{equation}
\end{lem}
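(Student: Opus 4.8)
### Proof proposal

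\textbf{Overall strategy.} The plan is to expand $Z_m^2 = \langle c^*(\varphi_1)\cdots c^*(\varphi_m)\Omega,\, c^*(\varphi_1)\cdots c^*(\varphi_m)\Omega\rangle$ by repeatedly moving annihilation operators to the right past creation operators, using the approximate CCR from \cref{lem:approxCCR}. Each time we commute $c(\varphi_i)$ past a $c^*(\varphi_j)$ we pick up the ``bosonic'' term $\langle \varphi_i,\varphi_j\rangle$ plus an error term built from the $\Ecal_\beta(k,l)$ operators. Collecting all the purely bosonic contributions reproduces exactly the permutation sum defining $Z_{\textnormal{B};m}^2$ in \cref{eq:ZB}; everything else must be shown to be bounded by $CMN^{-2/3+\delta}m^3(2m-1)!!$. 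The bound $1 \le Z_{\textnormal{B};m}^2 \le m!$ is elementary: the diagonal permutation gives $\prod_i \norm{\varphi_i}^2 = 1$, all summands $\prod_i\langle\varphi_i,\varphi_{\pi(i)}\rangle$ are nonnegative (this follows since the Gram matrix is positive semidefinite and the permanent of a PSD matrix is nonnegative), and each summand is at most $1$ by Cauchy--Schwarz, with $|S_m| = m!$.

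\textbf{Key steps.} First I would set up the combinatorial bookkeeping: writing $c(\varphi_i) = \sum_{k,\beta} \overline{(\varphi_i(k))_\beta}\, c_\beta(k)$ and using $c_\beta(k)\Omega = 0$, a standard induction (or direct Wick-type expansion) shows
\begin{align}
Z_m^2 = \sum_{\pi\in S_m}\prod_{i=1}^m \langle\varphi_i,\varphi_{\pi(i)}\rangle + \mathcal{R}_m\,,
\end{align}
where $\mathcal{R}_m$ is a sum of terms each containing at least one factor $\Ecal_\beta(k,l)$ arising from one application of the error part of \cref{eq:approxCCR}. Second, I would estimate a generic remainder term: it has the schematic form $\langle \eta_1, \eta_2\rangle$ where $\eta_2 = (\text{product of }c^*)\,\mathcal{E}_\beta(k,l)\,(\text{product of }c^*)\,\Omega$ contracted against a vector with at most $m$ creation operators. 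Using $\norm{c^*_\alpha(k)\zeta}\le M^{1/2}\norm{(\Ncal+M)^{1/2}\zeta}$ from \cref{eq:sumc}, the operator norm bound $\norm{a_p},\norm{a^*_p}\le 1$ to control the remaining factors crudely, the fact that $\Ncal$ acting on an $r$-fold creation state from $\Omega$ gives $\le (2r)$, and—crucially—the Cauchy--Schwarz estimate $\sum_{\beta\in\Ical_k\cap\Ical_l}\norm{\Ecal_\beta(k,l)\zeta}^2 \le C(MN^{-2/3+\delta}\Ncal)^2\norm{\zeta}^2$ from \cref{eq:ptwisebdE-11}, together with $\norm{\varphi_i}=1$ to sum over the $\beta$ (and $k,l$) indices, one gets a single factor $MN^{-2/3+\delta}$ from the $\Ecal$, a factor of the semifactorial type $\sqrt{(2m-1)!!}\cdot\sqrt{(2m-1)!!} = (2m-1)!!$ from the norms of the flanking strings of $c^*$'s on each side, and the explicit $M^{1/2}$ from $c^*(\Kfrak)$-type estimates being absent here—actually the vectors $\varphi_i$ are $\ell^2$-normalized so by \cref{eq:c*-f} each $c^*(\varphi_i)$ only costs $(\Ncal+1)^{1/2}$, giving the clean $(2m-1)!!$. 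Third, the number of such remainder terms (choices of which commutation produced the error, and over the $i$-index) is $O(m^3)$; multiplying gives $\mathcal{R}_m \le CMN^{-2/3+\delta}m^3(2m-1)!!$, which is \cref{eq:ZB2}.

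\textbf{Main obstacle.} The delicate point is the combinatorics and bookkeeping of the iterated commutation: when commuting $c_\beta(k)$ through $m$ creation operators, after the first error term is generated one still has a mixed string of $c$'s, $c^*$'s, and $\Ecal$'s, and to get a clean bound one wants to stop at the \emph{first} error term, absorb everything to its left by moving it back onto the vacuum-side vector and everything to its right by the number-operator estimates, rather than fully expanding. Organizing this so that (a) each remainder term carries exactly one $\Ecal$ factor (so only one power of $MN^{-2/3+\delta}$, not more—more would only help but the count of terms must stay $O(m^3)$), (b) the semifactorial powers come out as $(2m-1)!!$ and not worse, and (c) the indices $k,l,\beta$ are summed using $\norm{\varphi_i}=1$ and \cref{eq:ptwisebdE-11} with the right vector inserted in the middle—this is where care is needed. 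I expect this is handled exactly as the analogous expansion in \cite[Proposition~4.6]{BNPSS} or the proof of \cref{lem:stabN}, and once the single-error-term estimate is in place the rest is routine counting.
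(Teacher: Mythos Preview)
Your approach matches the paper's: iterate the approximate CCR to peel off one annihilation operator at a time, the scalar contributions assemble into the permanent $Z_{\textnormal{B};m}^2$, and each remainder term carrying a single $\Ecal_\alpha(k,l)$ is bounded using \cref{eq:c*-f} and \cref{lem:approxCCR} to produce $MN^{-2/3+\delta}(2m-1)!!$, with the $m^3$ coming from $m$ iterations, each generating a sum over $i$ of $m$ error terms, each weighted by an extra $\Ncal\sim m$ from the bound on $\Ecal_\alpha$. One small correction on the side claim: the individual summands $\prod_i\langle\varphi_i,\varphi_{\pi(i)}\rangle$ need not be nonnegative (or even real), so $Z_{\textnormal{B};m}^2\ge 1$ requires the permanent inequality $\operatorname{perm}(G)\ge\prod_i G_{ii}$ for positive semidefinite $G$ (equivalently, recognize $Z_{\textnormal{B};m}^2$ as a squared norm in an exactly bosonic Fock space), not merely nonnegativity of the permanent.
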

\begin{rem} The bounds $1\leq Z_{\textnormal{B};m}^{2} \leq m!$ are optimal. The case $Z_{\textnormal{B};m}^{2} = 1$ holds for an orthonormal set $(\varphi_{i})_{i=1}^m$, while $Z_{\textnormal{B};m}^{2} = m!$ holds for $\varphi_{1} = \varphi_{2} = \ldots = \varphi_{m}$.
\end{rem}
\begin{proof}[Proof of \cref{prp:norm}.]
By definition of $Z_{m}$, and using that $c(\varphi_{1})\Omega = 0$, we get
\begin{align}
Z_{m}^{2} &= \langle c^{*}(\varphi_{1}) \cdots c^{*}(\varphi_{m}) \Omega, c^{*}(\varphi_{1}) \cdots c^{*}(\varphi_{m}) \Omega \rangle \nonumber\\
&= \langle c(\varphi_{1}) c^{*}(\varphi_{1}) \cdots c^{*}(\varphi_{m}) \Omega, c^{*}(\varphi_{2}) \cdots c^{*}(\varphi_{m}) \Omega \rangle \nonumber\\
&= \langle [ c(\varphi_{1}), c^{*}(\varphi_{1}) \cdots c^{*}(\varphi_{m}) ] \Omega, c^{*}(\varphi_{2}) \cdots c^{*}(\varphi_{m}) \Omega \rangle\;.
\end{align}
Next we expand
\begin{equation}
[ c(\varphi_{1}), c^{*}(\varphi_{1}) \cdots c^{*}(\varphi_{m}) ] = \sum_{i=1}^{m} c^{*}(\varphi_{1}) \cdots c^{*}(\varphi_{i-1}) [c(\varphi_{1}), c^{*}(\varphi_{i})]  c^{*}(\varphi_{i+1}) \cdots c^{*}(\varphi_{m})\;,
\end{equation}
where
\begin{equation}\label{eq:comm1i}
[c(\varphi_{1}), c^{*}(\varphi_{i})] = \langle \varphi_{1}, \varphi_{i} \rangle + \sum_{k, l\in \north} \sum_{\alpha \in \mathcal{I}_{k} \cap \mathcal{I}_{l}} \overline{(\varphi_{1}(k))_{\alpha}} (\varphi_{i}(l))_{\alpha} \mathcal{E}_{\alpha}(k,l)\;.
\end{equation}
(The inner product here is defined in correspondence to the norm \cref{eq:139} as
$
 \langle \varphi_{1}, \varphi_{i} \rangle := \sum_{k\in \north} \sum_{\alpha \in \mathcal{I}_{k}} \overline{(\varphi_{1}(k))_{\alpha}} (\varphi_{i}(k))_{\alpha}$.)
The first term reproduces the commutation relation of exactly bosonic operators, while the second term is an error term. To estimate it, we use that, for any $\psi, \zeta \in \mathcal{F}$,
\begin{displaymath}
 \sum_{\alpha \in \mathcal{I}_{k} \cap \mathcal{I}_{l}} \lvert \overline{(\varphi_{1}(k))_{\alpha}} \rvert \lvert (\varphi_{i}(l))_{\alpha} \rvert \norm{ \mathcal{E}_{\alpha}(k,l) \psi} \norm{  \zeta } \leq \| \varphi_{1}(k) \|_{\ell^{2}} \| \varphi_{i}(l) \|_{\ell^{2}} \max_{\alpha \in \mathcal{I}_{k}\cap \mathcal{I}_{l}} \| \mathcal{E}_{\alpha}(k,l) \psi \| \|\zeta\|\;.
\end{displaymath}
Thanks to \cref{lem:approxCCR}, we have
\begin{equation}\label{eq:Ealpha}
\| \mathcal{E}_{\alpha}(k,l) \psi \| \leq CM N^{-\frac{2}{3} + \delta} \| \Ncal \psi \|\;.
\end{equation}
Hence
\begin{equation}\label{eq:Zm}
Z_{m}^{2} = \sum_{i=1}^{m} \langle \varphi_{1}, \varphi_{i} \rangle \langle c^{*}(\varphi_{1}) \cdots {c}^{*}(\varphi_{i-1}) {c}^{*}(\varphi_{i+1})\cdots c^{*}(\varphi_{m}) \Omega, c^{*}(\varphi_{2}) \cdots c^{*}(\varphi_{m}) \Omega \rangle + \mathfrak{r}_{1}
\end{equation}
with $\mathfrak{r}_{1}$ the contribution produced by the second term in \cref{eq:comm1i}, i.\,e.,
\begin{align} \label{eq:rfrak1}
\mathfrak{r}_{1} & = \sum_{i=1}^{m}\sum_{k, l\in \north} \sum_{\alpha \in \mathcal{I}_{k} \cap \mathcal{I}_{l}}\overline{(\varphi_{1}(k))_{\alpha}} (\varphi_{i}(l))_{\alpha}\\
&\quad\times \Big\langle c^{*}(\varphi_{1}) \cdots c^{*}(\varphi_{i-1}) \mathcal{E}_{\alpha}(k,l) c^{*}(\varphi_{i+1}) \cdots c^{*}(\varphi_{m}) \Omega, c^{*}(\varphi_{2}) \cdots c^{*}(\varphi_{m}) \Omega\Big\rangle\;.\nonumber
\end{align}
Using the estimate $\| c^{*}(\varphi_{i}) \psi \| \leq \| \varphi_{i} \|_{2} \| (\Ncal + 1)^{\frac{1}{2}} \psi \|$ following from \cref{lem:bosbd}, commuting $(\Ncal+1)^{1/2}$ with $\Ecal_\alpha(k,l)$, and using \cref{eq:Ealpha}, we get
\begin{align}\label{eq:estr0}
\lvert \mathfrak{r}_{1} \rvert &\leq m^{2}C M N^{-\frac{2}{3} + \delta} \norm{ (\Ncal + 1)^{\frac{1}{2}} \cdots (\Ncal + 2m - 1)^{\frac{1}{2}}  \Omega}^{2} \nonumber\\
&\leq  C M N^{-\frac{2}{3} + \delta} m^2(2m-1)!!\;.
\end{align}
We iterate the process $m$ times, starting from \cref{eq:Zm}, until we are left with $Z^{2}_{m} = Z^{2}_{\textnormal{B};m} + \mathfrak{r}_{m}$, with $\mathfrak{r}_{m}$ an error term. The error term $\mathfrak{r}_{m}$ is estimated by $m$ times the bound for $\mathfrak{r}_{1}$ in \cref{eq:estr0}, thus giving the bound \cref{eq:ZB2}.
\end{proof}

The time--dependent state $\xi_{s}$ is not necessarily normalized, but in the next lemma we show that its norm is close to $1$. 

\begin{lem}[Norm of $\xi_s$] \label{lem:norm-xi-s} For all $s\in \Rbb$ we have
\begin{equation}
\big|\| \xi_{s} \|^{2} - 1\big| \leq \frac{1}{Z_m^{2}}  C M N^{-\frac{2}{3} + \delta} m^3 (2m-1)!! \;.
\end{equation}
\end{lem}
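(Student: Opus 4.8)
The plan is to show that $\|\xi_s\|^2$ is close to $\|\xi\|^2 = Z_m^{-2} \cdot Z_m^2 = 1$ by repeating the expansion strategy of \cref{prp:norm} applied to the time-evolved functions $\varphi_{i;s}$ rather than to $\varphi_i$. The key observation is that the functions $\varphi_{i;s} = e^{-iH_{\textnormal{B}}s/\hbar}\varphi_i$ are still normalized, $\|\varphi_{i;s}\| = 1$, since $H_{\textnormal{B}}$ is self-adjoint (it is a direct sum of the real symmetric, hence self-adjoint, matrices $2\hbar\kappaf|k|\mathfrak{K}(k)$, cf.\ \cref{eq:curlyK}). Moreover $H_{\textnormal{B}}$ preserves each momentum sector $\ell^2(\Ik)$, so the structure of $\varphi_{i;s}$ as an element of $\bigoplus_{k\in\north}\ell^2(\Ik)$ is unchanged.

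First I would note that, by unitarity of $e^{-iH_{\textnormal{B}}s/\hbar}$ on the one-boson space, the ``bosonic'' normalization constant associated with the time-evolved functions,
\[
Z_{\textnormal{B};m}(s) := \sqrt{\sum_{\pi \in S_m} \prod_{i=1}^m \langle \varphi_{i;s}, \varphi_{\pi(i);s}\rangle}\;,
\]
satisfies $Z_{\textnormal{B};m}(s)^2 = Z_{\textnormal{B};m}^2$ for all $s$. This is because $\langle \varphi_{i;s}, \varphi_{j;s}\rangle = \langle \varphi_i, \varphi_j\rangle$, the inner product being preserved by the unitary one-particle evolution. Then I would run verbatim the argument in the proof of \cref{prp:norm}, now expanding $Z_m^2\|\xi_s\|^2 = \langle c^*(\varphi_{1;s})\cdots c^*(\varphi_{m;s})\Omega, c^*(\varphi_{1;s})\cdots c^*(\varphi_{m;s})\Omega\rangle$ by iterated commutation of $c(\varphi_{1;s})$ through the product. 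Each commutator $[c(\varphi_{1;s}), c^*(\varphi_{i;s})] = \langle \varphi_{1;s}, \varphi_{i;s}\rangle + (\text{error involving } \mathcal{E}_\alpha(k,l))$ via \cref{eq:approxCCR}; the leading terms reassemble to $Z_{\textnormal{B};m}(s)^2 = Z_{\textnormal{B};m}^2$, while the error terms are bounded exactly as $\mathfrak{r}_1, \ldots, \mathfrak{r}_m$ were, using $\|\varphi_{i;s}\|=1$, the bound $\|\mathcal{E}_\alpha(k,l)\psi\| \leq CMN^{-2/3+\delta}\|\Ncal\psi\|$ from \cref{lem:approxCCR}, and $\|c^*(\varphi_{i;s})\psi\| \leq \|(\Ncal+1)^{1/2}\psi\|$ from \cref{lem:bosbd}. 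This gives
\[
\big|Z_m^2\|\xi_s\|^2 - Z_{\textnormal{B};m}^2\big| \leq CMN^{-\frac{2}{3}+\delta} m^3(2m-1)!!\;.
\]

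Finally I would combine this with \cref{prp:norm}, which states $|Z_m^2 - Z_{\textnormal{B};m}^2| \leq CMN^{-2/3+\delta}m^3(2m-1)!!$, to get via the triangle inequality
\[
\big|Z_m^2\|\xi_s\|^2 - Z_m^2\big| = \big|Z_m^2(\|\xi_s\|^2 - 1)\big| \leq CMN^{-\frac{2}{3}+\delta}m^3(2m-1)!!\;,
\]
and dividing by $Z_m^2$ yields the claim. There is no serious obstacle here: the estimates are an exact replica of those in \cref{prp:norm}, and the only genuinely new input is the trivial fact that the one-particle dynamics is unitary and sector-preserving. The one point to double-check is that the combinatorial prefactors arising from pulling $(\Ncal+1)^{1/2}$-type factors through the string of $m-1$ creation operators are identical to those in \cref{eq:estr0}; they are, since they depend only on $m$ and not on which normalized functions appear.
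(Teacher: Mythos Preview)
Your proposal is correct and follows essentially the same approach as the paper's proof: both rerun the expansion of \cref{prp:norm} on the time-evolved functions, use unitarity of $e^{-iH_\textnormal{B}s/\hbar}$ to identify the leading term with $Z_{\textnormal{B};m}^2$, and bound the error $\mathfrak{r}_{m;s}$ identically to $\mathfrak{r}_m$. You actually spell out the final triangle-inequality step (combining with \cref{prp:norm} to pass from $Z_{\textnormal{B};m}^2$ to $Z_m^2$) more explicitly than the paper does.
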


\begin{proof}
We have
\begin{equation}
\| \xi_{s} \|^{2} = \frac{1}{Z_{m}^{2}} \langle c^{*}(\varphi_{1;s}) \cdots c^{*}(\varphi_{m;s}) \Omega, c^{*}(\varphi_{1;s}) \cdots c^{*}(\varphi_{m;s}) \Omega \rangle\;.
\end{equation}
Proceeding as in the proof of \cref{prp:norm}, with $\rfrak_{m;s}$ defined in analogy to \cref{eq:rfrak1} using the evolved $\varphi_{j;s}$, we have
\begin{align}
\langle c^{*}(\varphi_{1;s}) \cdots c^{*}(\varphi_{m;s}) \Omega, c^{*}(\varphi_{1;s}) \cdots c^{*}(\varphi_{m;s}) \Omega \rangle = \sum_{\pi \in S_{m}} \prod_{i=1}^{m} \langle \varphi_{i;s}, \varphi_{\pi(i);s} \rangle + \mathfrak{r}_{m;s}\;.
\end{align}
By unitarity of the dynamics generated by $H_{\textnormal{B}}$, the first term is precisely $Z_{\textnormal{B};m}^{2}$. The error term $\mathfrak{r}_{m;s}$ is estimated exactly as $\mathfrak{r}_{m}$ in the proof of \cref{prp:norm}.\end{proof}

\section{Reduction to Bosonizable Correlation Hamiltonian} \label{sec:red-bos}
In the introduction we already mentioned the non--bosonizable terms in the correlation Hamiltonian $\Hcal_\textnormal{corr}$; as computed in \cite[Eq.~(1.17)--(1.21)]{BNPSS2}, their precise form is
\begin{align}\label{eq:Hcorrdeterr}
\Xbb & = - \frac{1}{2N} \sum_{k \in \Zbb^3} \hat{V}(k) \bigg[ \sum_{p \in \BFc \cap (\BF +k)} a^*_p a_p  + \sum_{h \in \BF \cap (\BFc-k)} a^*_{h} a_{h}\bigg]\;,\nonumber\\
\mathcal{E}_{1} &= \frac{1}{2N} \sum_{k\in \north} \hat V(k) \Big[ \mathfrak{D}(k)^{*} \mathfrak{D}(k) +  \mathfrak{D}(-k)^{*} \mathfrak{D}(-k) \Big]\;,\nonumber\\
\mathcal{E}_{2} &= \frac{1}{2N} \sum_{k\in \north} \hat V(k) \Big[ \mathfrak{D}(-k)^{*} b(k) + \mathfrak{D}(k)^{*} b(-k) + \hc \Big]\;.
\end{align}
The operator $\Dfrak(k)^* = \Dfrak(-k)$ creates and annihilates particles that are either both outside or both inside the Fermi ball,
\begin{equation} \label{eq:Dkdef}
\Dfrak(k)^* := \sum_{p  \in \BFc \cap (\BFc+k)} a^*_{p} a_{p-k} - \sum_{h  \in \BF \cap (\BF-k)} a^*_{h} a_{h+k} \;.
\end{equation} 

As a first step to establish the approximation \eqref{eq:corr-diag}, let us reduce the correlation Hamiltonian $\Hcal_\textnormal{corr}$ to only its terms $\Hbb_0 + Q_{\textnormal{B}}^{\mathcal{R}}$, where 
\begin{align}
&Q_{\textnormal{B}}^{\mathcal{R}} = \frac{1}{N} \sum_{k\in \north} \hat V(k) \Big[\sum_{\alpha, \beta \in \mathcal{I}^{+}_{k}} n_{\alpha}(k) n_{\beta}(k) c^{*}_{\alpha}(k) c_{\beta}(k) + \sum_{\alpha,\beta\in \mathcal{I}^{-}_{k}} n_{\alpha}(k) n_{\beta}(k) c^{*}_{\alpha}(k) c_{\beta}(k) \nonumber\\
&\hspace{8em} + \sum_{\substack{\alpha\in \mathcal{I}^{+}_{k}\\ \beta\in \mathcal{I}^{-}_{k}}} n_{\alpha}(k) n_{\beta}(k) c^{*}_{\alpha}(k) c^{*}_{\beta}(k) +  \sum_{\substack{\alpha\in \mathcal{I}^{+}_{k}\\\beta\in \mathcal{I}^{-}_{k}}} n_{\alpha}(k) n_{\beta}(k) c_{\beta}(k) c_{\alpha}(k)\Big]\;.
\end{align}
The error due to the dropped terms is controlled by the following lemma.
\begin{lem}[Non--bosonizable terms]\label{prp:nonbos} There exists $C >0$ such that for all $\zeta \in \mathcal{F}$
\begin{align}\label{eq:nonbos}
 &\big\| (\mathcal{H}_{\textnormal{corr}} - \Hbb_0 - Q_{\textnormal{B}}^{\mathcal{R}}) \zeta \big\| \nonumber\\
 &\leq C \|\hat V\|_{1} \hbar \Big(N^{-\frac{2}{3}} \|\Ncal^{2} \zeta \| + N^{-\frac{1}{3}} \| \Ncal^{\frac{3}{2}}\zeta \| + (N^{-\frac{\delta}{2}} + N^{-\frac{1}{6}} M^{\frac{1}{4}} ) \| (\Ncal + 1) \zeta \| \Big)\;. 
\end{align}
\end{lem}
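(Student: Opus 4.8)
\emph{Proof proposal.} The plan is to peel off the four operators that distinguish $\mathcal{H}_\textnormal{corr}$ from $\Hbb_0+Q_{\textnormal{B}}^{\mathcal{R}}$. Using the decomposition \eqref{eq:Hcorr} together with \eqref{eq:Hcorrdet} and \eqref{eq:Hcorrdeterr},
\[
\mathcal{H}_\textnormal{corr}-\Hbb_0-Q_{\textnormal{B}}^{\mathcal{R}}=\big(Q_\B-Q_{\textnormal{B}}^{\mathcal{R}}\big)+\mathcal{E}_1+\mathcal{E}_2+\Xbb\;,
\]
and I would bound the four summands separately; most of the work is a repackaging of the operator bounds of \cite{BNPSS,BNPSS2} as statements about $\|\,\cdot\,\zeta\|$ rather than about quadratic forms.

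First the genuinely non--bosonizable pieces. In $\Xbb$ each $a^*_pa_p\le 1$ and the $p$--sum runs over a shell of $O(k_\F^2)$ momenta, but in fact $\sum_p a^*_p a_p\le\Ncal$, so $\|\Xbb\zeta\|\le CN^{-1}\|\hat V\|_1\|\Ncal\zeta\|=C\|\hat V\|_1\hbar N^{-2/3}\|\Ncal\zeta\|$, which is absorbed into the $N^{-\delta/2}\|(\Ncal+1)\zeta\|$ term since $\delta<1/6$. The operators $\mathfrak{D}(k)^\natural$ are of number type; expanding $\mathfrak{D}(k)^*\mathfrak{D}(k)$ with the CAR and using Cauchy--Schwarz gives the uniform bound $\|\mathfrak{D}(k)^\natural\zeta\|\le C\|(\Ncal+1)\zeta\|$, while $\#\{p\in\BFc\cap(\BF+k)\}=O(k_\F^2)$ and the elementary estimate on $[b(k),b^*(k)]$ yield $\|b(k)\zeta\|\le Ck_\F\|\Ncal^{1/2}\zeta\|$ and $\|b^*(k)\zeta\|\le Ck_\F\|(\Ncal+1)^{1/2}\zeta\|$. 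Inserting these into $\mathcal{E}_1=\frac{1}{2N}\sum_k\hat V(k)[\mathfrak{D}(k)^*\mathfrak{D}(k)+\mathfrak{D}(-k)^*\mathfrak{D}(-k)]$ (moving one factor $\mathfrak{D}$ past $\Ncal$, with which it commutes) gives $\|\mathcal{E}_1\zeta\|\le C\|\hat V\|_1\hbar N^{-2/3}\|(\Ncal+1)^2\zeta\|$; inserting them into $\mathcal{E}_2=\frac{1}{2N}\sum_k\hat V(k)[\mathfrak{D}(-k)^*b(k)+\mathfrak{D}(k)^*b(-k)+\hc]$ (using $\Ncal b(k)=b(k)(\Ncal-2)$ to commute $b$ past powers of $\Ncal$) produces one extra factor $k_\F=N^{1/3}$, hence $\|\mathcal{E}_2\zeta\|\le C\|\hat V\|_1\hbar N^{-1/3}\|(\Ncal+1)^{3/2}\zeta\|$.

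The main step is the bound on $Q_\B-Q_{\textnormal{B}}^{\mathcal{R}}$. I would first record the purely algebraic fact that, since distinct patches are separated by corridors of width larger than $2R_V$ and $\supp\hat V$ is bounded, the products of patch operators in $Q_{\textnormal{B}}^{\mathcal{R}}$ factorize, so $Q_{\textnormal{B}}^{\mathcal{R}}$ equals the expression \eqref{eq:Hcorrdet} for $Q_\B$ with every $b^\natural(\pm k)$ replaced by the patch--restricted operator $\tilde b^\natural(\pm k):=\sum_{\alpha\in\mathcal{I}_{\pm k}^+}n_\alpha(\pm k)\,b_\alpha^\natural(\pm k)$ (note $\mathcal{I}_{-k}^+=\Ikm$). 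Moreover patches on the side of the Fermi surface facing away from $k$ carry no admissible pairs, so $\tilde b^*(k)=b^*(k)-\Delta^*(k)$ with $\Delta^*(k):=\sum_{p\in P(k)}a^*_p a^*_{p-k}$, where $P(k)$ is the set of $p\in\BFc\cap(\BF+k)$ lying in a corridor or in a patch with $|k\cdot\hat\omega_\alpha|<N^{-\delta}$. The counting of the patch construction (cf.\ \cite[Section~4]{BNPSS2}) gives $|P(k)|\le C(k_\F^2 N^{-\delta}+k_\F\sqrt M)$: the equatorial band $\{|k\cdot\hat\omega_\alpha|<N^{-\delta}\}$ carries $O(k_\F^2 N^{-\delta})$ modes and the $O(M)$ corridors of length $\sim k_\F/\sqrt M$ and width $O(1)$ carry $O(k_\F\sqrt M)$ modes. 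Expanding $b^*(k)b(k)-\tilde b^*(k)\tilde b(k)=b^*(k)\Delta(k)+\Delta^*(k)b(k)-\Delta^*(k)\Delta(k)$ and similarly for the $b^*b^*$ and $bb$ terms, and using $\|\Delta(k)^\natural\psi\|\le\sqrt{|P(k)|}\,\|(\Ncal+1)^{1/2}\psi\|$ (Cauchy--Schwarz plus the CAR bound on $[\Delta(k),\Delta^*(k)]$) together with the $b$--bounds above, every term is controlled by $\frac{C}{N}\hat V(k)\big(k_\F\sqrt{|P(k)|}+|P(k)|\big)\|(\Ncal+1)\zeta\|\le C\|\hat V\|_1\hbar(N^{-\delta/2}+N^{-1/6}M^{1/4})\|(\Ncal+1)\zeta\|$, where $\hbar=N^{-1/3}$ and $M\ll N^{2/3}$ are used to absorb the remaining terms. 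Adding the four contributions and retaining the largest power of $\Ncal$ in each yields \eqref{eq:nonbos}. I expect the $Q_\B-Q_{\textnormal{B}}^{\mathcal{R}}$ step to be the only real obstacle: one must pin down exactly which fermionic pairs are discarded in passing from the delocalized operators $b^\natural(k)$ to the patch operators $c_\alpha^\natural(k)$ and bound their number, which is precisely what the corridor/equator counting supplies; once the elementary $\mathfrak{D}$-- and $b$--operator bounds are available, the remaining estimates are routine.
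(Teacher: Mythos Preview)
Your proposal is correct and follows essentially the same route as the paper: both split $\mathcal{H}_\textnormal{corr}-\Hbb_0-Q_\B^{\mathcal{R}}$ into $\Xbb+\mathcal{E}_1+\mathcal{E}_2+(Q_\B-Q_\B^{\mathcal{R}})$, treat the first three with the elementary bounds $\|\mathfrak{D}^\natural\zeta\|\le C\|\Ncal\zeta\|$ and $\|b^\natural(k)\zeta\|\le CN^{1/3}\|(\Ncal+1)^{1/2}\zeta\|$, and handle the last by writing $b(k)=\tilde b(k)+\Delta(k)$ with $\Delta(k)$ supported on corridors and the equatorial band, whose mode counts $O(N^{1/3}M^{1/2})$ and $O(N^{2/3-\delta})$ yield the stated error.

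The one noteworthy difference is in the equatorial piece: the paper defines it via the energy cut $Y=\{p: e(p)+e(p-k)\le 4N^{-1/3-\delta}\}$ and, for the annihilation part $\mathfrak{r}_Y^{\mathcal{R}}(k)$, invokes the bound $\sum_p (e(p)+e(p-k))^{-1}\le CN$ from \cref{eq:HPR20}. You instead count the equatorial modes directly as $O(k_\F^2 N^{-\delta})$ and apply Cauchy--Schwarz throughout. Both yield $\|\Delta(k)^\natural\zeta\|\le CN^{1/3-\delta/2}\|(\Ncal+1)^{1/2}\zeta\|$; in fact the paper itself reverts to your direct count $|Y|\le CN^{2/3-\delta}$ when estimating the creation part $\mathfrak{r}_Y^{\mathcal{R}}(k)^*$ via the commutator, so your more uniform treatment loses nothing here.
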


\begin{proof} Recall the correlation Hamiltonian \cref{eq:Hcorr}. By the triangle inequality
\begin{equation}\label{eq:start}
 \big\| (\mathcal{H}_{\textnormal{corr}} -  \Hbb_0 - Q_{\textnormal{B}}^{\mathcal{R}}) \zeta \big\| \leq \| \mathbb{X} \zeta \| +  \| \mathcal{E}_{1} \zeta \| + \| \mathcal{E}_{2} \zeta \| + \| (Q_{\textnormal{B}} - Q_{\textnormal{B}}^{\mathcal{R}}) \zeta\|\;. 
\end{equation}

Consider the first term in \eqref{eq:start}. We have
\begin{align}
\| \mathbb{X} \zeta \| &\leq 
 \frac{1}{N} \sum_{k \in \mathbb{Z}^{3}} |\hat V(k)| \Big\| \sum_{h\in B_{\textnormal{F}} \cap (B_{\textnormal{F}} + k)} a^{*}_{h} a_{h} \zeta \Big\| + \frac{1}{N} \sum_{k \in \mathbb{Z}^{3}} |\hat V(k)| \Big\| \sum_{p\in B^{c}_{\textnormal{F}} \cap (B_{\textnormal{F}} + k)} a^{*}_{p} a_{p} \zeta \Big\| \nonumber\\
& + \frac{1}{2N} \sum_{k\in \mathbb{Z}^{3}} |\hat V(k)| \Big\| \sum_{h\in B_{\textnormal{F}} \cap ( B_{\textnormal{F}} + k)} a^{*}_{h} a_{h} \zeta \Big\| + \frac{1}{2N} \sum_{k\in \mathbb{Z}^{3}} |\hat V(k)| \Big\| \sum_{p\in B^{\textnormal{c}}_{\textnormal{F}} \cap ( B^{\textnormal{c}}_{\textnormal{F}} +k )} a^{*}_{p}a_{p} \zeta \Big\|\;.\nonumber
\end{align}
These four terms are estimated in exactly the same way. For the first, we have
\begin{align}\label{eq:astara}
\Big\| \sum_{h\in B_{\textnormal{F}} \cap (B_{\textnormal{F}} + k)} a^{*}_{h} a_{h} \zeta \Big\|^{2} &= \Big\langle \zeta, \sum_{h,h' \in B_{\textnormal{F}} \cap (B_{\textnormal{F}} + k)} a^{*}_{h'} a_{h'} a^{*}_{h} a_{h} \zeta \Big\rangle \nonumber\\
&= \Big\langle \zeta, \sum_{h,h' \in B_{\textnormal{F}} \cap (B_{\textnormal{F}} + k)} a^{*}_{h} a^{*}_{h'} a_{h'} a_{h} \zeta \Big\rangle + \sum_{h\in B_{\textnormal{F}} \cap (B_{\textnormal{F}} + k)} \langle \zeta, a^{*}_{h} a_{h} \zeta \rangle\nonumber\\
&\le \langle \zeta, (\Ncal+\Ncal^2)  \zeta \rangle \le 2 \langle \zeta, \Ncal^2  \zeta \rangle\;.
\end{align}
The same estimates hold for the other three contributions to $\mathbb{X}$. Hence
\begin{equation}\label{eq:bdX}
\| \mathbb{X} \zeta \| \leq \frac{C\|\hat V\|_{1}}{N} \| \Ncal \zeta \|\;.
\end{equation}

Consider the second term in \cref{eq:start},
\begin{equation}\label{eq:bdE1}
\norm{ \mathcal{E}_{1}\zeta } \leq \frac{1}{N} \sum_{k\in \Zbb^3} \lvert \hat V(k)\rvert \norm{\mathfrak{D}(-k) \mathfrak{D}(k) \zeta }\;.
\end{equation}
Recalling the definition of $\mathfrak{D}(k)$ in \cref{eq:Dkdef} and proceeding as in \cref{eq:astara}, we easily get
\begin{equation}\label{eq:Dbd}
\| \mathfrak{D}(k) \zeta \| \leq C\| \Ncal \zeta \|\;,
\end{equation}
which implies   
\begin{equation} \label{eq:E1}
\| \mathcal{E}_{1}\zeta \| \leq \frac{C \|\hat V\|_{1}}{N}  \| \Ncal^{2} \zeta \|\;.
\end{equation}

Consider the third term in \cref{eq:start}. We have
\begin{equation}\label{eq:E2d}
\| \mathcal{E}_{2} \zeta \| \leq \frac{1}{N} \sum_{k \in \north} |\hat V(k)| \| \mathfrak{D}(k) b(k)\zeta  \| + \frac{1}{N} \sum_{k\in \north} |\hat V(k)| \|  b^{*}(k) \mathfrak{D}(-k)\zeta  \|\;.
\end{equation}
Using $\| a^{\natural}_{p}\| \leq 1$ we get
\begin{align} \label{eq:bk}
\|b(k) \zeta\| &\le \sum_{p\in B_\textnormal{F}^c \cap (B_\textnormal{F}+k)} \|a_{p-k} a_{p} \zeta\| \leq \sum_{p\in B_\textnormal{F}^c \cap (B_\textnormal{F}+k)} \|a_{p} \zeta\|  \nonumber\\
&\le \sqrt{\sum_{p\in B_\textnormal{F}^c \cap (B_\textnormal{F}+k)} 1}  \sqrt{\sum_{p\in B_\textnormal{F}^c \cap (B_\textnormal{F}+k)} \|a_{p} \zeta\|^2} \le CN^{\frac{1}{3}} \| \Ncal^{\frac{1}{2}} \zeta \|\;, \quad \forall \zeta\in \Fcal\;.
\end{align}
Here we used that $\sum_{p\in B_\textnormal{F}^c \cap (B_\textnormal{F}+k)} 1 \leq CN^{\frac{2}{3}}$ by \cite[Eq.~(2.1)]{BNPSS2}.
Using also \cref{eq:Dbd}, this implies that the first summand of \cref{eq:E2d} is bounded by 
\begin{align*}
\frac{1}{N}\sum_{k\in \north} |\hat V(k)| \| \mathfrak{D}(k) b(k)\zeta\| & \leq \frac{1}{N} \sum_{k\in \north} |\hat V(k)| C \norm{\Ncal b(k) \xi} \\
& = \frac{1}{N} \sum_{k\in \north} |\hat V(k)| C \norm{b(k) (\Ncal - 2) \xi}\\
 & \leq C \|\hat V\|_{1} N^{-\frac{2}{3}} \| \Ncal^{\frac{3}{2}} \zeta \| \;.
\end{align*}
To bound the second term in \cref{eq:E2d}, we compute
 \begin{align} \label{eq:bk-bk*}
  [b(k), b^{*}(k)] &= \sum_{p\in B_{\textnormal{F}}^{\textnormal{c}} \cap (B_{\textnormal{F}} + k)} (1 - a^{*}_{p}a_{p} - a^{*}_{p-k} a_{p-k})
 \le  \sum_{p\in B_{\textnormal{F}}^{\textnormal{c}} \cap (B_{\textnormal{F}} + k)} 1 \le  CN^{\frac{2}{3}}\;. 
 \end{align}
Therefore
\begin{align}
\|  b^{*}(k) \mathfrak{D}(-k)\zeta  \|^2 &\leq CN^{\frac{2}{3}} \| \mathfrak{D}(-k)\zeta  \|^2 + \|  b(k) \mathfrak{D}(-k)\zeta  \|^2 \nonumber\\
&\leq CN^{\frac{2}{3}}\| \Ncal \zeta \|^2 + CN^{\frac{2}{3}} \| \Ncal^{\frac{3}{2}}\zeta \|^2\;,
\end{align}
which implies 
\begin{equation}\label{eq:bdE2}
\| \mathcal{E}_{2} \zeta \| \leq C\|\hat V\|_{1} N^{-\frac{2}{3}} \| \Ncal^{\frac{3}{2}}\zeta \|\;.
\end{equation}
Consider the last term of \eqref{eq:start}. To control the difference $Q_{\textnormal{B}} - Q_{\textnormal{B}}^{\mathcal{R}}$ we use that, as discussed in the proof of  \cite[Lemma 4.1]{BNPSS2},
\begin{equation}
b(k) = b^{\mathcal{R}}(k) + \mathfrak{r}^{\mathcal{R}}(k)\nonumber
\end{equation}
where
\begin{equation}\label{eq:tauU}
b^{\mathcal{R}}(k) = \sum_{\alpha \in \mathcal{I}_{k}} n_{\alpha}(k) c_{\alpha}(k)\;,\qquad \mathfrak{r}^{\mathcal{R}}(k) = \sum_{p \in U} a_{p-k} a_{p}\;,
\end{equation}
with 
\[
 U=(B_\textnormal{F}^c \cap (B_\textnormal{F}+k)) \backslash \bigcup_{\alpha\in \mathcal{I}_k} (B_\alpha \cap (B_\alpha+k))\;.
\]
The contributions from $b^{\mathcal{R}}(k)$ will form $Q_\textnormal{B}^\mathcal{R}$, the contributions from $\mathfrak{r}^{\mathcal{R}}(k)$ are to be estimated as a small error. 
Proceeding similarly to \eqref{eq:bk} and \eqref{eq:bk-bk*}, we find that 
\begin{equation}\label{eq:bRest}
\| b^{\mathcal{R}}(k) \zeta \| +  \| b^{\mathcal{R}}(k)^{*} \zeta \| \leq CN^{\frac{1}{3}} \| (\Ncal+1)^{\frac{1}{2}}\zeta \|\;.
\end{equation}
We turn to the operators $\mathfrak{r}^{\mathcal{R}}(k)$. We rewrite $U = Y \cup (U \setminus Y)$, with $Y$ the set of lattice points such that
\begin{equation}
Y = \big\{p\in B^{\textnormal{c}}_{\textnormal{F}} \cap (B_{\textnormal{F}} + k) \mid e(p) + e(p-k) \leq 4 N^{-\frac{1}{3} - \delta}\big\}\;.
\end{equation}
The condition $e(p) + e(p-k) \leq 4 N^{-\frac{1}{3}-\delta}$, thanks to $\lvert p \rvert = \Ocal(N^{\frac{1}{3}})$ and $\lvert k \rvert = \Ocal(1)$, implies $\hat{p}\cdot\hat{k} \leq C N^{-\delta}$. This is a ribbon of width $N^{-\delta + \frac{1}{3}}$ around the equator of the Fermi sphere, containing at most $\Ocal(N^{-\delta + \frac{1}{3}} \times N^{\frac{1}{3}})$ lattice points. The set $U\setminus Y$ consists of the corridors between the remaining patches. Accordingly, the number of lattice points in $U\setminus Y$ is
\begin{equation}\label{eq:UY}
|U\setminus Y| \leq CN^{\frac{1}{3}} M^{\frac{1}{2}}\;.
\end{equation}
Correspondingly, we split
\begin{equation}\label{eq:rsplit}
\mathfrak{r}^{\mathcal{R}}(k) = \mathfrak{r}_{Y}^{\mathcal{R}}(k) + \mathfrak{r}_{U\setminus Y}^{\mathcal{R}}(k)\;,
\end{equation}
where the two operators are defined as in \cref{eq:tauU}, replacing the index set $U$ with $Y$ or with $U\setminus Y$, respectively. By the Cauchy--Schwarz inequality 
\begin{equation}\label{eq:tbd}
\| \mathfrak{r}_{U\setminus Y}^{\mathcal{R}}(k) \zeta \| \leq \sum_{p\in U \setminus Y} \| a_{p} \xi \| \leq CN^{\frac{1}{6}} M^{\frac{1}{4}} \| \Ncal^{\frac{1}{2}} \zeta \|\;.
\end{equation}
To estimate $\mathfrak{r}_{U\setminus Y}^{\mathcal{R}}(k)^{*}$ we write
\begin{equation}
\| \mathfrak{r}_{U\setminus Y}^{\mathcal{R}}(k)^* \zeta \|^{2} = | \langle \zeta, [ \mathfrak{r}_{U\setminus Y}^{\mathcal{R}}(k), \mathfrak{r}_{U\setminus Y}^{\mathcal{R}}(k)^* ] \zeta \rangle| + \| \mathfrak{r}_{U\setminus Y}^{\mathcal{R}}(k) \zeta \|^{2}\;.
\end{equation}
The second term is estimated as in \cref{eq:tbd}. The first term can be estimated by computing the commutator and recalling the estimate \cref{eq:UY} on the number of lattice points in $U\setminus Y$.
We get
\begin{equation}\label{eq:tbd2}
\| \mathfrak{r}_{U\setminus Y}^{\mathcal{R}}(k)^{*} \zeta \| \leq CN^{\frac{1}{6}} M^{\frac{1}{4}} \| (\Ncal + 1)^{\frac{1}{2}} \zeta \|\;.
\end{equation} 
For $\mathfrak{r}_{Y}^{\mathcal{R}}(k)$, by the definition of $Y$ we get
\[
\| \mathfrak{r}_{Y}^{\mathcal{R}}(k) \zeta \| \leq \sum_{p\in B_\textnormal{F}^c \cap (B_\textnormal{F}+k)} \frac{2 N^{-\frac{1}{6}-\frac{\delta}{2}}}{\sqrt{e(p)+e(p-k)}} \|a_p\zeta\|\;.
\]
In combination with the Cauchy--Schwarz inequality and \cref{eq:HPR20}
 \begin{align}\label{eq:tbd3}
\| \mathfrak{r}_{Y}^{\mathcal{R}}(k) \zeta \| &\leq 2 N^{-\frac{1}{6}-\frac{\delta}{2}} \Bigg( \sum_{p\in B_\textnormal{F}^c \cap (B_\textnormal{F}+k)} \frac{1}{e(p)+e(p-k)}\Bigg)^{1/2}  \Bigg( \sum_{p\in B_\textnormal{F}^c \cap (B_\textnormal{F}+k)} \|a_p\zeta\|^2\Bigg)^{1/2} \nonumber\\
&\le C N^{-\frac{1}{6}-\frac{\delta}{2}} N^{\frac1 2} \|\Ncal^{\frac{1}{2}} \zeta\| = C N^{\frac{1}{3}-\frac{\delta}{2}}  \|\Ncal^{\frac{1}{2}} \zeta\|\;. 
\end{align}
(The bound \cref{eq:HPR20} yields a constant that depends on $k$, but since only $k$ in the compact set $\supp \hat{V}$ is relevant here, we can take the maximum with respect to $k$.)
Concerning $\mathfrak{r}_{Y}^{\mathcal{R}}(k)^{*}$, we use
\[
\| \mathfrak{r}_{Y}^{\mathcal{R}}(k)^{*} \zeta \|^{2} \leq |\langle \zeta, [ \mathfrak{r}_{Y}^{\mathcal{R}}(k), \mathfrak{r}_{Y}^{\mathcal{R}}(k)^{*} ] \zeta \rangle| + \|\mathfrak{r}_{Y}^{\mathcal{R}}(k) \zeta  \|^{2}\;.
\]
The second term is estimated as in \cref{eq:tbd3}. The first term can be bounded by computing the commutator and using the estimate $\lvert Y\rvert \leq C N^{\frac{2}{3} - \delta}$ for the number of lattice points in $Y$. Therefore
\begin{equation}\label{eq:tbd4}
\| \mathfrak{r}_{Y}^{\mathcal{R}}(k)^{*} \zeta \| \leq C N^{\frac{1}{3} - \frac{\delta}{2}}  \| (\Ncal + 1)^{\frac{1}{2}} \zeta \|\;.
\end{equation} 
In summary, from \eqref{eq:tbd}, \eqref{eq:tbd2}, \eqref{eq:tbd3} and \eqref{eq:tbd4} we find that 
\begin{align} \label{eq:Tau-rest}
\| \mathfrak{r}^{\mathcal{R}}(k) \zeta \| +\| \mathfrak{r}^{\mathcal{R}}(k)^* \zeta \| \leq C(N^{\frac{1}{3} - \frac{\delta}{2}} + N^{\frac{1}{6}} M^{\frac{1}{4}}) \| (\Ncal +1)^{\frac{1}{2}} \zeta \| \;.
\end{align}

With this, we are ready to estimate $\| (Q_{\textnormal{B}} - Q^{\mathcal{R}}_{\textnormal{B}}) \zeta\|$. Note that the difference $Q_{\textnormal{B}} - Q^{\mathcal{R}}_{\textnormal{B}}$ is given by a sum of terms containing at least one operator $\mathfrak{r}^{\mathcal{R}}(k)^{\natural}$. Moreover, similar to $b(k)$, both  $b^{\mathcal{R}}(k)$ and $\mathfrak{r}^{\mathcal{R}}(k)$ annihilate two fermions (i.\,e., $b^{\mathcal{R}}(k)\Ncal= (\Ncal+2)b^{\mathcal{R}}(k)$ and $\mathfrak{r}^{\mathcal{R}}(k)\Ncal= (\Ncal+2)\mathfrak{r}^{\mathcal{R}}(k)$).  The bounds \cref{eq:bRest} and \cref{eq:Tau-rest} imply (to simplify the estimate recall that $M^{\frac{1}{4}} \ll N^{\frac{1}{6}}$ and use $(\Ncal+3) \leq C (\Ncal+1)$)
\begin{align} \label{eq:Q-QR}
\| (Q_{\textnormal{B}} - Q^{\mathcal{R}}_{\textnormal{B}}) \zeta \| &\leq \frac{C\norm{\hat{V}}_{1}}{N} N^{\frac{1}{3}} \Big( N^{\frac{1}{3} - \frac{\delta}{2}} +N^{\frac{1}{6}} M^{\frac{1}{4}}  \Big) \| (\Ncal + 1) \zeta \| \nonumber\\
&\leq C \norm{\hat{V}}_{1} \hbar \Big( N^{-\frac{\delta}{2}} + N^{-\frac{1}{6}} M^{\frac 14} \Big) \| (\Ncal + 1)\zeta \|\;. 
\end{align}
Inserting \eqref{eq:bdX}, \eqref{eq:E1}, \eqref{eq:bdE2}  and \eqref{eq:Q-QR} in \eqref{eq:start} we obtain the desired estimate.
\end{proof}

\section{Linearization of Kinetic Term}\label{sec:lin}
Our next task will be to approximate the fermionic kinetic energy $\Hbb_0$ with the bosonized kinetic energy $\mathbb{D}_\B$ defined in \cref{eq:Dbb}. This is the step where the division into patches is needed. 

First, we show that for any state $\zeta\in \mathcal{F}$ with few excitations the norm $\| (\Hbb_0-\mathbb{D}_{\textnormal{B}}) \zeta\|$ is essentially invariant under the Bogoliubov transformation $\zeta \mapsto T\zeta$.  

\begin{lem}[Approximation of kinetic energy, part I]\label{prp:kin2} There exists $C > 0$ such that for all $\zeta\in \mathcal{F}$ we have
\begin{align}\label{eq:HD}
 &\Big| \big\| (\mathbb{H}_{0} - \mathbb{D}_{\textnormal{B}}) T \zeta\big\| - \big\| (\mathbb{H}_{0} - \mathbb{D}_{\textnormal{B}}) \zeta\big\| \Big| \nonumber\\
 & \leq C \hbar \Big( M^{-\frac{1}{2}} \| (\Ncal + 1) \zeta \|+  M N^{-\frac{2}{3} + \delta} \|(\Ncal + 1)^{2}  \zeta \| \Big)\;.  
\end{align}
\end{lem}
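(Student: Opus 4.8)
The plan is to reduce \cref{eq:HD} to a commutator estimate and then combine a Duhamel expansion with the cancellation built into \cref{lem:boskin}.

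Since $T=e^{B}$ is unitary, $\norm{(\Hbb_{0}-\Dbb_{\B})\zeta}=\norm{T(\Hbb_{0}-\Dbb_{\B})\zeta}$, so by the reverse triangle inequality the left--hand side of \cref{eq:HD} is at most $\norm{[\Hbb_{0}-\Dbb_{\B},T]\zeta}$. Writing $T_{\lambda}:=e^{\lambda B}$ and using the Duhamel identity $[\Hbb_{0}-\Dbb_{\B},e^{B}]=\int_{0}^{1}e^{\lambda B}\,[\Hbb_{0}-\Dbb_{\B},B]\,e^{(1-\lambda)B}\,\di\lambda$ together with the unitarity of $e^{\lambda B}$, it suffices to control $\int_{0}^{1}\norm{[\Hbb_{0}-\Dbb_{\B},B]\,T_{\lambda}\zeta}\,\di\lambda$. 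Next I would expand the commutator. With $B=\sum_{k\in\north}\tfrac12\sum_{\alpha,\beta\in\Ik}K(k)_{\alpha,\beta}\bigl(c^{*}_{\alpha}(k)c^{*}_{\beta}(k)-c_{\alpha}(k)c_{\beta}(k)\bigr)$ and $K(k)$ real symmetric, the crucial input is that by \cref{lem:boskin} the commutators $[\Hbb_{0},c^{*}_{\alpha}(k)]$ and $[\Dbb_{\B},c^{*}_{\alpha}(k)]$ share the \emph{same} leading term $2\hbar\kappaf\lvert k\cdot\hat\omega_{\alpha}\rvert c^{*}_{\alpha}(k)$, so the linear part drops out and
\[
 [\Hbb_{0}-\Dbb_{\B},c^{*}_{\alpha}(k)]=\hbar\bigl(\mathfrak{E}^{\textnormal{lin}}_{\alpha}(k)^{*}-\mathfrak{E}^{\textnormal{B}}_{\alpha}(k)^{*}\bigr)=:\hbar\,\mathfrak{G}_{\alpha}(k)^{*}\;,
\]
and, taking adjoints and using self--adjointness of $\Hbb_{0}-\Dbb_{\B}$, $[\Hbb_{0}-\Dbb_{\B},c_{\alpha}(k)]=-\hbar\,\mathfrak{G}_{\alpha}(k)$. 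By the Leibniz rule, $[\Hbb_{0}-\Dbb_{\B},B]$ is then $\tfrac{\hbar}{2}$ times a sum over $k\in\north$ and $\alpha,\beta\in\Ik$ of $K(k)_{\alpha,\beta}$ against the operators $\mathfrak{G}_{\alpha}(k)^{*}c^{*}_{\beta}(k)$, $c^{*}_{\alpha}(k)\mathfrak{G}_{\beta}(k)^{*}$ and their adjoints; note that this commutator is self--adjoint.

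Then I would estimate these pieces on $T_{\lambda}\zeta$. The facts to exploit, all from \cref{lem:boskin}, are: $\mathfrak{E}^{\textnormal{lin}}_{\alpha}(k)$ commutes with \emph{every} $c_{\beta}(k)$, while $\mathfrak{E}^{\textnormal{B}}_{\alpha}(k)$ commutes with $c_{\beta}(k)$ only for $\alpha\neq\beta$; and both error operators (like the $c^{\natural}_{\alpha}(k)$) change $\Ncal$ by $\pm2$, hence commute past powers of $\Ncal+\mathrm{const}$ at the price of shifting the constant. Commuting the error operators through the pair operators (separating off the $\alpha=\beta$ diagonal in the $\mathfrak{E}^{\textnormal{B}}$ terms, which has only $M$ summands, each weighted by $\lvert K(k)_{\alpha,\alpha}\rvert\le C\hat V(k)/M$) reduces the creation--type pieces to collective estimates of the form $\bigl\|\sum_{\beta\in\Ik}c^{*}_{\beta}(k)\eta_{\beta}\bigr\|$ with $\eta_{\beta}=\sum_{\alpha}K(k)_{\alpha,\beta}\,\mathfrak{E}^{\natural}_{\alpha}(k)^{*}\psi$, $\psi=T_{\lambda}\zeta$. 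From \cref{eq:c*c} and the approximate CCR of \cref{lem:approxCCR} one gets $\bigl\|\sum_{\beta}c^{*}_{\beta}(k)\eta_{\beta}\bigr\|^{2}\le\sum_{\beta}\norm{(\Ncal+1)^{1/2}\eta_{\beta}}^{2}$ up to a harmless contribution proportional to $\mathcal{E}_{\beta}(k,k)$; then applying the column--wise $\ell^{2}$ bounds \cref{eq:sumEE} with weight $f=K(k)_{\cdot,\beta}$ and summing over $\beta$ via $\sum_{\beta}\norm{K(k)_{\cdot,\beta}}_{\ell^{2}}^{2}=\norm{K(k)}\HS^{2}\le C$ (\cref{lem:K}) produces the $\mathfrak{E}^{\textnormal{lin}}$--contribution of order $\hbar M^{-1/2}\norm{(\Ncal+1)T_{\lambda}\zeta}$. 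For the $\mathfrak{E}^{\textnormal{B}}$--contribution I would instead insert the explicit formula for $\mathfrak{E}^{\textnormal{B}}_{\alpha}(k)^{*}$ (a finite sum over $k'\in\north$ of terms $c^{*}_{\alpha}(k')\mathcal{E}_{\alpha}(k',k)$, obtained from $[\Dbb_{\B},c^{*}_{\alpha}(k)]$ exactly as \cref{eq:e-lin-d} comes from $[\Hbb_{0},c^{*}_{\alpha}(k)]$) and use the Hilbert--Schmidt--summed CCR estimate \cref{eq:ptwisebdE-11} for $\mathcal{E}_{\alpha}$, which retains the sharp factor $M$ in place of the $M^{3/2}$ of \cref{eq:sumEE}, giving order $\hbar MN^{-2/3+\delta}\norm{(\Ncal+1)^{2}T_{\lambda}\zeta}$. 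The annihilation--type terms are, after the same commutations, the adjoints of the creation--type ones, so $\norm{D^{*}\psi}^{2}=\norm{D\psi}^{2}+\langle\psi,[D,D^{*}]\psi\rangle$ with $[D,D^{*}]$ of strictly lower order (each contraction brings down an extra factor of $K$ or $\mathcal{E}_{\beta}(k,k)$), and they obey the same bounds. Since $\north\subset\supp\hat V$ is finite the $k$--sum is just a constant, and \cref{lem:stabN} replaces $\norm{(\Ncal+1)^{j}T_{\lambda}\zeta}$ by $e^{Cj}\norm{(\Ncal+1)^{j}\zeta}$ for $j=1,2$ uniformly in $\lambda$; integrating over $\lambda\in[0,1]$ yields \cref{eq:HD}.

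The main obstacle is extracting the sharp powers of $M$: a naive, term--by--term ($\ell^{1}$--over--patches) estimate of either $\bigl\|\sum_{\beta}c^{*}_{\beta}(k)\eta_{\beta}\bigr\|$ or of the linearization/CCR errors loses a factor $\sqrt{M}$ at each such step and would not reproduce the claimed bound; one must keep the $c^{*}_{\beta}(k)$ sums collective and pair the $\ell^{2}$/Hilbert--Schmidt--summed forms of the errors against $\norm{K(k)}\HS$. The secondary bookkeeping point is tracking which error operator commutes with which $c_{\beta}(k)$, so that the non--commuting $\alpha=\beta$ diagonal of $\mathfrak{E}^{\textnormal{B}}$ can be peeled off and handled by the crude pointwise bound.
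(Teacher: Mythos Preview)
Your argument is correct in spirit and reaches the stated bound, but it takes a genuinely different route from the paper's proof.

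The paper does \emph{not} bound $\norm{[\Hbb_0-\Dbb_\B,B]T_\lambda\zeta}$ directly. Instead it differentiates the \emph{squared} norm,
\[
\frac{\di}{\di\lambda}\norm{(\Hbb_0-\Dbb_\B)T_\lambda\zeta}^2
=2\Re\big\langle T_\lambda\zeta,\ (\Hbb_0-\Dbb_\B)\,[(\Hbb_0-\Dbb_\B),B]\,T_\lambda\zeta\big\rangle\;,
\]
so that one factor of $(\Hbb_0-\Dbb_\B)$ survives. After expanding $[(\Hbb_0-\Dbb_\B),B]$ using \cref{lem:boskin} and pulling the adjoints $c_\alpha(k)$, $\Efrak^\natural_\beta(k)$ to the left of the inner product, the paper applies only the crude pointwise estimate $\lvert K(k)_{\alpha,\beta}\rvert\le C/M$ together with the $\ell^1$--over--patches bounds \cref{eq:sumc} and \cref{eq:sumfrakE}; Cauchy--Schwarz then produces a factor $\norm{(\Hbb_0-\Dbb_\B)T_\lambda\zeta}$ that is divided out to obtain the bound on $\tfrac{\di}{\di\lambda}\norm{(\Hbb_0-\Dbb_\B)T_\lambda\zeta}$, and one integrates. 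In this bilinear setting the ``h.c.'' pieces are automatic (they just land on the other side of the inner product), no collective $c^*$--estimate is needed, and the $\Efrak^\B$--contribution gives the sharp $M\,N^{-2/3+\delta}$ directly from \cref{eq:sumfrakE} without ever opening up the explicit formula \cref{eq:e-lin-d}.

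Your direct--norm approach trades that Gr\"onwall trick for more refined patch estimates: you need the vector--valued bound $\norm{\sum_\beta c^*_\beta(k)\eta_\beta}^2\le C\sum_\beta\norm{(\Ncal+1)^{1/2}\eta_\beta}^2$ (which is indeed obtainable from \cref{eq:c*c} and \cref{lem:approxCCR}, as you indicate), and for the $\Efrak^\B$ part you must unfold \cref{eq:e-lin-d} and invoke \cref{eq:ptwisebdE-11} to avoid losing $\sqrt{M}$. The more delicate spot is your treatment of the annihilation--type terms: the claim that $\langle\psi,[D,D^*]\psi\rangle$ is ``of strictly lower order'' is plausible but not as immediate as you suggest---the commutator $[c^*_\alpha\mathfrak{G}^*_\beta,\mathfrak{G}_{\beta'}c_{\alpha'}]$ generates several contraction patterns that each have to be tracked (and there is no analogue of the collective $c^*$--bound for $\sum_\alpha c_\alpha\eta_\alpha$ without an extra factor of $\sqrt{M}$). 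This can be made to work, but the paper's bilinear form sidesteps the issue entirely; that is the main thing its route buys.
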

\begin{proof} Recall that $T_\lambda = \exp(\lambda B)$, with $B$ the operator defined in \cref{eq:B}. Then we get
\begin{align}\label{eq:comm2}
&\frac{\di}{\di\lambda} \langle T_{\lambda} \zeta, (\mathbb{H}_{0} - \mathbb{D}_{\textnormal{B}})^{2} T_{\lambda}\zeta\rangle  = \langle  T_{\lambda} \zeta, [(\mathbb{H}_{0} - \mathbb{D}_{\textnormal{B}})^{2}, B] T_{\lambda}\zeta\rangle \nonumber\\
& = 2 \Re  \langle  T_{\lambda}\zeta,  (\mathbb{H}_{0} - \mathbb{D}_{\textnormal{B}}) [(\mathbb{H}_{0} - \mathbb{D}_{\textnormal{B}}), B] T_{\lambda}\zeta\rangle\;.  
\end{align}
Here the self--adjointness of $\mathbb{H}_{0} - \mathbb{D}_{\textnormal{B}}$ and anti--self--adjointness of $B$ have been used in the second equality. 
The commutator $[(\mathbb{H}_{0} - \mathbb{D}_{\textnormal{B}}), B]$ can be evaluated with the aid of \cref{lem:boskin}. Using 
\begin{equation}
[ (\mathbb{H}_{0} - \mathbb{D}_{\textnormal{B}}), c^{*}_{\alpha}(k) ] = \hbar\mathfrak{E}^{\textnormal{lin}}_{\alpha}(k)^{*} - \hbar\mathfrak{E}^{\textnormal{B}}_{\alpha}(k)^{*}\;,
\end{equation}
we get
\begin{align*}
&[(\mathbb{H}_{0} - \mathbb{D}_{\textnormal{B}}), B] \\
&= \frac{\hbar}{2} \sum_{k\in \north} \sum_{\alpha, \beta \in \mathcal{I}_{k}} K(k)_{\alpha,\beta} \Big( c^{*}_{\alpha}(k) (\mathfrak{E}^{\textnormal{lin}}_{\beta}(k)^{*} - \mathfrak{E}^{\textnormal{B}}_{\beta}(k)^{*}) +  (\mathfrak{E}^{\textnormal{lin}}_{\alpha}(k)^{*} - \mathfrak{E}^{\textnormal{B}}_{\alpha}(k)^{*}) c^{*}_{\beta}(k)\Big) + \hc\nonumber
\end{align*}
In combination with $|K(k)_{\alpha,\beta}|\le CM^{-1}$ from Lemma \ref{lem:K}, we deduce from \eqref{eq:comm2} that 
\begin{align}\label{eq:comm2a}
&\Big|\frac{\di}{\di\lambda} \Big( \| (\mathbb{H}_{0} - \mathbb{D}_{\textnormal{B}}) T_{\lambda}\zeta\|^2 \Big)  \Big|  \le C\hbar\, M^{-1} \sum_{k\in \north}(  I_1(k)+I_2(k)+I_3(k)+I_4(k)) 
\end{align}
where
\begin{align*} 
I_1(k)& := \sum_{\alpha, \beta\in \mathcal{I}_{k}} \Big| \langle  T_{\lambda}\zeta,  (\mathbb{H}_{0} - \mathbb{D}_{\textnormal{B}})  c^{*}_{\alpha}(k) (\mathfrak{E}^{\textnormal{lin}}_{\beta}(k)^{*} - \mathfrak{E}^{\textnormal{B}}_{\beta}(k)^{*}) T_{\lambda}\zeta\rangle \Big|\;, \\
I_2(k)& := \sum_{\alpha, \beta\in \mathcal{I}_{k}} \Big| \langle  T_{\lambda}\zeta,  (\mathbb{H}_{0} - \mathbb{D}_{\textnormal{B}})  (\mathfrak{E}^{\textnormal{lin}}_{\alpha}(k)^{*} - \mathfrak{E}^{\textnormal{B}}_{\alpha}(k)^{*}) c^{*}_{\beta}(k) T_{\lambda}\zeta\rangle \Big|\;, \\
I_3(k)& := \sum_{\alpha, \beta\in \mathcal{I}_{k}} \Big| \langle  T_{\lambda}\zeta,  (\mathbb{H}_{0} - \mathbb{D}_{\textnormal{B}})  (\mathfrak{E}^{\textnormal{lin}}_{\beta}(k)  - \mathfrak{E}^{\textnormal{B}}_{\beta}(k) ) c_{\alpha}(k)  T_{\lambda}\zeta\rangle \Big|\;, \\
I_4(k)& := \sum_{\alpha, \beta\in \mathcal{I}_{k}} \Big|  \langle  T_{\lambda}\zeta,  (\mathbb{H}_{0} - \mathbb{D}_{\textnormal{B}}) c_{\beta}(k)  (\mathfrak{E}^{\textnormal{lin}}_{\alpha}(k) - \mathfrak{E}^{\textnormal{B}}_{\alpha}(k))  T_{\lambda}\zeta\rangle \Big| \;.
\end{align*}
We estimate the right--hand side of \eqref{eq:comm2a}.
Since the operators $\mathfrak{E}^{\textnormal{lin}}_{\alpha}(k)$, $\mathfrak{E}^{\textnormal{B}}_{\alpha}(k)$, $c_{\alpha}(k)$ all annihilate two fermions, we get
\begin{align*}  
I_1(k)&= \sum_{\alpha, \beta\in \mathcal{I}_{k}} \Big| \langle  \mathfrak{E}^{\textnormal{lin}}_{\beta}(k) (\Ncal+3)^{-1/2} c_{\alpha}(k) (\Ncal+1)^{-1/2} (\mathbb{H}_{0} - \mathbb{D}_{\textnormal{B}})  T_{\lambda}\zeta,   (\Ncal+5) T_{\lambda}\zeta\rangle \\
&\qquad\qquad\quad - \langle  \mathfrak{E}^{\textnormal{B}}_{\beta}(k) (\Ncal+3)^{-3/2} c_{\alpha}(k) (\Ncal+1)^{-1/2} (\mathbb{H}_{0} - \mathbb{D}_{\textnormal{B}})  T_{\lambda}\zeta,   (\Ncal+5)^2 T_{\lambda}\zeta\rangle \Big|\;. 
\end{align*}
Then by the Cauchy--Schwarz inequality and the bounds \cref{eq:sumc,eq:sumfrakE} we get
\begin{align}   \label{eq:I1k-final}
I_1(k) &\le \sum_{\alpha, \beta\in \mathcal{I}_{k}} \|  \mathfrak{E}^{\textnormal{lin}}_{\beta}(k) (\Ncal+3)^{-1/2} c_{\alpha}(k) (\Ncal+1)^{-1/2} (\mathbb{H}_{0} - \mathbb{D}_{\textnormal{B}})  T_{\lambda}\zeta \| \, \| (\Ncal+5) T_{\lambda}\zeta \| \nonumber\\
&\qquad + \sum_{\alpha, \beta\in \mathcal{I}_{k}} \| \mathfrak{E}^{\textnormal{B}}_{\beta}(k) (\Ncal+3)^{-3/2} c_{\alpha}(k) (\Ncal+1)^{-1/2} (\mathbb{H}_{0} - \mathbb{D}_{\textnormal{B}})  T_{\lambda}\zeta \| \, \|  (\Ncal+5)^2 T_{\lambda}\zeta  \| \nonumber \\
&\le  C\sum_{\alpha\in \mathcal{I}_{k}} \|  c_{\alpha}(k) (\Ncal+1)^{-1/2} (\mathbb{H}_{0} - \mathbb{D}_{\textnormal{B}})  T_{\lambda}\zeta \| \, \| (\Ncal+5) T_{\lambda}\zeta \| \nonumber\\
&\qquad + CM^{\frac{3}{2}} N^{-\frac{2}{3} +\delta}\sum_{\alpha\in \mathcal{I}_{k}} \| c_{\alpha}(k) (\Ncal+1)^{-1/2} (\mathbb{H}_{0} - \mathbb{D}_{\textnormal{B}})  T_{\lambda}\zeta \| \, \|  (\Ncal+5)^2 T_{\lambda}\zeta  \| \nonumber\\
&\le C  \| (\mathbb{H}_{0} - \mathbb{D}_{\textnormal{B}})  T_{\lambda}\zeta \| \Big( M^{\frac{1}{2}} \| (\Ncal+1) T_{\lambda}\zeta \| + M^{2} N^{-\frac{2}{3} +\delta}\|  (\Ncal+1)^2 T_{\lambda}\zeta  \| \Big)\;. 
\end{align}
The terms $I_2(k)$ to $I_4(k)$ can be estimated in a similar way.

Employing \eqref{eq:I1k-final} in \eqref{eq:comm2a} and using \cref{lem:stabN} we get
\begin{align*} 
&\Big|\frac{\di}{\di\lambda} \Big( \| (\mathbb{H}_{0} - \mathbb{D}_{\textnormal{B}}) T_{\lambda}\zeta\|^2 \Big)  \Big| \\
&\le \| (\mathbb{H}_{0} - \mathbb{D}_{\textnormal{B}})  T_{\lambda}\zeta \| C\hbar  \Big( M^{-\frac{1}{2}} \| (\Ncal+1) T_{\lambda}\zeta \| + M N^{-\frac{2}{3} +\delta}\|  (\Ncal+1)^2 T_{\lambda}\zeta  \| \Big)\\
&\le  \| (\mathbb{H}_{0} - \mathbb{D}_{\textnormal{B}})  T_{\lambda}\zeta \|   C\hbar \Big( M^{-\frac{1}{2}} \| (\Ncal+1) \zeta \| + M N^{-\frac{2}{3} +\delta}\|  (\Ncal+1)^2 \zeta  \| \Big)
\end{align*}
for all $\lambda\in [0,1]$. This implies
 \begin{align*} 
\Big|\frac{\di}{\di\lambda}   \| (\mathbb{H}_{0} - \mathbb{D}_{\textnormal{B}}) T_{\lambda}\zeta\|  \Big| \le C\hbar \Big( M^{-\frac{1}{2}} \| (\Ncal+1) \zeta \| + M N^{-\frac{2}{3} +\delta}\|  (\Ncal+1)^2 \zeta  \| \Big)\;.
\end{align*}
Integrating over $\lambda\in [0,1]$ leads to the desired inequality \cref{eq:HD}.
\end{proof}
Next, we show that $\| (\mathbb{H}_{0} - \mathbb{D}_{\textnormal{B}}) \zeta \|$ is small for our choice of $\zeta$. 

\begin{lem}[Approximation of kinetic energy, part II]\label{prp:almbos} For $i=1,\ldots, m$, let $\varphi_i \in \bigoplus_{k\in \north} \ell^2(\mathcal{I}_k)$ be normalized as in \cref{eq:139}. Then  
\begin{equation}
\| (\mathbb{H}_{0} - \mathbb{D}_{\textnormal{B}}) c^{*}(\varphi_{1}) \cdots c^{*}(\varphi_{m}) \Omega \| \leq C \hbar m^2\sqrt{(2m-1)!!}  \big(M^{-\frac{1}{2}} + M^{\frac{3}{2}} N^{-\frac{2}{3} + \delta}\big)\;. \label{eq:57}
\end{equation}
\end{lem}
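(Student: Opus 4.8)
The plan is to commute $\mathbb{H}_0 - \mathbb{D}_{\textnormal{B}}$ through the string of creation operators onto the vacuum. Since $\mathbb{H}_0\Omega = \mathbb{D}_{\textnormal{B}}\Omega = 0$ (the latter because $c_\alpha(k)\Omega = 0$), we have
\[
(\mathbb{H}_0 - \mathbb{D}_{\textnormal{B}})\, c^*(\varphi_1)\cdots c^*(\varphi_m)\Omega = \bigl[\mathbb{H}_0 - \mathbb{D}_{\textnormal{B}},\, c^*(\varphi_1)\cdots c^*(\varphi_m)\bigr]\Omega ,
\]
and expanding the commutator by the Leibniz rule produces a sum over $i = 1,\dots,m$ of terms $c^*(\varphi_1)\cdots c^*(\varphi_{i-1})\,[\mathbb{H}_0 - \mathbb{D}_{\textnormal{B}}, c^*(\varphi_i)]\, \Psi_{i+1}$ with $\Psi_{i+1} := c^*(\varphi_{i+1})\cdots c^*(\varphi_m)\Omega$. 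By linearity of $c^*(\varphi_i)$ in the $c^*_\alpha(k)$ and by \cref{lem:boskin}, the two linearized leading terms $2\hbar\kappaf\lvert k\cdot\hat\omega_\alpha\rvert c^*_\alpha(k)$ cancel between $\mathbb{H}_0$ and $\mathbb{D}_{\textnormal{B}}$ — this cancellation is precisely the reason $\mathbb{D}_{\textnormal{B}}$ was defined — leaving $[\mathbb{H}_0 - \mathbb{D}_{\textnormal{B}}, c^*_\alpha(k)] = \hbar\bigl(\mathfrak{E}^{\textnormal{lin}}_\alpha(k)^* - \mathfrak{E}^{\textnormal{B}}_\alpha(k)^*\bigr)$. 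So the quantity to bound is $\hbar$ times a sum over $i$ and $k\in\north$ of $c^*(\varphi_1)\cdots c^*(\varphi_{i-1})\bigl(\sum_{\alpha\in\Ik}(\varphi_i(k))_\alpha\,(\mathfrak{E}^{\textnormal{lin}}_\alpha(k)^* - \mathfrak{E}^{\textnormal{B}}_\alpha(k)^*)\bigr)\Psi_{i+1}$.

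Next I would split off the $\mathfrak{E}^{\textnormal{lin}}$ and $\mathfrak{E}^{\textnormal{B}}$ parts by the triangle inequality, so that every remaining vector has a definite particle number, and estimate each from the inside out. The vector $\Psi_{i+1}$ has $2(m-i)$ particles, and iterating $\|c^*(\varphi_j)\zeta\| \le \|(\Ncal+1)^{1/2}\zeta\|$ (which follows from \cref{lem:bosbd}, as used in the proof of \cref{prp:norm}) gives $\|\Psi_{i+1}\| \le \sqrt{(2(m-i)-1)!!}$. Applying $\sum_{\alpha}(\varphi_i(k))_\alpha\,\mathfrak{E}^{\textnormal{lin}}_\alpha(k)^*$, resp.\ $\sum_{\alpha}(\varphi_i(k))_\alpha\,\mathfrak{E}^{\textnormal{B}}_\alpha(k)^*$, to $\Psi_{i+1}$ is controlled by \cref{eq:sumEE}; since $\Psi_{i+1}$ has $2(m-i)$ particles this yields $C\|\varphi_i(k)\|_{\ell^2}\, M^{-1/2}(2(m-i)+1)^{1/2}\|\Psi_{i+1}\|$, resp.\ $C\|\varphi_i(k)\|_{\ell^2}\, M^{3/2}N^{-2/3+\delta}(2(m-i)+1)^{3/2}\|\Psi_{i+1}\|$. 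Since each error operator raises the particle number by two, the outer $i-1$ creation operators are absorbed exactly as in \cref{eq:bosdyn2}, contributing $\sqrt{(2m-1)!!/(2(m-i)+1)!!}$.

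Multiplying the three factors makes the double factorials telescope: $\sqrt{(2m-1)!!/(2(m-i)+1)!!}\,\cdot(2(m-i)+1)^{1/2}\cdot\sqrt{(2(m-i)-1)!!} = \sqrt{(2m-1)!!}$ for the $\mathfrak{E}^{\textnormal{lin}}$ contribution, and the same with one extra factor $(2(m-i)+1)\le 2m$ for the $\mathfrak{E}^{\textnormal{B}}$ contribution. Summing over $k$ with Cauchy--Schwarz gives $\sum_{k\in\north}\|\varphi_i(k)\|_{\ell^2} \le \lvert\north\rvert^{1/2}\|\varphi_i\| = \lvert\north\rvert^{1/2}$, a constant depending only on $V$; summing over $i=1,\dots,m$ supplies another factor $m$, which together with the $2m$ from the $\mathfrak{E}^{\textnormal{B}}$ term produces the $m^2$. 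Collecting everything gives $C\hbar\, m^2\sqrt{(2m-1)!!}\bigl(M^{-1/2} + M^{3/2}N^{-2/3+\delta}\bigr)$, which is \cref{eq:57}. There is no genuine obstacle: the argument only assembles \cref{lem:boskin} and \cref{lem:bosbd}, and the one mildly delicate point — tracking particle numbers through the chain of creation operators so that the $\sqrt{(2m-1)!!}$ comes out sharp rather than something like $(2m)^m$ — is handled exactly as in \cref{eq:bosdyn2} and \cref{eq:bosdyn3}.
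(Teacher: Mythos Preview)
Your proof is correct and follows essentially the same approach as the paper: commute $\mathbb{H}_0 - \mathbb{D}_{\textnormal{B}}$ through the product of $c^*$'s onto the vacuum via the Leibniz rule, use \cref{lem:boskin} to identify the commutator with the error operators $\hbar(\mathfrak{E}^{\textnormal{lin}}_\alpha(k)^* - \mathfrak{E}^{\textnormal{B}}_\alpha(k)^*)$, apply \cref{eq:sumEE}, and track particle numbers through the chain of creation operators exactly as in \cref{eq:bosdyn2}--\cref{eq:bosdyn3}. Your explicit bookkeeping of the telescoping double factorials and the Cauchy--Schwarz sum over $k\in\north$ are details the paper absorbs into the constant $C$, but the argument is the same.
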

\begin{proof} Since $(\mathbb{H}_{0} - \mathbb{D}_{\textnormal{B}}) \Omega = 0$ we can write
\begin{align}
&(\mathbb{H}_{0} - \mathbb{D}_{\textnormal{B}}) c^{*}(\varphi_{1}) \cdots c^{*}(\varphi_{m}) \Omega=  [(\mathbb{H}_{0} - \mathbb{D}_{\textnormal{B}}), c^{*}(\varphi_{1}) \cdots c^{*}(\varphi_{m})] \Omega \nonumber \\
&=\sum_{i=1}^{m} c^{*}(\varphi_{1}) \cdots c^{*}(\varphi_{i-1}) [ (\mathbb{H}_{0} - \mathbb{D}_{\textnormal{B}}), c^{*}(\varphi_{i}) ] c^{*}(\varphi_{i+1})\cdots c^{*}(\varphi_{m}) \Omega\;.  \label{eq:bigcommHD}
\end{align}
Recalling \cref{eq:linearization}, the commutator on the right--hand side of \eqref{eq:bigcommHD} is
\begin{align}\label{eq:comm4}
[ (\mathbb{H}_{0} - \mathbb{D}_{\textnormal{B}}), c^{*}(\varphi_{i}) ] &= \sum_{k\in \north} \sum_{\alpha \in \mathcal{I}_{k}} (\varphi_{i}(k))_\alpha [ (\mathbb{H}_{0} - \mathbb{D}_{\textnormal{B}}), c^{*}_{\alpha}(k) ] \nonumber\\
&= \hbar\sum_{k\in \north} \sum_{\alpha \in \mathcal{I}_{k}} (\varphi_{i}(k))_\alpha (\mathfrak{E}^{\textnormal{lin}}_{\alpha}(k)^{*} - \mathfrak{E}^{\textnormal{B}}_{\alpha}(k)^{*})\;.
\end{align}
Recall the bounds \eqref{eq:sumEE}; thus for all $\xi \in \fock$ we have
\[
\Big\| [ (\mathbb{H}_{0} - \mathbb{D}_{\textnormal{B}}), c^{*}(\varphi_{i}) ] \xi\Big\| \le C\hbar \left( M^{-\frac{1}{2}} \|(\Ncal+1)^{\frac{1}{2}}\xi\| +  M^{\frac{3}{2}} N^{-\frac{2}{3} + \delta}  \| (\Ncal + 1)^{\frac{3}{2}} \xi\| \right)\;.
\]
Recall \cref{eq:Nccomm}; just as $c^{*}(\varphi_j)$ creates two fermions, so does the operator $[ (\mathbb{H}_{0} - \mathbb{D}_{\textnormal{B}}), c^{*}(\varphi_{i}) ]$ which therefore also has the commutator $[\Ncal,[ (\mathbb{H}_{0} - \mathbb{D}_{\textnormal{B}}), c^{*}(\varphi_{i}) ] ] = 2 [ (\mathbb{H}_{0} - \mathbb{D}_{\textnormal{B}}), c^{*}(\varphi_{i}) ]$. Recalling also the simple bound $\|c^*(\varphi_j)\xi\| \le \| (\Ncal +1)^{\frac{1}{2}} \xi\|$ we obtain  
\begin{align}
& \| c^{*}(\varphi_{1}) \cdots c^{*}(\varphi_{i-1}) [ (\mathbb{H}_{0} - \mathbb{D}_{\textnormal{B}}), c^{*}(\varphi_{i}) ] c^{*}(\varphi_{i+1})\cdots c^{*}(\varphi_{m}) \Omega \| \nonumber\\
&\leq \| \prod_{j=1}^{i-1} (\Ncal + 1 + 2(i - 1 -j))^{1/2} [ (\mathbb{H}_{0} - \mathbb{D}_{\textnormal{B}}), c^{*}(\varphi_{i}) ] c^{*}(\varphi_{i+1})\cdots c^{*}(\varphi_{m}) \Omega\| \nonumber\\
& \leq C \Big(\prod_{j=1}^{i-1} (2(m - j) + 1)^{\frac{1}{2}}\Big) \| [ (\mathbb{H}_{0} - \mathbb{D}_{\textnormal{B}}), c^{*}(\varphi_{i}) ] c^{*}(\varphi_{i+1})\cdots c^{*}(\varphi_{m}) \Omega\| \nonumber \\
&\le C \hbar \Big(\prod_{j=1}^{i-1} (2(m - j) + 1)^{\frac{1}{2}}\Big) \Big( M^{-\frac{1}{2}} \| (\Ncal+1)^{\frac{1}{2}} c^{*}(\varphi_{i+1})\cdots c^{*}(\varphi_{m}) \Omega\| \label{eq:thirdlast}\\
&\hspace{13em} +M^{\frac{3}{2}} N^{-\frac{2}{3} + \delta }\| (\Ncal+1)^{\frac{3}{2}} c^{*}(\varphi_{i+1})\cdots c^{*}(\varphi_{m}) \Omega\| \Big) \label{eq:seclast}\\
&\le C \hbar m\sqrt{(2m-1)!!} (M^{-\frac{1}{2}} + M^{\frac{3}{2}} N^{-\frac{2}{3} + \delta })\;. \nonumber
\end{align}
Note that in \cref{eq:seclast} there is an additional factor $(\Ncal+1)$ compared to \cref{eq:thirdlast}, which was bounded by an additional $m$ on the last line.
A further factor of $m$ in \cref{eq:57} is due to the sum over $i$ in \cref{eq:bigcommHD}.
\end{proof}

\section{Diagonalization of Effective Hamiltonian} \label{sec:diag}

Recall the definition of $h_\textnormal{eff}(k)$ in \cref{eq:heff}, and consider the effective Hamiltonian
\begin{equation}
\mathbb{D}_{\textnormal{B}} + Q^{\mathcal{R}}_{\textnormal{B}} = \sum_{k\in \north} 2\hbar \kappaf \lvert k\rvert h_{\textnormal{eff}}(k) \;.
\end{equation}
\begin{lem}[Bogoliubov diagonalization] \label{lem:diag}
Let $\widetilde E_N^\textnormal{RPA}$ be given in \eqref{eq:GSE-RPA}, and $T$ the unitary transformation defined in \cref{eq:B}. For all $s\ge 0$ we have
\begin{align}
&\| \Big(T^*(\mathbb{D}_{\textnormal{B}} + Q^{\mathcal{R}}_{\textnormal{B}}  )T - \mathfrak{H}_{\textnormal{exc}} - \widetilde E_N^\textnormal{RPA} \Big)\xi_{s} \| \le C \left( M^{\frac{3}{2}}N^{-\frac{2}{3} + \delta} (m+1)^2 + \hbar N^{-\frac{2}{3} + \delta} m \right)\|\xi_s\|\;.   \nonumber
\end{align}
\end{lem}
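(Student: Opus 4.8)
The plan is to reduce the statement to the approximate diagonalization identity \eqref{eq:exc-ope-def}--\eqref{eq:curlyK} for exactly bosonic operators, and then control all errors coming from the approximate nature of the $c_\alpha(k)$ and of $T$. Concretely, I would first conjugate the operator $\Dbb_\B + Q_\B^\Rcal = \sum_{k\in\north} 2\hbar\kappaf\lvert k\rvert h_\textnormal{eff}(k)$ by $T=e^B$ using the Duhamel/Baker--Campbell--Hausdorff machinery already packaged in \cref{lem:Bog}: one writes $T^*(\Dbb_\B+Q_\B^\Rcal)T$ by expressing $h_\textnormal{eff}(k)$ in terms of $c_\alpha^*(k)c_\beta(k)$, $c_\alpha^*(k)c_\beta^*(k)$, $c_\alpha(k)c_\beta(k)$, and applying the transformation rule $T^*c_\gamma(l)T = \sum_\alpha \cosh(K(l))_{\alpha,\gamma}c_\alpha(l) + \sum_\alpha \sinh(K(l))_{\alpha,\gamma}c_\alpha^*(l) + \Efrak_\gamma(1,l)$. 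The leading (exactly bosonic) part reproduces, by the algebraic computation leading to \eqref{eq:curlyKexplicit}--\eqref{eq:curlyK}, precisely $\Hfrak_\textnormal{exc} + \widetilde E_N^\textnormal{RPA}$ (the trace term $\frac12\tr(E(k)-\D(k)-\W(k))$ summed over $k$ with the prefactor $2\hbar\kappaf\lvert k\rvert$ being exactly $\widetilde E_N^\textnormal{RPA}$ in \eqref{eq:GSE-RPA}). So the whole content is to bound, when applied to $\xi_s$, the difference between the true conjugation and this bosonic model.

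Next I would collect the error terms. There are three sources. (i) The commutators $[c_\alpha(k),c_\beta^*(l)] = \delta_{\alpha,\beta}(\delta_{k,l}+\Ecal_\alpha(k,l))$ are not exact; every time we normal-order or use the CCR in the algebraic reduction we pick up a factor $\Ecal_\alpha(k,l)$, controlled by \cref{lem:approxCCR}, \eqref{eq:ptwisebdE-11}. (ii) The transformation $T^*c_\gamma(l)T$ carries the remainder $\Efrak_\gamma(1,l)$, controlled by \cref{lem:Bog} with the bound $\sum_\gamma \|\Ncal^m \Efrak_\gamma^*(1,l)\zeta\| \le CM^{3/2}N^{-2/3+\delta}\|(\Ncal+1)^{m+3/2}\zeta\|$. (iii) The quartic interaction term $Q_\B^\Rcal$ itself, written in the $n_\alpha(k)$ weights, when conjugated generates cross terms that again feature $\Ecal_\alpha(k,l)$. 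In all three cases the key point is that $\xi_s = Z_m^{-1} c^*(\varphi_{1;s})\cdots c^*(\varphi_{m;s})\Omega$ is a state with at most $2m$ fermions, so $\|(\Ncal+1)^p \xi_s\| \le C_{m,p}\|\xi_s\|$ uniformly in $s$; moreover, pulling the error operators through the product of $c^*(\varphi_{j;s})$'s costs only the combinatorial factors $\prod_j(2(m-j)+1)^{1/2}$, exactly as in the proofs of \cref{prp:norm}, \cref{lem:norm-xi-s}, and \cref{prp:almbos}. Hence each error, after being applied to $\xi_s$, becomes a product of $M^{3/2}N^{-2/3+\delta}$ (or $\hbar N^{-2/3+\delta}$ for the pieces where the prefactor $2\hbar\kappaf|k|$ and one weight $n_\alpha(k)^2 \sim k_\F^2/M$ combine with an $N^{-1}$) times a polynomial in $m$ times $\|\xi_s\|$.

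I would then organize the bookkeeping so that the bosonic interaction term $Q_\B^\Rcal$ contributes the factor $(m+1)^2$ (two commutators of $\Hfrak_\textnormal{exc}$-type structure with the $m$-fold product, as in \eqref{eq:bosdyn2}--\eqref{eq:bosdyn3}), yielding the term $M^{3/2}N^{-2/3+\delta}(m+1)^2$, while the kinetic piece $\Dbb_\B$, whose conjugation error carries an extra $\hbar$ from the dispersion but only one factor of $m$ from a single commutator, yields $\hbar N^{-2/3+\delta} m$. Summing the two gives the claimed bound $C(M^{3/2}N^{-2/3+\delta}(m+1)^2 + \hbar N^{-2/3+\delta} m)\|\xi_s\|$. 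The main obstacle, and the part requiring the most care, is the algebraic identity: verifying that the exactly-bosonic part of $T^*(\Dbb_\B+Q_\B^\Rcal)T$ is genuinely $\Hfrak_\textnormal{exc}+\widetilde E_N^\textnormal{RPA}$ with \emph{no} leftover quadratic or constant pieces — this is where the specific choice of $K(k)$ in \eqref{eq:Kk} and the block-diagonalization of \cref{lem:K} enter, and one must track that the $n_\alpha(k)$-weighting of $Q_\B^\Rcal$ matches the matrices $\W(k),\Wt(k)$ in \eqref{eq:blocks1} exactly, so that the cancellation of the $c^*c^*+cc$ terms is complete. Once that identity is in place, everything else is a (somewhat lengthy but routine) application of the four lemmas of Section~\ref{sec:pre} together with the few-excitation bound on $\xi_s$.
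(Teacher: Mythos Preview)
Your plan is essentially the same as the paper's: write $\Dbb_\B+Q_\B^\Rcal=\sum_k 2\hbar\kappaf|k|\,h_\textnormal{eff}(k)$, apply \cref{lem:Bog} to each $c_\gamma(l)$, isolate the exactly-bosonic part $h_\textnormal{eff}^\textnormal{diag}(k)$ (which by the algebra of \eqref{eq:curlyKexplicit}--\eqref{eq:curlyK} yields $\widetilde E_N^\textnormal{RPA}+\Hfrak_\textnormal{exc}$ plus a normal-ordering remainder), and bound all error terms using $\Ncal\xi_s=2m\,\xi_s$.

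One point of bookkeeping is off, though, and would cause confusion if you tried to execute the argument as written. The two error scales do \emph{not} correspond to the split ``$Q_\B^\Rcal$ versus $\Dbb_\B$''. In the paper the $M^{3/2}N^{-2/3+\delta}(m+1)^2$ term comes from the $\Efrak_\gamma(1,k)$-remainders in \cref{lem:Bog} applied to \emph{all} of $h_\textnormal{eff}(k)$; the extra $M^{1/2}$ (relative to the $\Wt$-terms) appears because the diagonal matrix $D(k)$ has $\sum_\alpha|D(k)_{\alpha,\beta}|^2\le C$ but not $\le C/M$, so the kinetic block actually dominates this error. The second term $\hbar N^{-2/3+\delta}m$ is not a ``conjugation error of $\Dbb_\B$'' at all: it is the normal-ordering error $\Efrak_\textnormal{exc}$ produced when you rewrite $h_\textnormal{eff}^\textnormal{diag}(k)$ (expressed in $\tilde c,\tilde c^*$) back in the original $c,c^*$, using $[c_\alpha(k),c_\beta^*(k)]=\delta_{\alpha,\beta}(1+\Ecal_\alpha(k,k))$; the prefactor $2\hbar\kappaf|k|$ survives because only a single $\Ecal_\alpha(k,k)$ appears (no sum over patches blowing up), and the single power of $m$ comes from $\Ncal\xi_s=2m\,\xi_s$ rather than from a commutator expansion. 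If you reorganize your error accounting along these two lines---Bogoliubov-remainder versus normal-ordering---the rest of your plan goes through verbatim.
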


\begin{proof}
As already shown in \cite[Eq.~(10.5)]{BNPSS2}, by Lemma \ref{lem:Bog} we get
\begin{align}\label{eq:ThT}
T^{*} h_{\textnormal{eff}}(k) T &= \sum_{\alpha, \beta \in \mathcal{I}_{k}} \big( D(k) + W(k) \big)_{\alpha, \beta} (\tilde c^{*}_{\alpha}(k) + \mathfrak{E}^{*}_{\alpha}(1,k)) (\tilde c_{\beta}(k) + \mathfrak{E}_{\beta}(1,k))\\
&\quad  + \frac{1}{2} \sum_{\alpha, \beta\in \mathcal{I}_{k}} \Big[ \widetilde{W}(k)_{\alpha, \beta} ( \tilde c^{*}_{\alpha}(k) + \mathfrak{E}^{*}_{\alpha}(1,k) ) (\tilde c^{*}_{\beta}(k) + \mathfrak{E}^{*}_{\beta}(1,k) ) + \hc \Big]\;,\nonumber
\end{align}
where
\begin{equation}
\tilde c_{\alpha}(k) = \sum_{\beta \in \mathcal{I}_{k}} \cosh( K(k) )_{\alpha, \beta} c_{\beta}(k) + \sum_{\beta\in \mathcal{I}_{k}} \sinh ( K(k) )_{\alpha, \beta} c^{*}_{\beta}(k)\;.
\end{equation}
The  transformed operators satisfy, for $f \in \ell^2(\Ical_k)$,
\begin{equation}
\| \tilde c(f) \zeta\| \leq C\|f\|_{\ell^{2}} \| (\Ncal + 1)^{\frac{1}{2}} \zeta \|\;,\qquad \| \tilde c(f)^{*} \zeta\| \leq C\|f\|_{\ell^{2}} \| (\Ncal + 1)^{\frac{1}{2}} \zeta \|\;.
\end{equation}
This is easily seen using the formula
\begin{align}
\tilde c(f) = \sum_{\alpha\in \mathcal{I}_{k}} \overline{f_{\alpha}} \tilde c_{\alpha}(k) = c\big(\cosh(K(k)) f\big) + c^{*}\big(\sinh(K(k))\overline{f}\big)
\end{align}
so that
\begin{align}\label{eq:tildecbd}
\| \tilde c(f) \psi \| &\leq \|\cosh(K(k)) f\|_{2} \| \Ncal^{\frac{1}{2}}\psi \| + \|\sinh(K(k)) f \|_{2} \| (\Ncal + 1)^{\frac{1}{2}}\psi \|\nonumber\\
&\leq C\|f\|_{2} \| (\Ncal + 1)^{\frac{1}{2}}\psi \|\;.
\end{align}
 Here we used that according to \cref{lem:K} we have 
\begin{equation} 
 \begin{split}
 \| \cosh(K(k)) \|_{\textnormal{op}} & \leq \norm{\id}_\textnormal{op} + \norm{\cosh(K(k)) - \id}_\textnormal{HS}  \leq C\;, \\
\|\sinh(K(k))\|_{\textnormal{op}} & \leq \|\sinh(K(k))\|_{\textnormal{HS}} \leq C\;. 
 \end{split}
\end{equation}
The same bound as \cref{eq:tildecbd} holds for $\tilde c^{*}(f)$.

\paragraph{Controlling the error terms of the diagonalization.} We proceed to estimate 
\[
\| (T^{*} h_{\textnormal{eff}}(k) T - h^{\textnormal{diag}}_{\textnormal{eff}}(k)) \xi_{s} \| 
\]
where 
\begin{equation}
h^{\textnormal{diag}}_{\textnormal{eff}}(k) = \sum_{\alpha, \beta \in \mathcal{I}_{k}} \big( D(k) + W(k) \big)_{\alpha, \beta} \tilde c^{*}_{\alpha}(k) \tilde c_{\beta}(k)  + \frac{1}{2} \widetilde{W}(k)_{\alpha, \beta} \big( \tilde c^{*}_{\alpha}(k)\tilde c^{*}_{\beta}(k) + \hc \big) \Big]\;. 
\end{equation}
Every summand of the difference $T^{*} h_{\textnormal{eff}}(k) T - h^{\textnormal{diag}}_{\textnormal{eff}}(k)$ contains at least one factor of the error term $\mathfrak{E}^{\natural}(1,k)$. To estimate it, recall the bounds \cref{eq:DWWtbound}. Then, using \eqref{eq:tildecbd} and \cref{prp:NkE}, we get, for instance,
\begin{align}\label{eq:NE}
\Big\| \sum_{\alpha, \beta \in \mathcal{I}_{k}} \widetilde{W}(k)_{\alpha, \beta} \tilde c^{*}_{\alpha}(k) \mathfrak{E}^{*}_{\beta}(1,k) \xi_{s} \Big\| & \leq CM^{-\frac{1}{2}} \sum_{\beta \in \Ik} \| (\Ncal + 1)^{1/2} \mathfrak{E}^{*}_{\beta}(1,k) \xi_{s} \| \nonumber \\
&\leq CM^{\frac{1}{2}}N^{-\frac{2}{3} + \delta} \| (\Ncal + M)^{\frac{1}{2}}(\Ncal + 1)^{\frac{3}{2}} \xi_{s} \| \nonumber\\
& \leq CMN^{-\frac{2}{3} + \delta} (2m+1)^{2} \norm{\xi_s}\;.
\end{align}
The other terms contributing to $T^{*} h_{\textnormal{eff}}(k) T - h^{\textnormal{diag}}_{\textnormal{eff}}(k)$ involving only $\widetilde{W}(k)$ are bounded in the same way. Next, consider the terms involving $\widetilde{D}(k) := D(k) + W(k)$. For instance
\begin{align}
\Big\| \sum_{\alpha, \beta \in \mathcal{I}_{k}} \widetilde{D}(k) _{\alpha,\beta}\tilde c^{*}_{\alpha}(k) \mathfrak{E}_{\beta}(1,k) \xi_{s}\Big\|
&\leq \sum_{\beta \in \mathcal{I}_{k}} C\| (\Ncal + 1)^{\frac{1}{2}} \mathfrak{E}_{\beta}(1,k) \xi_{s} \|\nonumber\\
&\leq CM N^{-\frac{2}{3} + \delta} \| (\Ncal + M)^{\frac{1}{2}} (\Ncal + 1)^{\frac{3}{2}} \xi_{s} \| \nonumber\\
&\leq CM^{\frac{3}{2}}N^{-\frac{2}{3} + \delta} (m+1)^{2} \norm{\xi_s}\;.
\end{align}
In the first step we used that $\sum_{\alpha \in \Ik} \lvert \widetilde{D}(k)_{\alpha,\beta} \rvert^2 \leq C$ uniform in $\beta$.  In the second step we used \cref{prp:NkE}. All the other terms contributing to $T^{*} h_{\textnormal{eff}}(k) T - h^{\textnormal{diag}}_{\textnormal{eff}}(k)$ involving $\widetilde{D}(k)$ are estimated in the same way. In conclusion
\begin{equation} \label{eq:diag-1}
\| \Big(T^{*} h_{\textnormal{eff}}(k) T - h^{\textnormal{diag}}_{\textnormal{eff}}(k)\Big) \xi_{s} \| \leq CM^{\frac{3}{2}}N^{-\frac{2}{3} + \delta} (m+1)^{2} \|\xi_s\|\;.
\end{equation}
\paragraph{Error term due to normal-ordering.} As computed in  \cite[Section 10]{BNPSS2} 
\begin{equation}
\sum_{k\in \north} 2\hbar \kappaf \lvert k\rvert h^{\textnormal{diag}}_{\textnormal{eff}}(k) = \widetilde E_N^\textnormal{RPA}+ \mathfrak{H}_{\textnormal{exc}} + \mathfrak{E}_{\textnormal{exc}}\;,
\end{equation}
where $\widetilde E_N^\textnormal{RPA}$ is the bosonic ground state energy given in \eqref{eq:GSE-RPA} and 
\begin{align*}
\mathfrak{H}_{\textnormal{exc}} &=  \sum_{k\in \north} 2\hbar \kappaf \lvert k\rvert \sum_{\alpha, \beta \in \mathcal{I}_{k}}  \mathfrak{K}(k)_{\alpha,\beta} c^{*}_{\alpha}(k) c_{\beta}(k)\nonumber\\
\mathfrak{E}_{\textnormal{exc}} &= \sum_{k\in \north} \hbar \kappaf \lvert k\rvert \sum_{\alpha \in \mathcal{I}_{k}} \Big[ 2\sinh (K) (D + W) \sinh (K) \nonumber\\
& \hspace{8.5em} + \cosh (K) \widetilde{W} \sinh (K) + \sinh (K) \widetilde{W} \cosh (K)\Big]_{\alpha, \alpha} \mathcal{E}_{\alpha}(k,k)\;.
\end{align*}
Here the excitation matrix $\Kfrak(k)$ is recovered in the form \cref{eq:curlyKexplicit}.

The error operator $\Ecal_\alpha(k,k)$ is given in \cite[Eq. (5.5)]{BNPSS2}; it commutes with $\Ncal$. From \cite[Eq.~(10.10)]{BNPSS2} we get
$
\pm \mathfrak{E}_{\textnormal{exc}} \le C \hbar N^{-\frac{2}{3} + \delta}  \Ncal$. Since $ \mathfrak{E}_{\textnormal{exc}}$ commutes with $\Ncal$, we get
\begin{align} \label{eq:diag-3}
\| \mathfrak{E}_{\textnormal{exc}} \xi_{s} \| \leq  C \hbar  N^{-\frac{2}{3} + \delta} \| \Ncal \xi_{s} \| \leq C \hbar N^{-\frac{2}{3} + \delta} m \|\xi_s\|\;.
\end{align}
In summary, from \eqref{eq:diag-1} and \eqref{eq:diag-3} we obtain the desired estimate. 
\end{proof}

\section{Proof of Theorem \ref{thm:1}}\label{sec:final}
\begin{proof}[Proof of Theorem \ref{thm:1}.]
It suffices to consider the case $t\ge 0$. Under the assumptions $M\ll N^{\frac{2}{3} -2\delta}$ and $m^3(2m-1)!! \ll N^\delta$ (see Remark (ii) after \cref{thm:1}) we find $M N^{-\frac{2}{3} + \delta} m^3(2m-1)!! \ll 1$.
Therefore, by Lemma \ref{prp:norm}
\[
Z_m \ge \frac{1}{2} \quad \textnormal{and}\quad \lvert \norm{ \xi_s} - 1 \rvert \ll 1\;. 
\]
Moreover, by Lemma \ref{lem:stabN} we have 
\begin{equation} \label{eq:N-xi-s}
\|(\Ncal+1)^r T \xi_s\| \le C_r \|(\Ncal+1)^r  \xi_s\|  \le C_r (2m+1)^r, \quad \forall r\in\Nbb\;.
\end{equation}
We proceed to collect all error estimates. By Lemma \ref{eq:norm-red-0},  for all $t\ge 0$ 
\begin{align} \label{eq:norm-red-0-app}
&\norm{ e^{-i\mathcal{H}_{N} t/\hbar} R T \xi - e^{-i (E^{\textnormal{pw}}_{N} + \widetilde E_N^\textnormal{RPA})t/\hbar} R T \xi_{t} }  \nonumber \\
&\le  \frac{1}{\hbar} \int_{0}^{t}\di s\, \big\| (T^*\mathcal{H}_\textnormal{corr} T - \widetilde E_N^\textnormal{RPA} - \mathfrak{H}_{\textnormal{exc}})  \xi_{s}  \big\| + C m^2\sqrt{(2m-1)!!} M^{\frac{3}{2}} N^{-\frac{2}{3} + \delta} t\;. 
\end{align}
To estimate the right--hand side of \eqref{eq:norm-red-0-app}, we use the triangle inequality 
\begin{align} \label{eq:norm-last-app}
\big\| (T^*\mathcal{H}_\textnormal{corr} T - \widetilde E_N^\textnormal{RPA} - \mathfrak{H}_{\textnormal{exc}})  \xi_{s}  \big\| & \le \big\| T^*(\mathcal{H}_\textnormal{corr} - \Hbb_0 - Q_\textnormal{B}^{\Rcal} ) T \xi_s  \big\| + \big\| T^*(\Hbb_0 - \mathbb{D}_\textnormal{B} ) T \xi_s  \big\| \nonumber\\
& \qquad+ \big\| (T^*(\mathbb{D}_\textnormal{B} +Q_\textnormal{B}^{\Rcal})T - \widetilde E_N^\textnormal{RPA} - \mathfrak{H}_{\textnormal{exc}}  ) \xi_s  \big\|\;.
\end{align}
By Lemma \ref{prp:nonbos} and \eqref{eq:N-xi-s},
\begin{align}\label{eq:nonbos-app}
& \big\| T^* (\mathcal{H}_{\textnormal{corr}} - \Hbb_0 - Q_{\textnormal{B}}^{\mathcal{R}}) T \xi_s \big\| =  \big\| (\mathcal{H}_{\textnormal{corr}} - \Hbb_0 - Q_{\textnormal{B}}^{\mathcal{R}}) T \xi_s \big\| \nonumber\\
 &\leq C \|\hat V\|_{\ell^1} \hbar \Big(N^{-\frac{2}{3}} \|\Ncal^{2} T\xi_s \| + N^{-\frac{1}{3}} \| \Ncal^{\frac{3}{2}} T\xi_s \| + (N^{-\frac{\delta}{2}} +N^{-\frac{1}{6}} M^{\frac{1}{4}} )   \| (\Ncal + 1) T\xi_s \| \Big) \nonumber\\
 &\le  C\hbar ( N^{-\frac{\delta}{2}} + N^{-\frac{1}{6}} M^{\frac{1}{4}}) (m+1)^2\;. 
\end{align} 
By Lemma \ref{prp:kin2} we have
\begin{align*}
& \Big| \big\| T^* (\mathbb{H}_{0} - \mathbb{D}_{\textnormal{B}}) T \xi_s\big\| - \big\| (\mathbb{H}_{0} - \mathbb{D}_{\textnormal{B}}) \xi_s \big\| \Big| \\
&\le C \hbar \Big( M^{-\frac{1}{2}} \| (\Ncal + 1) \xi_s \|+  M N^{-\frac{2}{3} + \delta} \|(\Ncal + 1)^{2} \xi_s \| \Big) \\
&\le C \hbar (m+1)^2 \Big( M^{-\frac{1}{2}}  +  M N^{-\frac{2}{3} + \delta}  \Big)
\end{align*}
and by Lemma \ref{prp:almbos}
\begin{align*}
 \big\| (\Hbb_0 - \mathbb{D}_\textnormal{B} ) \xi_s  \big\| \le C \hbar m^2\sqrt{(2m-1)!!}  (M^{-\frac{1}{2}} + M^{\frac{3}{2}} N^{-\frac{2}{3} + \delta})\;. 
\end{align*}
Thus 
\begin{align}\label{eq:HD-app}
\big\| T^*(\Hbb_0 - \mathbb{D}_\textnormal{B} ) T \xi_s  \big\| \le C \hbar (m+1)^2\sqrt{(2m-1)!!}  (M^{-\frac{1}{2}} + M^{\frac{3}{2}} N^{-\frac{2}{3} + \delta})\;.
\end{align}
Finally, by Lemma \ref{lem:diag}
\begin{align} \label{eq:diag-app}
&\| (T^*(\mathbb{D}_{\textnormal{B}} + Q^{\mathcal{R}}_{\textnormal{B}}  )T - \mathfrak{H}_{\textnormal{exc}} - \widetilde E^\textnormal{RPA}_N )\xi_{s} \| \le CM^{\frac{3}{2}}N^{-\frac{2}{3} + \delta} (m+1)^2  + C\hbar N^{ -\frac{2}{3} + \delta} m\;. 
\end{align}
Inserting \eqref{eq:nonbos-app}, \eqref{eq:HD-app} and \eqref{eq:diag-app} in \eqref{eq:norm-last-app} we conclude that 
\begin{align} \label{eq:norm-last-app-app}
&\big\| (T^*\mathcal{H}_\textnormal{corr} T - \widetilde E_N^\textnormal{RPA} - \mathfrak{H}_{\textnormal{exc}})  \xi_{s}  \big\| \nonumber \\
&\le C \hbar (m+1)^2\sqrt{(2m-1)!!}  (N^{-\frac{\delta}{2}}  + M^{-\frac{1}{2}} + M^{\frac{3}{2}}N^{-\frac{1}{3} + \delta} + N^{-\frac{1}{6}} M^{\frac{1}{4}})\;.
\end{align} 
From \eqref{eq:norm-last-app-app} and \eqref{eq:norm-red-0-app} we obtain
\begin{align} 
&\norm{ e^{-i\mathcal{H}_{N} t/\hbar} R T \xi - e^{-i (E^{\textnormal{pw}}_{N} + \widetilde E_N^\textnormal{RPA})t/\hbar} R T \xi_{t} }  \nonumber \\
&\le  C  (m+1)^2\sqrt{(2m-1)!!}   \Big( N^{-\frac{\delta}{2}}  + M^{-\frac{1}{2}} + M^{\frac{3}{2}}N^{-\frac{1}{3} + \delta} + M^{\frac{1}{4}} N^{-\frac{1}{6}}  \Big)\, \lvert t\rvert\;. \label{eq:endofproof}
\end{align}
To replace $\widetilde{E}_{N}^\textnormal{RPA}$ by ${E}_{N}^\textnormal{RPA}$, we have to take into account the additional error term \cref{eq:RPAdiffs}. Consequently, in addition to \cref{eq:endofproof} there will also be an error term of order
\begin{equation}
\label{eq:lasterr}
(m+1)^2\sqrt{(2m-1)!!} \left(M^{-\frac{1}{4}} N^{\frac{\delta}{2}} + M^{\frac{1}{4}}N^{-\frac{1}{6} + \frac{\delta}{2}} \right)\;.
\end{equation}
Choosing $M=N^{4\delta}$ and $\delta=2/45$ completes the proof of Theorem \ref{thm:1}.
\end{proof}

\section*{Acknowledgements}
  NB was supported by Gruppo Nazionale per la Fisica Matematica (GNFM). RS was supported by the European Research Council (ERC) under the European Union’s Horizon 2020 research and innovation programme (grant agreement No.~694227). PTN was supported by the Deutsche Forschungsgemeinschaft (DFG, German Research Foundation) under Germany's Excellence Strategy (EXC-2111-390814868). MP was supported by the European Research Council (ERC) under the European Union’s Horizon 2020 research and innovation programme (ERC StG MaMBoQ, grant agreement No.~802901). BS was supported by the NCCR SwissMAP, the Swiss National Science Foundation through the Grant ``Dynamical and energetic properties of Bose-Einstein condensates'', and the European Research Council (ERC) under the European Union’s Horizon 2020 research and innovation programme through the ERC-AdG CLaQS (grant agreement No.~834782).

\end{document}